\newtheorem{definition}{Definition}
\newtheorem{theorem}{Theorem}
\newtheorem{lemma}{Lemma}
\newif\ifcomments
\newcommand{\newcomment}[3]{{\color{#1}{[[\textbf{#2:} \emph{#3}]]}}}
\newcommand{\newcomment}[3]{}
\newcommand{\citeyear}[1]{\cite{#1}}
\title{A Symbolic Logic with Concrete Bounds for Cryptographic Protocols }
\author[A]{Anupam Datta}
\author[B]{Joseph Y. Halpern}
\author[C]{John C. Mitchell}
\author[D]{\\Arnab Roy}
\author[A]{Shayak Sen}
\affil[A]{Computer Science Department, Carnegie Mellon University}
\affil[B]{Computer Science Department, Cornell University}
\affil[C]{Department of Computer Science, Stanford University}
\affil[D]{Fujitsu Laboratories of America}
\date{}
\begin{document}
\maketitle

\begin{abstract}
 We present a formal logic for
quantitative
reasoning about security properties of network
protocols.  The system allows us to derive concrete security bounds that can
be used to choose
key lengths and other security parameters.
We provide axioms
for reasoning about digital signatures and random nonces,
with security properties based on the concrete security of
signature schemes and
pseudorandom number generators (PRG).
The formal logic supports first-order reasoning and reasoning about
protocol invariants,
taking concrete security bounds into account.
Proofs constructed in our logic also provide conventional asymptotic security
guarantees because of the way that concrete bounds accumulate in proofs.
As an illustrative example, we use the formal logic to
prove an authentication property with concrete bounds of a signature-based
challenge-response protocol.
\end{abstract}

\section{Introduction}

Large and complex cryptographic protocols form the backbone of internet
security today.   At the scale of the internet, protocols use keys to
communicate with millions of agents over extended periods of time, with
increasingly powerful adversaries attempting to break them. In this scenario,
evaluating security in a quantitative sense with respect to key lengths and
adversary runtimes is critical.  Concrete security is a practice-oriented
approach to cryptography that requires that proofs of a security property be
accompanied by a concrete probability with which the property is false.

We present Quantitative Protocol Composition Logic (QPCL), a formal system that
supports concrete security proofs of probabilistic safety properties of
cryptographic protocols.  A typical assertion in QPCL is a concrete security
statement of the form $\B^\epsilon(\varphi)$, which we call a \emph{probabilistic belief}. Here $\varphi$ is a temporal
safety property of the execution of a protocol in the presence of probabilistic
polynomial time adversaries, and $\epsilon$ is a probability that is a function
of $\eta$, the security parameter governing the length of keys and nonces, and
$t$, the concrete runtime of the adversary.
This assertion is true if
$\varphi$ holds with probability at least $1 - \epsilon(\eta, t)$. Typically,
$\epsilon$ also depends on functions that represent the probability of an
adversary violating the security of the specific cryptographic primitives used
in a protocol.

QPCL's proof system provides a separation among axioms and rules for
cryptographic primitives, probabilistic beliefs, and programs.  QPCL axioms
about cryptographic primitives such as signature schemes and
pseudorandom-number generators introduce a probabilistic error bound.  These
axioms have formal soundness proofs that show that whenever an axiom is
false on a
trace, an attack on the corresponding primitive can be constructed.  Therefore,
these probabilistic error bounds are closely related to the security of the
underlying primitive.  These soundness proofs are a central contribution of
this paper.  Axioms about cryptographic primitives introduce concrete  bounds
that are combined in QPCL using proof rules for reasoning about
probabilistic beliefs, adapting an approach introduced by Halpern
\cite{Hal37}, which in turn
is based on the \emph{$\epsilon$-semantics} of Goldszmidt, Morris, and Pearl
\citeyear{GMPfull}. Traces generated by protocol programs serve as an instance
of the more general semantic models of Halpern, allowing us to adapt the sound
and complete proof system from~\cite{Hal37} into our logic. Finally, in the
style of Protocol Composition Logic, QPCL's predecessor for symbolic
\cite{DDMST05} and asymptotic reasoning \cite{DDMW06}, QPCL includes a
Hoare-style program modality, and an operator for
temporal ordering of actions that allows reasoning about non-monotonic safety
properties of program traces.

Reasoning about concrete bounds in QPCL necessitates some important differences
from its symbolic and asymptotic predecessors.  Firstly, to achieve precise
concrete bounds, the placement of quantifiers must be carefully considered. For
example, $\forall x.\B^\epsilon(\varphi)$ expresses a probabilistic bound on
$\varphi$ that applies to all choices of $x$, whereas
$\B^\epsilon(\forall x.\varphi)$ bounds the probability of $\forall
x.\varphi$; when $x$ ranges over a large domain, a more precise bound can be derived for the first formula.
Secondly, we restrict the proof system so that only safety properties
are provable in QPCL.
This restriction allows us to follow a useful proof strategy, where we
first reason about the probability of a post-condition of a program
being false, 
and then transfer a bound on that probability to the end of the execution of the protocol. For safety properties the
probability of a formula being false at some point in an execution is bounded by the
probability of a formula being false at the end of that execution.
Section \ref{sec:soundness}
 discusses the importance of the restriction to safety properties.

QPCL assertions are interpreted over traces generated by a simple
probabilistic, concurrent programming language executing in the presence of an
adversary. The semantics of the programming language is independent of any
particular cryptographic primitive, due to a separation between
actions and their implementations. Our formal treatment of this
separation is simpler than
a full module system such as the one used in ~\cite{FournetF711}, and
yet exposes the features (runtimes and trace 
probabilities) needed to define the semantics of probabilistic
trace properties. This separation also allows
the adversary to be represented by any program that runs in polynomial
time, therefore allowing the soundness proofs of our logic to be valid in a
computational model.

We illustrate the use of the QPCL reasoning principles by proving a
``matching-conversations'' style authentication property \cite{DVW1992} for a
simple initiator-responder protocol, which states that after the protocol is
completed, it is known that the precise order in which messages are sent and
received by the participants of the protocol was not altered by an adversary.
Section~\ref{sec:example} contains an outline of the QPCL proof.

Prior work on formal proof techniques for concrete cryptography
\cite{Blanchet06, Barthe09certicrypt,EasyCrypt09} has largely focused on
relational techniques where security is proved via game-based reductions,
similar to traditional cryptographic proofs. Proofs of a number of
cryptographic schemes have been mechanized using these techniques.  However, we
conjecture that reasoning about the security of cryptographic protocols
involves relatively simpler reasoning about trace properties, and that
relational reasoning, while important, does not need to be exposed to the user
of a proof system. In prior work~\cite{RDM07}, indistinguishability
properties such as
secrecy have been proven via stronger trace
properties such as key secrecy. Therefore, relational reasoning in QPCL is pushed down to
the level of the soundness proofs of axioms, which involve a
one-time cost to the designers of the proof system; the users of QPCL
need to reason only about trace properties of protocols.

The rest of this paper is organized as follows. Section~\ref{sec:primer} is a
brief introduction to concrete security. Section~\ref{sec:features} is an
overview of the programming language and the proof system. The programming
language for protocols and its operational semantics is described in
Section~\ref{sec:lang}.  The syntax, semantics, and proof system for QPCL are
described in Section~\ref{sec:qpcl}. Section~\ref{sec:example} has an outline
of the QPCL proof of the initiator-responder protocol considered in the paper.
Section~\ref{sec:related} contains a comparison of QPCL with closely related
work.

\newcommand{\ufcma}{{\mathsf{uf\mbox{-}cma}}}
\newcommand{\prg}{{\mathsf{prg}}}

\section{Concrete Security Primer}
\label{sec:primer}

Concrete security proofs provide bounds for the probability with which an
attacker can subvert some security notion for a primitive. These bounds usually
are a function of
\begin{enumerate}
\item the security parameter $\eta$, which determines the length of keys and
random numbers;\footnote{For ease of presentation, in this paper, we assume
that every cryptographic primitive is governed by the same security parameter $\eta$.}
\item the runtime bound of the adversary, which is often expressed as a
polynomial $tb(\eta)$ of the security parameter; and
\item the number of attempts available to the adversary to break the primitive.
As we describe below, the exact definition of the third parameter is dependent
on the definition of security for that primitive.
\end{enumerate}

In this paper, we study the concrete security of protocols that involve digital
signatures and pseudorandom-number generators; we briefly describe them
here.

A \emph{digital signature scheme} is designed to ensure message unforgeability in a
public key setting. An agent is assigned a private signing key which has an
associated public verification key. A signature scheme $\Pi = (\mathit{Gen}, \mathit{Sign}, \mathit{Verify})$ is specified by a key generation algorithm, $\mathit{Gen}$, a signing
algorithm, $\mathit{Sign}$, and a verification algorithm $\mathit{Verify}$.
$\mathit{Gen}$ generates pairs of public and private keys, $\mathit{Sign}$ produces signatures over messages, and $\mathit{Verify}$ checks that a signature was generated using the signing key. A
characterization of security for a signature scheme $\Pi$ is unforgeability under
chosen message attacks~\cite{GMR88}, shortened to
$\mathsf{uf\mbox{-}cma}$,
which requires that the probability of the adversary creating a new
message/signature pair, given that he can see $q$ signatures produced by $\Pi$
over messages of his choice, is bound by some negligible function
$\epsilon^\ufcma_{\Pi}(\eta, tb, q)$.

A \emph{pseudorandom-number generator} $P$ is an algorithm that produces numbers
that are distinguishable from numbers drawn from a uniformly random
distribution by a computationally bounded adversary with only negligible
probability. This probability is called the advantage of the adversary against
a scheme $P$, and we denote it by $\epsilon^{\prg}_{P}(\eta, tb, q)$, where
$q$ is the number of samples that the adversary sees before deciding
which distribution the samples are drawn from. Psuedorandom number generators
are often used in protocols to generate secrets that are hard to guess by the
adversary.

\section{Overview of the Formal System}
\label{sec:features}

QPCL is a probabilistic program logic with concrete security bounds for
temporal trace properties of cryptographic protocols. A concrete security bound
on an assertion in our logic expresses the probability with which a
computationally constrained adversary can violate the assertion in terms of the
length of keys, the runtime of the adversary and the number of concurrent
sessions of the protocol.

\subsection{Programming Model}

The models on which the logic is interpreted are traces generated by a
simple protocol programming language executing in the presence of an adversary.
The requirements of the logic neccessitate some features of the operational
semantics of the protocol programming language, as the operational semantics
defines how traces are generated from protocol programs interacting in the
presence of an adversary. We now motivate some of these features.

\begin{itemize}

\item \emph{Concrete Runtime.} Reasoning about concrete runtime requires
accounting for the runtime of every computation in an execution. This runtime
includes that of the adversary as well as protocol programs. The operational
semantics is defined by a transition system where transitions are labeled by
these runtime costs.

\item \emph{Concrete Probabilities.} Computing probabilistic bounds requires considering
  probabilities of sets of traces. The transition system is probabilistic and
  the probabilistic branches induce an execution tree, where each path in the
  tree is an execution trace. We use the tree to compute the relevant probabilities.

\item \emph{Computational completeness.} For proofs to be valid in a
computational model, we cannot restrict the adversary to a symbolic model.
However, modeling the adversary explicitly in a Turing-complete language makes
the language very complicated. For proofs of security, which are generally
proofs via reduction to known hard problems, it is possible to ignore the
representation of the adversary. Therefore, we do not explicitly restrict the
adversary's program to a particular language, but reason only about its
input-output behavior.

\item \emph{Concurrency and Adversarial Scheduling.} Programs execute
  concurrently; where the scheduler that switches between programs is controlled
  by the adversary.  The cost
  of the computation required by the adversary to make scheduling decision is
  accounted for in the runtime of a trace.

\end{itemize}

\subsection{Logic and Proof System}
QPCL allows declarative specifications of temporal trace properties, and the
proof system supports reasoning in a manner very similar to first-order
reasoning.  Essentially, axioms related to cryptography have error bounds
associated with them, and these error bounds are propagated throughout the
proof to derive a bound on the final assertion.
To aid reasoning about
cryptographic protocols, QPCL supports the following features:

\begin{itemize}
\item \emph{Temporal Trace Properties.}
Protocols, such as those involving authentication, often require properties
that specify the precise order in which certain events occur. To allow
reasoning about properties such as message ordering, QPCL supports temporal
operators.  Traditionally, such
properties are specified using games~\cite{BR93}, which imply that the security
proofs typically involve the harder task of proving equivalence of games.

\item \emph{Concrete Security Bounds.} Concrete security bounds on assertions
  in QPCL depend on the protocol being reasoned about in a number of ways.
  First, protocols specify the implementations of cryptographic primitives
  to be used; these define the concrete bounds on the axioms related to
  cryptography. Second, concrete bounds on the cryptographic primitives
  usually also depend on the number of queries made to the primitive, and
  to bound this number on any particular trace, we use the number of times
  the cryptographic primitive is called in a protocol.

\item \emph{Safety Properties.} Finally, since concrete security involves
  computing the probability of a property being violated as a function of the
  time available to the adversary, the total runtime of a trace is bounded;
traces end when the total time elapsed is greater than the specified time bound.
Thus, we cannot talk about liveness properties (i.e., things that will
eventually happen) in QPCL.
\end{itemize}

\section{Protocol Programming Language}
\label{sec:lang}

We now describe the syntax and semantics of the protocol programming language
QPCL.

\subsection{Protocol Syntax} \label{section:syntax}

\newcommand{\varbind}{\leftarrow}

\begin{table*}
\[
\renewcommand{\arraystretch}{1.2}
\begin{array}{lrcll}
\mbox{(bitstrings)}& b & ::=   & \agent{A} & \mbox{principal name} \\
                    &   &     & n & \mbox{nonce} \\
                    &	&     & c & \mbox{string constant} \\
                    &   &     & s& \mbox{signature} \\
                    &   &     & \iota & \mbox{session identifier} \\
\mbox{(thread identifier)} & I & ::= & \langle \agent{A},\iota\rangle & \mbox{thread $\iota$ of agent $\agent{A}$} \\
\mbox{(terms)}      & t & ::= & x & \mbox{variable} \\
                    &   &     & b & \mbox{bitstring} \\
                    &	&     & vk(\hat{A}) & \mbox{verification key of agent $\hat{A}$} \\
                    &	&     & I & \mbox{thread-identifier} \\
                    &   &     & \langle t, \ldots, t\rangle & \mbox{tuple} \\
                    &   &     &   \pi_i(t) &\mbox{$i^{\text{th}}$ component of tuple $t$}\\
                    \\
\mbox{(actions)}    & \alpha &::= & \action{send}\;\langle \agent{A}, t\rangle & \mbox{send a term $t$ to \agent{A} } \\
                    &   &    & \action{receive}   & \mbox{receive a term } \\
                    &   &    & \action{new}       & \mbox{generate nonce $n$} \\
                    &   &    & \action{sign}\; t  & \mbox{sign term} \\
                    &   &    & \action{verify}\; \langle t, t, vk(\agent{A}) \rangle & \mbox{verify a signature} \\
	\mbox{(program)}          & P &::= & \irl{stop} &
  												  \mbox{terminated program}\\
&   &     & l\mbox{:}\bind{x}{\alpha}{P} & \mbox{sequence of actions}\\
	\mbox{(role definition)}  & R &::= & r (y) \defeq [P]_I & \mbox{program $P$ with input parameter $y$}\\

\end{array}
\]
\caption{Syntax of the Protocol Programming Language}
\label{tab:languagesyntax}
\end{table*}

\subsubsection{Syntax: Overview}
The protocol programming language is used to represent cryptographic
protocols between principals.  The language is used to describe
the set of roles that comprises a cryptographic protocol. Roles are programs
that honest participants of the protocol run instances of.
For example, the widely used SSL/TLS protocol has two roles for servers and clients.
Each program is a
sequence of actions that represent stateful probabilistic computations on
bitstrings. Examples of actions are network communication and cryptographic
computations such as message signing and verification.
A bitstring may denote a principal name, nonce, string constant, signature,
or a session identifier. As a matter of convention, the metavariable
($\hat{A}, n,$ etc.) that we use to
refer to a particular bitstring denotes its intended usage
(resp. principal name, nonce, etc.).
The syntax of the programming language is summarized in
Table~\ref{tab:languagesyntax} and explained below.

\medskip \noindent \paragraph*{Names, sessions and thread identifiers}
Honest principals can execute multiple instances of different protocol roles.
A thread, which is an instance of a protocol role executed by an honest principal,
is associated with a \emph{thread identifier} $\langle \hat{A}, \iota\rangle$, a pair containing the principal name $\agent{A}$ and a session identifier
$\iota$.  If $I$ is a thread identifier, $\pi_1(I)$ is the principal executing the
thread.\footnote{$\pi_i$ represents the $i^\text{th}$ projection.}
Each thread is assumed to have been assigned an asymmetric signing-verification
key pair.\footnote{We do not reason about key exchange in the logic, and assume keys are pre-assigned before the protocol executes.
}
All threads of a principal $\agent{A}$ share the same key pair,
denoted $(sk(\agent{A}), vk(\agent{A}))$.
A signing key $sk(\agent{A})$ is not
a term in the programming language, since we do not allow signing keys to appear
in messages of a protocol.

\medskip \noindent \paragraph*{Terms, Actions, and Programs}
The \emph{term language} is composed of variables, bitstrings,
verification keys, thread identifiers,
tuples and projections.

\emph{Actions} represent stateful computations on bitstrings that return bitstrings. Examples
of actions are nonce
generation, signing, verification, and communication  (sending and receiving).
An \emph{action statement} $x \leftarrow \alpha$ denotes the binding of
variable $x$ to the result of the action $\alpha$.  We assume that every
bound variable is unique; uniqueness can be ensured by systematically renaming
variables.  Each action statement has a unique label $l$. Actions have the form of
$a(t)$, where $a$ is a name for the action and $t$ is the input.
As a matter of convenience, action labels are not shown explicitly in our examples.

\emph{Programs} are sequences of action statements. We use the notation $[t/x]P$
to denote substituting term $t$ for variable $x$ in program $P$ (renaming
bound variables if necessary to avoid capture).

\medskip \noindent \paragraph*{Role Definitions and Protocols}
A \emph{role} is a program, instantiations of which are run by honest
participants of a protocol. It is defined using a \emph{role definition} of the
form of $r(y) \defeq [P]_I$, which identifies a program $P$ with a name $r$ and
the input parameters $y$ for the role.
Every role has an additional parameter, denoted by the subscript $I$, for
the thread identifier associated with a particular instance of this role.
Every variable in a role is bound, and is either a parameter or assigned
to in an action statement.
A \emph{protocol} specifies a finite list of roles.

\paragraph*{Example Protocol}

\begin{figure}
\[
\begin{array}{lcl}
    A \rightarrow B & : & m \\
    B \rightarrow A & : & n, \sig{B}{\resp, n, m, A} \\
    A \rightarrow B & : & \sig{A}{\init, n, m, B}
\end{array}
\]
\caption{Challenge-response protocol as arrows-and-messages}
\label{figure:crarrows}
\end{figure}

The two roles of the challenge-response protocol depicted informally in
Figure~\ref{figure:crarrows} are written formally in
Figure~\ref{figure:crcords}, with \cordname{Init_{CR}} the initiator role and
\cordname{Resp_{CR}} the responder role.  The initiator role
\cordname{Init_{CR}} begins with a \emph{static parameter} \agent{Y}, which is
the name of the agent that the initiator intends to communicate with.  If an
agent Alice, for example, uses \cordname{Init_{CR}} to authenticate an agent
Bob using the CR protocol, then Alice executes the instance
\cordname{Init_{CR}(Bob)} of the initiator role obtained by substituting Bob
for the static parameter \agent{Y}.
The value of the static parameters (the choice of the intended responder in this case) is chosen by the
adversary before a role is executed.

In words, the actions of a principal
executing the role \cordname{Init_{CR}} are: generate a fresh random number;
send a message with the random number to the peer \agent{Y}; receive a message
with source address \agent{Y}; verify that the message contains \agent{Y}'s
signature over the data in the expected format; and finally, send another
message to \agent{Y} with the initiator's signature over the nonce sent in the
first message, the nonce received from \agent{Y}, and \agent{Y}'s identity.
The subscript $X$ on the square brackets $[ \ \ldots \ ]_X$ enclosing the
actions of \cordname{Init_{CR}} indicate that $X$ is the
thread-identifier of the thread
executing these actions. $X$ is a pair $\langle\hat{X}, \iota\rangle$ for some
principal $\agent{X}$, and thread name $\iota$.  A role $[ \ \ldots \ ]_X$ may
include an action that signs a message with the private key of $\agent{X}$, but
not with the private key of $\agent{Y}$ unless $\agent{Y}=\agent{X}$.

To simplify the presentation, we use pattern-matching notation such $(\langle x_1, x_2
\rangle \varbind \alpha; P)$ for binding more than one variable in an action
statement involving tuples. This should be read as the program with the
variables in the pattern replaced with the correct component of the tuple in
the rest of the program. For example, the above program should be read as
shorthand for $(x \varbind \alpha;[ \pi_1(x)/x_1,\pi_2(x)/x_2]P)$. We also use
$\_$ in place of a variable when
the return value from an action is never used.

\begin{figure}[th]
\begin{minipage}{200pt}
\begin{eqnarray*}
  \cordname{Init_{CR}}(\agent{Y}) & \defeq & [\\
    & & m \varbind \new;\\
    & & \_ \varbind \send   \langle\agent{Y}, m \rangle;\\
    & & \langle \langle y, s\rangle, N' \rangle \varbind \irl{receive};\\
    & & \_ \varbind \action{verify} \ \langle s, \langle``{\tt Resp}", y, m, \agent{X}\rangle, vk(\agent{Y})\rangle ;\\
    & & r \varbind \action{sign} \langle``{\tt Init}", y, m,  \agent{Y}\rangle
    ;\\
    & & \_ \varbind \send \langle \agent{Y}, r \rangle; \\
    & & \stp \\
  &&  ]_{{X}}
\end{eqnarray*}
\end{minipage}

\begin{minipage}{200pt}
\begin{eqnarray*}
\cordname{Resp_{CR}}() & \defeq & [\\
    & & \langle x, \agent{X}\rangle \varbind \irl{receive}; \\
    & & n \varbind \new;\\
    & & \ r \varbind \action{sign} \langle``{\tt Resp}", n, x, \agent{X}\rangle;\\
    & & \_ \varbind \send \langle \agent{X}, \langle n, r\rangle\rangle; \\
    & & \langle t,N'' \rangle \varbind \irl{receive}; \\
    & & \_ \varbind \action{verify}  \langle t, \langle``{\tt Init}", n, x, \agent{Y}\rangle,
    vk(\agent{X})\rangle ;\\
    & & \stp \\
  &&  ]_{{Y}}
\end{eqnarray*}
\end{minipage}
\caption{Roles of the Challenge-response protocol}
\label{figure:crcords}
\end{figure}

\subsection{Semantics}

The semantics of the protocol programming language describe how protocol
programs interact in the presence of an adversary who is allowed to execute
arbitrary polynomial-time computations.

\subsubsection{Semantics: Overview}

The semantics describe the interleaved execution of a collection of protocol
programs to produce traces.  At a given point in time, at most one protocol
program in a collection is active, while other programs wait to be executed.
Recall that a program is a list of actions. We describe how each action is
executed and how actions in different programs are interleaved.

Individual actions are executed by invoking stateful probabilistic computations
on bitstrings called \emph{action implementations}. For example, informally, to
execute a signature-signing action, the input to the $\irl{sign}$ action is
passed to an implementation of digital signature schemes that maintains its own
state. Similarly, to execute the sending of a message, the input to the
$\irl{send}$ action is sent to the adversary, who is responsible for relaying
messages. After the computation implementing the action has finished executing
and returns a value, the remainder of the active program continues execution.
Certain actions are allowed to make scheduling decisions instead of simply
returning values.  For example, the receive action is allowed to signal the
active program to wait, simulating the adversary's power to make scheduling
choices when a program is waiting to receive a message. A special computation,
the \emph{scheduler}, is invoked when all threads are waiting.

The adversary controls the scheduler and the send and receive actions, and is
allowed to run arbitrary computations when these actions are invoked. In
effect, the adversary controls network communication and the scheduling of protocol
threads.\footnote{The adversary is also allowed to return garbage values after
executing actions that are not parsable as terms, but such values are ignored.
Therefore, we make the assumption that all return values of actions are
parsable as terms.} However, the time for executing every action in a trace is
recorded and the total runtime of every trace is bounded by a polynomial of the
security parameter, which corresponds to a polynomially bounded adversary.

\emph{Action transitions} represent the behavior of action implementations, the
stateful probabilistic computations that implement actions.  Action
transitions take as input a state for the action implementation and an
input bitstring and return a distribution over updated states and output
bitstrings.

Terms such as projections of pairs are resolved down to values representing
bitstrings before an action is invoked on such a term. Such stateless
computations are represented by the \emph{term-evaluation}
relation.
Finally, action transitions along with the term-evaluation relation
are used in defining \emph{configuration transitions} that
describe how individual actions are executed and different protocol
programs are interleaved.

\subsubsection{Semantics: Details and Examples}

We now describe term-evaluation and action transitions,
and then define configuration transitions formally.

\newcommand{\impl}[1]{\langle\langle #1 \rangle\rangle}

\paragraph*{Term Evaluation and Values} Terms represent
deterministic, effect-free computations on bitstrings.  \emph{Values} are closed
terms that cannot be simplified further, namely, bitstrings and tuples
of bitstrings. 
The evaluation of term $t$ to value $v$ with cost $c$ is denoted by
the term-evaluation relation $\eval{t}{c}{v}$. For example, we have that
$\eval{\pi_1(\langle A, B\rangle)}{c}{A}$, for some cost $c$.

\paragraph*{Implementations and Implementation Transitions} \emph{
Implementations} are stateful probabilistic computations on bitstrings
corresponding to each action in the programming language. A special implementation, the scheduler $Sr$,
does not correspond to any action, but is responsible for scheduling when
no programs are active.

\emph{Implementation transitions} are
probabilistic transition relations that describe the behavior of action
implementations.  The observable behavior of implementations includes the cost
of computing a particular result and the probability with which the particular
result is computed. The following transition relation represents one outcome of
executing implementation $A$ on input $t$:

\[\sigma,A(t) \xmapsto{c,p} \sigma',ra.\]

Here, $\sigma$ denotes the state maintained by $A$, $c$ denotes the cost of the
outcome, $p$ denotes its probability, $\sigma'$ denotes the output state and
$ra$ denotes the result of the computation (called the \emph{reaction}), which
has the following form: \[ra::= \ret{t} \mid \abort \mid \wait \mid \switch{I}
. \] The reaction can either return a value of the expected type
($\irl{ret}(t)$), signal the active program to abort ($\abort$) or wait
($\wait$), or force a context switch to the thread with identifier $I$ ($\irl{switch}[I]$). For
example, an implementation of the signature verification action may return if
the verification succeeds; otherwise, it can signal the thread to stop.
The state maintained by an implementation is treated abstractly by the
operational semantics.
We have deliberately left the low-level details of states
unspecified.  The soundness proofs do not depend on the specifics of
the implementations.
Moreover, adversaries modeled using such an abstract representation are allowed
to implement arbitrary data structures to mount efficient attacks on the
protocol.
To build intuition, we now provide examples of simple instances of
implementation transitions.

Consider a deterministic implementation $Sign$ of the
signature action that keeps a count only of the number of messages it has
signed as its state. In this case, the action transition has the
following form:
\[n, Sign(m)
  \xmapsto{c,1} n+1,\ret{s}.\]

Next, consider an example where an implementation $\mathit{Verify}$ of the
signature verification action $\irl{verify}$ fails to verify a
signature:
\[\triv,\mathit{Verify}(\langle m, m', vk(N)\rangle) \xmapsto{c,1} \triv,\irl{abort}.\]
The implementation $\mathit{Verify}$ does not
maintain any state; its state is thus denoted $\triv$.

Now consider an implementation $Send$ of the $\irl{send}$ action whose state is
a set $s$ of sent messages that corresponds to an adversary who randomly drops
half the messages:
\[\begin{array}{lcl}
  s, Send(\langle \hat{A}, m\rangle)
&\xmapsto{c,0.5}& s,\ret{\triv}\\
  s, Send(\langle \hat{A}, m\rangle)
&\xmapsto{c,0.5}& s \cup \{\langle \hat{A}, m\rangle\},\ret{\triv}.
\end{array}\]
In the first outcome above, the implementation $Send$ keeps the set
of sent messages unchanged, effectively dropping the message $m$.
In the second outcome, $Send$
updates the state $s$ with the new message. In both outcomes, the
null tuple $\triv$ is returned.

Consider an implementation $Receive$ of the $\irl{receive}$
action that signals the active program to wait when it has no messages to relay:
\[\{\}, Receive(\triv) \xmapsto{c,1} \{\}, \wait.\]

Finally, $\switch{I}$ is used by the scheduler to indicate a switch to the thread with identifier $I$. For example,
a scheduler $Sr$ that manages a list of $n$ threads and picks one at
random has the
following transition for each $k$:
\[\{I_1 \cdots I_n\},Sr \xmapsto{c,\frac{1}{n}} \{I_1 \cdots I_n\}, \switch{I_k}.\]

We require that the number of possible transitions from a given state be
finite and that the transition probabilities sum to $1$. For example, if for
the state $\sigma$, implementation $A$, and input $t$, the possible
transitions are
\[\sigma,A(t) \xmapsto{c_1,p_1} \sigma_1',ra_1 \quad \ldots \quad
  \sigma,A(t) \xmapsto{c_n,p_n} \sigma_n',ra_n, \]
then we require that  $\sum_{i=1}^{n}  p_i = 1$ and that for all $i$, $p_i > 0$.

\paragraph*{Setup} A setup $\mathcal{S}$ is a mapping $\impl{.}_I$ that associates an action $a$ with an action implementation $\impl{a}_I$ for the thread with identifier $I$, along with an implementation
$Sr$ for the scheduler. We will see below that only the part of the setup that
pertains to cryptography is specified by a protocol.

\paragraph*{Restrictions on Actions}
 \begin{table}
\[\renewcommand{\arraystretch}{1.2}
  \begin{array}{ l  l  l }
  \hline
  \textsf{\textbf{Action}} & \textsf{\textbf{Type}} & \textsf{\textbf{Reactions}} \\\hline
  \action{new} & \ttype{unit} \rightarrow \ttype{nonce} &  \ret{t} \\
  \action{send} & \ttype{principal\_name} \times \ttype{message} \rightarrow \ttype{unit} &  \ret{t} \\
  \action{receive} & \ttype{unit} \rightarrow \ttype{principal\_name} \times \ttype{message} &  \ret{t}, \wait \\
  \action{sign} & \ttype{message} \times \ttype{sgn\_key} \rightarrow \ttype{message} &  \ret{t} \\
  \action{verify} &  \ttype{message} \times \ttype{message} \times \ttype{sgn\_key} \rightarrow \ttype{unit} &  \ret{t}, \abort \\
  \hline
\end{array}\]
\caption{Types and permitted reactions for actions}
\label{tab:actions}
\end{table}

We make well-typing assumptions about the reactions of action implementations.
These assumptions are easily discharged, since straightforward checks
can identify when responses are not the expected type, and ignore them.
Associated with each action is a type and a set of allowed reactions;
these are summarized in Table \ref{tab:actions} and described here. For instance,
according to the table, the only reaction allowed for action
$\snd{\langle \hat{A}, t\rangle}$ is $\ret{\triv}$;
a context switch is not allowed on a send.
The implementation for $\action{receive}$
can respond with $\ret{t}$ or $\wait$. This means that a context switch
can be enforced during a receive action. The fact that only $\action{receive}$
can respond with a $\irl{switch}$ means that each thread executes its
program uninterrupted until it reaches a $\action{receive}$ action; at this
point,  the adversary has the opportunity to schedule other threads.

The action $\irl{new}$ would
typically be implemented by a pseudorandom-number generator, and can
respond only with $\ret{t}$.

The action $\irl{sign}$ returns $\ret{t}$;
$\irl{verify}$ may respond with a $\ret{\triv}$ on success or $\abort$ on
failure.
These actions would typically be implemented by a digital signature scheme.

An implementation may either be specified by the protocol setup or be provided by the adversary.
For cryptographic protocols, it may be reasonable to expect that
implementations for $\irl{send}$ and $\action{receive}$ are provided by the adversary,
as it controls the communication channel and is responsible for scheduling
processes. The implementations for $\irl{sign}$, $\irl{verify}$, and $\irl{new}$
are specified by the protocol according to its choice of signature algorithm and
pseudorandom-number generator.

\paragraph*{Configuration Transitions}

Finally, configuration transitions describe how protocol programs are executed
and interleaved to produce traces.
A configuration is a snapshot of the
execution of all computation in our system. This snapshot includes each of
the protocol programs and the state of each implementation.
Formally, a \emph{configuration} $\conf$ has the form
$\config{E}{r}{\traces}{T}$ . $\traces$ is a set of waiting threads, where each
thread has the form $I:P$, consisting of $I$, the thread-identifier, and $P$,
the program. $T$, the thread after the wedge, is the active
program being processed. When no program is active, a configuration is denoted
 $\config{E}{r}{\traces}{\cdot}$.  $E$, known as the \emph{global state}, is
the collection of the states for the action implementations.

\paragraph*{Global State} The structure of the global state $E$ captures
which action implementations are allowed to share information.  For example,
the adversary is allowed to combine any information that he infers from
scheduling or that he receives from the network. However, we require that the
pseudorandom-number generators from different threads operate on different states.
The global state is a set that has the form $\{h_1: \sigma_1, \cdots, h_k:
\sigma_k\}$, where each $\sigma_i$ is a local state for some action
implementation. The label $h_i$ determines which implementation state
$\sigma_i$ belongs to. Implementations that share state have the same
label in the
global state. We need to define for each action how the label for that action
in the global state is determined.  The label of an action $a$ in a thread
with identifier $I$ in the global state is denoted $\sttag{a_I}$. If
the implementations of action
$a$ in thread $I$ and action $b$ in thread $I'$ operate on the same state, then
we want that $\sttag{a_I} = \sttag{b_{I'}}$. Otherwise, if the actions operate
on separate states, then we want  $\sttag{a_I} \not= \sttag{b_{I'}}$.  The
$\irl{send}$ and $\irl{receive}$ actions, which are implemented by the
adversary, share state for every thread. So, for all $I$, we choose
$\sttag{\irl{send}_I} = \sttag{\irl{receive}_I} = \irl{a}$, for a constant
symbol $\irl{a}$. On the other hand, the $\irl{sign}$ and $\irl{verify}$
actions have separate state for each principal, so we parameterize
the label of these actions with a principal, denoting them $\sttag{\irl{send}_I} =
\sttag{\irl{verify}_I} = \irl{s}(\pi_1(I))$ for a symbol $\irl{s}$ (recall
that $\pi_1(I)$ is the principal corresponding to $I$). The action $\irl{new}$
has a separate state for each thread, so we denote its label
$\sttag{\irl{new}_I} = \irl{p}(I)$, for a symbol $\irl{p}$.  Finally, the
scheduler $Sr$ is implemented by the adversary, so $Sr$'s label in
the global state is $\irl{a}$, the same as that of $\irl{send}$ and
$\irl{receive}$.

Configuration transitions have the form $\ctrans{\conf}{c,p}{}{\conf'}$, where $c$ and $p$ denote the runtime
cost and probability of the transition. Transitions where an action
executes to return some value carry
a record of the action below the arrow. This record has the form $(I, a(v), v')$, where $I$ is the active
thread, $a(v)$ is the action executed and $v'$ is the value returned. (See rule $\rulen{act}$ below.)

We now present rules for the evolution of configurations. These rules are parametric in the transitions for evaluation of actions and terms.
Table \ref{tab:transitionrules} presents the transition rules for
configurations.\footnote{In the transition rules, the notation $\Psi, I:P$ denotes
$\Psi\cup\{I:P\}$; similarly,
$E, \sttag{a_I}:\sigma$ denotes $E\cup\{\sttag{a_I}:\sigma\}$.}

\begin{table*}
  \begin{mathpar}
  \infern{
    \eval{t}{c'}{v}\\
    \sigma,\impl{a}_I(v) \xmapsto{c,p} \sigma',\ret{v'}
  }{
    \ctrans{\config{E,\sttag{a_I}:\sigma}{r}{\traces}{I:\bind{x}{a(t)}{P}}}
    {c+c',p}
    {(I, a(v), v')}
    {\config{E,\sttag{a_I}:\sigma'}{r'}{\traces}{I:[v'/x]P}}
  }{act}

  \infern{
    \eval{t}{c'}{v}\\
    \sigma,\impl{a}_I(v) \xmapsto{c,p} \sigma',\abort
  }{
    \ctrans{\config{E,\sttag{a_I}:\sigma}{r}{\traces}{I:\bind{x}{a(t)}{P}}}
    {c+c',p}
    {}
    {\config{E,\sttag{a_I}:\sigma'}{r'}{\traces}{\cdot}}
  }{abort}

  \infern{
    \eval{t}{c'}{v}\\
    \sigma,\impl{a}_I(v) \xmapsto{c,p} \sigma',\wait
  }{
    \config{E,\sttag{a_I}:\sigma}{r}{\traces}{I:\bind{x}{a(t)}{P}}
    \xrightarrow{c+c',p}
    \config{E,\sttag{a_I}:\sigma'}{r'}{\traces,I:\bind{x}{a(t)}{P}}{\cdot}
  }{wait}

\infern{
  \sigma,Sr \xmapsto{c,p} \sigma',\switch{I}
  }{
    \ctrans{\config{E,\sttag{Sr}:\sigma}{r}{\traces,I:P}{\cdot}}
    {c,p}
    {}
    {\config{E,\sttag{Sr}:\sigma'}{r'}{\traces}{I:P}}
  }{switch}

\end{mathpar}
\caption{Operational Semantics for the Protocol Language}
\label{tab:transitionrules}
\end{table*}

The first three rules, $\rulen{act}$, $\rulen{abort}$, and $\rulen{wait}$,
describe how a single action in the active program is executed.  In each of
these rules, $\impl{a}_I$ is the implementation of the action $a$ for thread
$I$, as per the setup.  Each rule describes how a statement is executed. First
the input to the action is evaluated to a value $v$; then the implementation
$\impl{a}_I$ is run on $v$ to yield a new state $\sigma'$ and a reaction, and
the change in state is reflected in the global state.  According to rule
$\rulen{act}$, when $\impl{a}_I$ returns a value $v'$, the active program
$\bind{x}{a(t)}{P}$ transitions to $[v'/x]P$, where all occurences of the
variable $x$ are replaced with $v'$. For example, for the signature action
$\irl{sign}$, when the implementation $\impl{\irl{sign}}_I$ is
called on some message $m$, whe implementation returns a
signature for $m$.
On the other hand, when the implementation responds with $\abort$,
the current program is stopped. Similarly, when the implementation
responds with $\wait$, the active program is moved to the set
of waiting programs.

The rule $\rulen{switch}$ describes how
different programs in a configuration are interleaved. When an implementation responds with
$\irl{switch}[I']$, the thread $I'$ is made the active program.

In each rule, the total computation cost and the probability are recorded in
the transition.  The transitions due to $\rulen{act}$ contain a record of the
action executed in the transition and the value returned. Such transitions will
be referred to as \emph{labeled transitions}.

\newcommand{\cinit}{\conf_{\mathsf{init}}}
\paragraph*{Execution Traces} A \emph{trace} $\trace$ is a sequence of
configuration transitions.  A trace records the actions in the execution of the
protocol, the runtime cost incurred, and its probability.  We say that the
action label $(I,a(t),t') \in \trace$ if $\trace$ contains a labeled transition
$\ctrans{\conf}{c,p}{(I, a(t), t')}{\conf'}$. We also define a function
$rt(\trace)$, which returns the total cost of execution of the trace (i.e., the
sum of the runtime costs for each transition), and a function $p(\trace)$,
which returns the probability of the trace (i.e., the product of the
probabilities for each transition). The relation $\trace_1 \preceq \trace_2$
denotes that $\trace_1$ is a prefix of $\trace_2$.

\paragraph*{Execution Trees}
An execution tree $\tree$ rooted at a configuration $\conf$
is either
the configuration $\conf$,
or has the form $(\conf \xrightarrow{p_1,c_1}  \tree_1 ~|~\cdots~|~\conf
  \xrightarrow{p_k,c_k}  \tree_k)$, where
$\conf \xrightarrow{p_1,c_1} \conf_1, \cdots \conf
  \xrightarrow{p_k,c_k} \conf_k$ are valid transitions from $\conf$
  and $\tree_1, \cdots, \tree_k$ are execution trees rooted at
  $\conf_1, \cdots, \conf_k$, respectively.

Informally, a \emph{tree} $\tree$ is a branching structure
of configuration transitions.
   We write $\trace \in \tree$ if $\trace$ is a path in $\tree$.
Every path through a tree is a trace.
We
extend $p$ to finite trees by taking
$\measure{\tree} = \sum_{\trace\in\tree}\measure{\trace}$.

\paragraph*{Initialization and Termination}
For a protocol with roles $\{R_1, R_2, \ldots, R_k\}$,
the \emph{instance vector} $\vec{n} = \langle
n_1, n_2,\ldots, n_k \rangle$ denotes the number of instances of each
of the $k$ protocol roles; the \emph{initialization vector}
$\vec{\mathbf{t}} = \langle \vec{t_1}, \vec{t_2},\ldots, \vec{t_k} \rangle$
is a vector of vectors, where $\vec{t_i}$ is a list of values
corresponding to the input
parameters of all instances of the $i$th role.  Given an initialization vector,
the \emph{initial configuration} consists of all the instances of protocol
roles running on each thread, where values from the initialization
vector are substituted for the corresponding parameters.

\paragraph*{Protocol} A protocol $\Q$, is a pair $(\vec{R},
\mathcal{S}_{\mathcal{Q}})$, where $\vec{R}$ is a list of protocol roles, and
$\mathcal{S}_{\mathcal{Q}}$ is a mapping from actions to implementations for actions
corresponding to cryptography, namely, $\irl{sign}$, $\irl{verify}$, and
$\irl{new}$.

\paragraph*{Adversary} An adversary $\mathcal{A}$ is a pair
$(\vec{\mathbf{t}}, \mathcal{S}_{\mathcal{A}})$, where $\mathcal{S}_{\mathcal{A}}$
is the setup for the $\irl{send}$ and $\irl{receive}$ actions and the
scheduler, and $\vec{\mathbf{t}}$ is the initialization vector.
The remainder of the setup, in particular, the implementations for
the signature and pseudorandom-number generation actions, is specified by
the protocol.

Given a runtime bound $tb(\eta)$ that is a polynomial of the security
parameter $\eta$, a trace $\trace$ is considered \emph{final} if all threads are
terminated or if there exists a transition from the last state such that the
total runtime exceeds $tb(\eta)$. A tree $\tree$ is considered final if all
traces in $\tree$ are final.

\begin{definition}
\label{def:comp-trace}
(Feasible Traces) For a protocol $\mathcal{Q}$, setup $\mathcal{S}$,
instance vector $\vec{n}$,
initialization vector $\vec{\mathbf{t}}$, and polynomial $tb(\eta)$ that
bounds the running time,
$T_{\mathcal{Q}}(\mathcal{S}, \mathbf{t}, \vec{n}, tb, \eta)$ is the final
 tree obtained starting from the initial configuration.
Given an adversary $\mathcal{A}$ that provides implementations
$\mathcal{S}_\mathcal{A}$ and initial values $\vec{\mathbf{t}}$, and a protocol
$\mathcal{Q}$ that provides implementations $\mathcal{S}_\mathcal{Q}$, the
execution tree is written as $\T_{\mathcal{Q},\mathcal{A}}(\eta,tb,\vec{n}) =
T_{\mathcal{Q}}(\mathcal{S}_\mathcal{A}\cup\mathcal{S}_\mathcal{Q},
\vec{\mathbf{t}}, \vec{n}, tb, \eta)$.
\end{definition}

The function $\T_{\mathcal{Q},\mathcal{A}}$ serves as the model on which QPCL formulas
are interpreted.

\section{Quantitative Protocol Composition Logic}
\label{sec:qpcl}

In this section, we present the syntax, semantics, and proof system for
Quantitative Protocol Composition Logic (QPCL).  While the logic is similar to
Computational PCL~\cite{DDMST05}, the key technical difference is that we can
specify and reason about exact security trace properties in QPCL. Specifically,
the axioms and proof rules of the logic are annotated with exact bounds,
thereby enabling an exact bound to be derived from a formal axiomatic proof of
a security property.

QPCL consists of two kinds of formulas: basic formulas that do not involve
probability and are interpreted on execution traces; and conditional
formulas of the form of $\varphi_1 \rightarrow^\epsilon \varphi_2$ that are
annotated with a probability function $\epsilon$ and are interpreted on
execution trees. The probability function $\epsilon$ is parameterized
by the security parameter, adversary runtime, and number of sessions of
the protocol, and specifies, for a particular setting of these parameters,
the probability of the conditional implication not holding.

\subsection{Syntax}
\label{section:qpcl-ss}
\newcommand{\progr}{\mathit{program}}

\begin{table}[t]
\[
\begin{array}{lcl}
\multicolumn{3}{l}{\mbox{\textbf{Action Predicates:}}} \\
\pred{a} & ::= & \pred{Send}(I, t)\,|\, \pred{Receive}(I, t)\,|\,
\pred{Sign}(I, s, t)\,|\, \\
& & \pred{Verify}(I, s, t, \hat{B})\,|\, \pred{New}(I, t)\\ \\
\multicolumn{3}{l}{\mbox{\textbf{Basic Formulas:}}} \\
\varphi & ::= & \pred{a} \,|\, \pred{a} < \pred{a}'\,|\, t=t' \,|\, \pred{Start}(T) \,|\,
\pred{Contains}(t,\, t') \,|\, \\
& & \pred{Honest}(\agent{B}) \,|\,  \varphi\,\wedge\, \varphi'  \,|\, \neg \varphi \,|\,
\varphi\,\left[\progr\right]_{I}\, \varphi' \,|\, \forall x. \phi \\ \\
\multicolumn{3}{l}{\mbox{\textbf{Conditional formulas:}}} \\
\psi & ::= & \phi  \,|\, (\phi\,\pimp{\epsilon}\, \phi')
\,|\, \psi \land \psi' \,|\, \neg\psi \,|\, \forall x. \psi

\end{array}
\]
\caption{Syntax of the logic}
\label{table:logic-syntax}
\end{table}

The syntax of formulas is given in Table~\ref{table:logic-syntax}.   We
summarize the meaning of formulas informally below, with precise semantics
presented subsequently.

Basic formulas do not involve probabilities, and are interpreted over execution
traces.
We use the same syntax for terms and variables in formulas as for
terms and variables in the protocol programming language.
For every protocol action, there is a corresponding action predicate
that asserts that the action has occurred on the trace.  For example,
$\pred{Send}(I, t)$ holds on a trace where the thread with thread-identifier $I$ has sent the term
$t$, while $\pred{New}(I, t)$ holds on a trace where the thread $I$ has created
the term $t$ using the random nonce generation action.  Similarly,
$\pred{Sign}(I, t, t')$ holds when a signature $t$ for message $t'$ has been
generated, and $\pred{Verify}(I, t, t', \agent{B})$ holds when $t$ has been
verified to be $\agent{B}$'s signature over $t'$.  $\pred{Contains}(t,t')$
holds on a trace if $t$ can be derived from $t'$ using symbolic actions, such
as projection and message-recovery from a signature. $\pred{Honest}(\agent{B})$
holds if $\hat{B}$ is a principal associated with a protocol thread.  We use
$\pred{Fresh}(I, t)$ as syntactic sugar for the formula $\pred{New}(I,t) \wedge
\forall t' (\pred{Send}(I,t') \rimp \neg \pred{Contains}(t,t'))$, which means
that the value of $t$ generated by $I$ is ``fresh'' in the sense that $I$
generated $t$ using a $\action{new}$ action and did not send out any message
containing $t$. We use $\pred{FirstSend}(I, n, m)$ as syntactic sugar for the
formula $ \pred{Contains}(n, m) \wedge \forall t. ((\pred{New}(I, n) \wedge
\pred{Send}(I, t) < \pred{Send}(I, m)) \rimp \neg \pred{Contains}(n, t))$.

The logic includes modal formulas of the form $\phi_1[\progr]_I\phi_2$, where
$\progr$ is a fragment of some protocol's program. This formula holds on a
trace $\trace$ if, roughly speaking, whenever $\trace$ can be split into three
parts, $\trace_1$, $\trace_2$, and $\trace_3$ (so that $\trace =
\trace_1;\trace_2;\trace_3$) such that $\trace_2$ ``matches'' $\progr$ and
$\phi_1$ holds on trace $\trace_1$, then $\phi_2$ holds on $\trace_1;\trace_2$.

The logic also includes standard connectives and quantifiers of first-order
logic. In addition, we have conditional formulas of the form
$\varphi\,\pimp{\epsilon}\, \varphi'$. As we said in the introduction,
$\epsilon$ here is a function.  Its arguments are $\eta$ (the security
parameter), $tb$ (the adversary's time bound), and $\vec{n}$ (the number of
instances of each role of the protocol being analyzed).  The formula
$\varphi\,\pimp{\epsilon}\, \varphi'$  holds if the conditional probability of
$\varphi'$ given $\varphi$ is greater than or equal to $1 - \epsilon(\eta, tb,
\vec{n})$, for every setting of $\eta$, $tb$ and $\vec{n}$.  Conditional
formulas are interpreted over execution trees parameterized by $\eta$, $tb$ and
$\vec{n}$. While the syntax of QPCL allows arbitrary functions of $\eta$, $tb$
and $\vec{n}$ to be used in conditional formulas, we will see in
Section~\ref{section:proof-system} that formulas provable using QPCL's proofsystem use a
very restricted set of functions, which is the set of linear combinations of
the bounds on QPCL's axioms.

We abbreviate a special case of the conditional formula of the
form $true \rarrow^\epsilon \varphi$ as $B^{\epsilon} \varphi$.  Such formulas
will prove useful for specifying exact security properties of cryptographic
primitives (such as signature schemes) and protocols (such as matching
conversation-style authentication properties~\cite{BR93}).

\newcommand{\traceset}{\mathcal{T}}
\newcommand{\lamhat}{\hat{\lambda}}
\newcommand{\joecom}[1]{\newcomment{red}{Joe}{#1}}
\subsection{Semantics}
\label{sec:semantics}

Basic formulas are interpreted over traces, whereas
conditional formulas are interpreted over execution trees.

For basic formulas, we define the semantic relation
$\trace, \lambda \models \varphi$ denoting that basic formula $\varphi$ holds on trace
$\trace$, where $\lambda$ is a \emph{\valuation} that maps variables to values. The
notation $\lambda[x\mapsto v]$ denotes the \valuation that is
identical to $\lambda$ except that it maps
variable $x$ to value $v$. Recall that all bound
variables are assumed to be unique, as uniqueness can be ensured via renaming.
The lifting of $\lambda$ from variables to terms is denoted $\lamhat$.
$\lamhat(t)$ yields a value after replacing each variable in $t$ by $\lambda(x)$.
We write $\trace \models \varphi$ if, for all $\lambda$, $\trace, \lambda \models
\varphi$.

For conditional formulas, we define the semantic relation $\T, (\eta,tb,\vec{n}), \lambda \models \psi$. Here, $\T$ is a function that, like $\epsilon$, takes
parameters $\eta$, $tb$, and $\vec{n}$;
$\T(\eta,tb,\vec{n})$ is an execution tree.
We write $\T, \lambda \models \psi$ if
$\T, (\eta, tb, \vec{n}), \lambda \models \psi$ for all inputs $(\eta,
tb, \vec{n})$.
Recall from Definition~\ref{def:comp-trace} that a protocol $\mathcal{Q}$ and an adversary $\mathcal{A}$ define a
model $\T_{\mathcal{Q},\mathcal{A}}$.
$\T_{\mathcal{Q},\mathcal{A}}(\eta,tb,\vec{n})$ is
the set of  traces of protocol $\mathcal{Q}$ generated
using
adversary $\mathcal{A}$, security parameter $\eta$, time bound $tb(\eta)$ on the
adversary's running time, and number of instances
of each protocol role given by $\vec{n} = \langle n_1,\ldots, n_k \rangle$.
For a set $\Delta$  of formulas, we define
$$ \Delta \Qmodels \psi \mbox{ iff }
\T_{\mathcal{Q},\mathcal{A}}, \lambda \models \Delta \mbox{ implies } \T_{\mathcal{Q},\mathcal{A}}, \lambda \models \psi  \mbox{ for all adversaries
$\mathcal{A}$, and \valuations $\lambda$}.$$

\noindent

\paragraph*{Definition of $\trace, \lambda \models \psi$}
We start by giving the semantics of basic formulas.
in the definition below, where the different elements of the trace are
given by
Definition~\ref{def:comp-trace}.

\begin{itemize}

\item

$\trace, \lambda \models {\pred{Send}(I, u)}$
iff $(\lamhat(I),\snd~\lamhat(u),\triv) \in \trace$.

\item

$\trace, \lambda \models \pred{Receive}(I,u)$
iff $(\lamhat(I),\receive,\lamhat(u)) \in \trace$.

\item

$\trace, \lambda \models \pred{Sign}(I, u, u')$
iff $(\lamhat(I),\action{sign}~\lamhat(u'), \lamhat(u)) \in \trace$.

\item

$ \trace, \lambda \models \pred{Verify}(I, w, u, \agent{B})$
iff $(\lamhat(I),\action{verify}~\langle\lamhat(w),\lamhat(u), vk({\lamhat}(\agent{B}))\rangle,\triv) \in \trace$.
\item

$\trace, \lambda \models \pred{New}(I,u)$ iff
$(\lamhat(I), \action{new},\lamhat(u)) \in \trace$.

\item

$\trace, \lambda \models {a_1 < a_2}$
iff $\trace, \lambda \models {a_1}$, $\trace, \lambda \models {a_2}$, and $\action{a_1}$
appears before $\action{a_2}$ in $\trace$, where $\action{a_1}$ and
$\action{a_2}$ are the actions in $\trace$ corresponding to
$a_1$ and $a_2$, respectively.
(Informally, this predicate models the temporal ordering on actions.)

\item

$\trace, \lambda \models {u=v}$ if $\lambda(u)=\lambda(v)$,
where $=$ is equality on values.

\item

$\trace, \lambda \models \Start(I)$ if no action label in $\trace$ has the form $(\lamhat(I), ., .)$.
(Intuitively, this predicate holds on traces in which the thread with identifier $I$ has executed no actions.)

\item

$\trace, \lambda\models {\pred{Contains}(u,v)}$ iff there exists
a series of identity function applications, projections, and
message-recovery operations from
signatures (deriving message from its signature) constructing $\lamhat(u)$ from $\lamhat(v)$.

\item

  $\trace, \lambda \models {\pred{Honest}(\agent{A})}$ iff $\agent{A}$ is a principal executing
  a protocol role on $\trace$.

\item

$\trace, \lambda \models {\theta\wedge\varphi}$ iff $\trace, \lambda \models \theta$ and $\trace, \lambda \models \varphi$.

\item

$ \trace, \lambda \models {\neg\varphi}$ iff $\trace, \lambda \not \models \varphi$.

\end{itemize}

We next define the semantics of modal
formulas of the form $\theta [P]_X \varphi$.
As we said above,
intuitively, $\theta [P]_I \varphi$ is true on a trace $\trace$ if, when $\trace$
is split into three pieces $\trace_1; \trace_2; \trace_3$,
if $\trace_2$ ``matches'' the program $[P]_I$, and if the precondition
$\theta$ holds for the trace $\trace_1$,
then the postcondition $\varphi$ holds for the trace $\trace_1;\trace_2$.
In order to make this intuitive idea precise, we define a notion of
\emph{matching} a program $[P]_I$ to a suffix of a trace. Variables in the
postcondition $\varphi$ can be bound to variables declared within $P$.
Matching $[P]_I$ to a trace may place constraints on how these
variables are interpreted by a \valuation.  For example, consider the formula
$true[x \leftarrow \irl{new}]_I\pred{New}(I, x)$, where $x$ is bound within the
program. On matching the program to a trace, a concrete value of $x$ for
the particular trace is obtained, which is used to evaluate the postcondition $\pred{New}(I, x)$.

\newcommand{\matches}[5]{#1 \gg #2 | #3, #4, #5}
We define what it means for a trace $\trace$ and \valuation
$\lambda$ to match program $[P]_I$ between prefixes $\trace_b$ and
$\trace_e$ and produce a \valuation $\lambda'$, denoted
$\matches{\trace, \lambda}{[P]_I}{\trace_b}{\trace_e}{\lambda'}$, by
induction on $[P]_I$:
\begin{itemize}
  \item $\matches{\trace, \lambda}{[l:x\leftarrow
    \alpha]_I}{\trace_1}{\trace_2}{\lambda[x \mapsto t]}$ if
    $\trace_2 \preceq \trace$ and $\trace_2$ ends in a transition
    $\ctrans{\conf}{c}{(\lamhat(I), \lamhat(\alpha), t)}{\conf'}$ for some configurations
    $\conf, \conf'$.
  \item $\matches{\trace, \lambda}{[l:x\leftarrow \alpha;P]_I}{\trace_1}{\trace_2}{\lambda'}$,
    if there exist $\trace_1'$, $\trace_2'$, and $\lambda_1$ such that
 (a) $\matches{\trace, \lambda}{[l:x\leftarrow
        \alpha]}{\trace_1}{\trace_1'}{\lambda_1}$ and (b)
    $\matches{\trace,
      \lambda_1}{[P]_I}{\trace_2'}{\trace_2}{\lambda'}$.
\end{itemize}

The semantics of the modal formula can now be defined as follows:
\begin{itemize}
\item $\trace, \lambda \models {\theta[P]_{I}\varphi}$
iff, for all $\trace_1$, $\trace_2$, and \valuations $\lambda'$, if
$\matches{\trace, \lambda}{[P]_I}{\trace_1}{\trace_2}{\lambda'}$ and
$\trace_1,\lambda \models \theta$, then $\trace_2,  \lambda'\models \phi$.
\end{itemize}

Finally, we give semantics to the universal quantifier in the standard way:
\begin{itemize}
\item
  $\trace, \lambda \models {\forall x.\,\varphi}$ iff $\trace, \lambda[x \mapsto v] \models \varphi$ holds
for all values $v$. Note that as with the protocol programming language,
the set of values is the set of all bitstrings and tuples of bitstrings.
\end{itemize}

\paragraph*{Definition of $\T, (\eta, tb, \vec{n}), \lambda \models \psi$}
We now give the semantics of $\theta \pimp{\epsilon} \varphi$,
where $\theta$ and $\varphi$ are basic formulas. In order to do so, we
associate with each basic formula $\phi$ a function $\Sem{\phi}$
from execution trees to execution trees.
Intuitively, for each execution tree $\tree$,
$\Sem{\phi}^\lambda(\tree)$ is the tree composed of those traces
 that satisfy
$\phi$ under \valuation $\lambda$.
Since all basic formulas $\phi$ are trace properties,
$\Sem{\phi}^\lambda(\tree) = \{\trace \in \tree | \; \trace,\lambda \models \phi\}$.
The formula $\theta \pimp{\epsilon} \varphi$ is intended to
express the fact that the probability
of $\varphi$ given $\theta$ is within $\epsilon$ of 1.
But since
$\epsilon$ is a function of $\eta,tb,\vec{n}$, the semantic relation is parameterized
by $\eta,tb,\vec{n}$.
\renewcommand{\T}{{\mathcal T}}
We have

\begin{itemize}
\item
$\T, (\eta, tb, \vec{n}), \lambda \models \varphi$  if $\Sem{\varphi}^\lambda{}{}(\T(\eta,tb,\vec{n})) = \T(\eta,tb,\vec{n}) $

\end{itemize}

\begin{itemize}
\item
$\T, (\eta, tb, \vec{n}), \lambda \models \theta \pimp{\epsilon} \varphi$  if
$$\frac{\measure{\,\Sem{\varphi}^\lambda{}{}(\T(\eta,tb,\vec{n})) \cap
\Sem{\theta}^\lambda{}{}(\T(\eta,tb,\vec{n}))\,}}{\measure{\,
\Sem{\theta}^\lambda{}{}(\T(\eta,tb,\vec{n}))\,}}\,\geq\:1-\epsilon(\eta,
tb, \vec{n}),$$
where we take the ratio to be 1 if
$\measure{\,\Sem{\theta}^\lambda{}{}(T(\eta,tb,\vec{n}))\,}= 0$.
\end{itemize}

Recall that $B^\epsilon \varphi$ is the special case of
$\theta \pimp{\epsilon} \varphi$
where $\theta = \true$. Also, it can be shown that $\measure{\,\T(\eta,tb,\vec{n})
\,} = 1$.  Thus, we have
\begin{itemize}
\item
$\T, (\eta, tb, \vec{n}),\lambda \models B^{\epsilon} \varphi$ if
$$\measure{\,\Sem{\varphi}^\lambda{}{}(\T(\eta,tb,\vec{n}))
\,}\,\geq\:1-\epsilon(\eta, tb, \vec{n}).$$
\end{itemize}

Boolean operations and universal quantification over conditional
formulas are defined the same way as for basic formulas.
\begin{itemize}
\item
  $\T, (\eta, tb, \vec{n}), \lambda \models \psi_1 \land \psi_2$ iff $\T,  (\eta, tb, \vec{n}), \lambda \models \psi_1$ and $\T,  (\eta, tb, \vec{n}), \lambda \models \psi_2$.
\item
  $\T, (\eta, tb, \vec{n}), \lambda \models \neg \psi $ iff $\T,  (\eta, tb, \vec{n}), \lambda \not\models \psi$.
\item
$\T, (\eta, tb, \vec{n}), \lambda \models {\forall x.\,\psi}$ iff $\T,  (\eta, tb, \vec{n}), \lambda[x \mapsto v] \models \psi$ holds
for all values $v$.
\end{itemize}

\paragraph{Example (Authentication Property)}
Informally, a matching conversations
property states that after an agent executes a particular role,
it is known that all the messages sent during the role were actually
received by the intended recipient, except possibly the last one.
This means that the role of the adversary is limited to that of a wire,
faithfully transferring messages to its intended recipients.

In the running example, a matching-conversations form of the authentication property
from the point of view of an initiator is formally expressed in QPCL as:
\begin{equation}\label{eq1}
\begin{array}{ll}
	B^{\epsilon}(
	 	\true [\cordname{Init}_{CR}(\agent{B})]_A\exists\iota. \\
			\quad \quad	(\Send(A,m) < \Receive(B,m) \land \\
                      \quad \quad	\Receive(B,m) < \Send(B,\<y, s\>) \land \\
			\quad \quad	\Send(B,\<y, s\>) < \Receive(A,\<y, s\>))).
\end{array}
\end{equation}
In formula (\ref{eq1}), $B$ is the thread identifier $\<\agent{B}, \iota\>$. Intuitively, the formula says that, with uncertainty $\epsilon$, after
$A$ executes the initiator's
$\cordname{Init}_{CR}$ role for the example challenge-response protocol, the messages
between $A$ and responder $B$ occur in order in all but a fraction $\epsilon$ of traces of the protocol.
Moreover, the proof of the protocol gives us a concrete bound on
$\epsilon$ that is roughly\footnote{The actual bound involves a few more terms, and is
presented in Section~\ref{sec:example}.}

$$\epsilon^\ufcma_{S}(tb, n_{init} + n_{resp}, \eta)+2\cdot 2^{-\eta} + 2\epsilon^\prg_P(tb, n_{init} + n_{resp},
\eta).$$ Here, $S$ is the signature scheme and $P$ is the pseudorandom-number
generator used in the protocol. Recall from Section~\ref{sec:primer} that
$\epsilon^\ufcma_{S}$ is the exact bound on the $\ufcma$ security
of the signature scheme $S$ and $\epsilon^\prg_{P}$ is the exact bound on the
$\prg$ security of the pseudorandom-number generator $P$. Also, $n_{init}$ and
$n_{resp}$ are the number of sessions of the initiator and the
responder roles, respectively.
The overall bound on the security of the protocol is therefore closely related
to the security of the underlying cryptographic primitives.
An outline of the proof of this formula is given in Section
\ref{sec:example}; a complete proof is provided in Appendix ~\ref{app:example}.

\newcommand{\qcr}{\mathcal{Q}_\mathit{CR}}
\newcommand{\initcr}{\mathbf{Init}_\mathit{CR}}
\newcommand{\respcr}{\mathbf{Resp}_\mathit{CR}}
\newcommand{\riff}{\Leftrightarrow}
\newcommand{\Verify}{\mathsf{Verify}}
\newcommand{\Decrypt}{\mathsf{Decrypt}}
\newcommand{\Has}{\mathsf{Has}}
\newcommand{\Gen}{\mathsf{Gen}}
\newenvironment{prog}{\begin{array}[t]{@{}l@{}}}{\end{array}}
\newenvironment{bprog}{\begin{array}[b]{@{}l@{}}}{\end{array}}

\subsection{QPCL Proof System}
\label{section:proof-system}

The proof system for QPCL includes axioms that capture properties of
cryptographic primitives (e.g., unforgeability of signatures), a rule for
proving invariants of protocol programs (useful for reasoning about properties
that capture the behavior of honest principals executing one or more instances
of protocol roles), as well as axioms and rules for first-order reasoning about
belief formulas.
All axioms are annotated with exact bounds.
Inference rules compute bounds for the formula in the consequent from
the bounds of
formulas in the antecedent (e.g., the conjunction of two belief formulas has a
bound that is the sum of the bounds for the two formulas).  Since a proof of a
security property can be visualized as a tree with axioms at the leaves and
rules at intermediate nodes, tracking bounds in this manner ensures that the
axiomatic proof yields an exact bound for the final property that is proved.

The exact bound in some of the axioms depends on the protocol $\mathcal{Q}$
whose correctness we are trying to prove. As a consequence, the entailment
relation $\Qentails$ is parameterized by the protocol $\Q$. Axioms that are
sound for any protocol are marked by the unparameterized entailment relation
$\entails$.

The exact bound $\epsilon$ is a function of the number of instances of the
protocol roles $\vec{n}$, the security parameter $\eta$, and the running time
of the adversary $tb(\eta)$.
Axioms related to cryptography introduce error bounds, while other proof rules
simply accumulate these bounds. As a result, error bounds of formulas provable
in QPCL are linear combinations of the error bounds introduced by the axioms
related to cryptography. In the present version of QPCL, error bounds have
one of the following forms:
\[\epsilon ::= 0 ~|~ \ever ~|~ \efstwo ~|~ \epsilon_1 + \epsilon_2.\]
Here, $\ever$ and $\efstwo$ are functions that represent the concrete bounds of
axioms \axname{VER} and \axname{FS2} discussed below, and $\epsilon_1 +
\epsilon_2$ is the function that is pointwise addition of $\epsilon_1$ and
$\epsilon_2$.

\renewcommand{\B}{B} The main technical result of
this section is that the proof system is \emph{sound}, that is,  given
protocol $\mathcal{Q}$, $\vec{n}$, $\eta$, and $tb$, each provable formula
$\B^{\epsilon} \varphi$ holds in each of the semantic models defined by
$\mathcal{Q}$, $\vec{n}$, $\eta$ and $tb(\eta)$ with probability at least $1 -
\epsilon(\vec{n}, \eta, tb)$.

The soundness proofs for axioms capturing properties of cryptographic
primitives are done via exact security reductions; they tightly relate
associated bounds to the bounds for the underlying primitive (e.g., digital
signature or pseudorandom-number generator). The exact security reduction
proofs proceed by showing that with high probability, whenever a violation of
an axiom is detected on a trace, with high probability, a violation of the
cryptographic primitive can be constructed. To detect a violation of an axiom
on a trace, however, additional computation needs to be performed, which
weakens the probability bound on the protocol.  Essentially, the reduction
proof says that an adversary can break an axiom in time $t$ if an adversary can
break the corresponding primitive $t+\delta$. We call $\delta$ the
\emph{reduction overhead}.
Specifying a concrete bound for $\delta$ requires implementation details of the
reduction such as the hardware on which the reduction is implemented.  We
elide these details in the paper. Appendix ~\ref{sec:soundness} contains the
reductions for axioms involving cryptography. The exact bound for the reduction
overhead for an axiom can be computed for any concrete implementation of the
reduction. We are satisfied when the overhead is a small polynomial in the
security parameters and number of sessions.

We summarize
below some representative axioms and proof rules (additional inference rules
and detailed soundness proofs are in Appendix ~\ref{sec:soundness}).

\paragraph*{Axioms about cryptographic primitives}
The axiom below captures the hardness of forging signatures.
The natural reading is that if thread $Y$ verified the signature of principal $\agent{B}$,
then some thread $\iota$ of that principal must have produced the signature.
There is one instance of this axiom for each fixed (constant) principal \agent{B}.
\begin{description}
\item[\axname{VER}.]
$
\QentailsC{\Delta,\pred{Honest}(\agent{B}) }{ \B^{\ever} (\pred{Verify} (Y, s, m, \agent{B})\\
~~~~~~~~~~~\Rightarrow \exists \iota. \pred{Sign} (\<\agent{B}, \iota\>, s, m) < \pred{Verify} (Y, s, m, \agent{B}))}.
$

\end{description}

Axiom \axname{VER} is not sound in general.  It is sound only if the
bound $\ever$ is chosen appropriately for the protocol $\Q$ (which is
why we use the notation $\Qentails$).
The appropriate choice of the bound is closely related to
$\epsilon^\ufcma_S$, the security bound for the signature scheme.
If $\Q$ involves the protocol roles $R_1, \ldots, R_k$, then
we define $\ever$ as follows:
\[\ever(\eta,tb,\vec{n}) = \epsilon^\ufcma_{S}(\eta, tb + \delta, q),\]
where $q = \Sigma_{i=1}^{k}\; (n_i \times q_i)$ is a bound on
the total number of signing actions on a trace, $n_i$ is the number of
instances of $R_i$ in $\Q$, $q_i$ is the number of signing actions in role $R_i$,
and $\delta$ is the reduction overhead for the axiom that can be bounded
by a polynomial that is $O(q_i^2\eta)$.\footnote{The concrete reduction
for \axname{VER} is given in Algorithm~\ref{fig:ver-instr} in Section~\ref{sec:soundness}. The overhead $\delta$ is
the concrete runtime for the reduction.}

The next axiom captures the hardness of predicting random nonces generated
using a pseudorandom-number generator (PRG). It states that if thread with identifier $X$
generated nonce $n$ and sent it out for the first time in message $m$, then if
$n$ is computable from a message $m'$ received by thread $Y$, then that receive
action must have occurred after the send action by thread $X$.

\begin{description}
\item[{\axname{FS2}}.]
  $\begin{array}{l}
\QDentails{\B^{\efstwo} (\pred{FirstSend} (X, n, m) \land \pred{Receive} (Y, m') \\
 \quad \quad \quad \quad \quad \land \pred{Contains} (n, m') \\
 \quad \quad \quad \quad \quad \quad \Rightarrow \pred{Send} (X, m) < \pred{Receive} (Y, m'))}.
\end{array}
$
\end{description}
FS2 is sound for a protocol $\mathcal{Q}$ if it is
\emph{nonce-preserving}, which  is the case if every program in $\Q$
satisfies the following two (syntactic) conditions:
\begin{enumerate}
  \item Fresh nonces (nonces not sent out on the network) should not be
    contained in the input to the $\irl{verify}$ action.
  \item Once a message containing a fresh nonce has been
    signed, no further messages are signed by the program until that
    message is sent on the     network.
\end{enumerate}

As with \axname{VER}, \axname{FS2} is sound only if the bound
$\efstwo$ is chosen appropriately.
The appropriate choice of  bound is closely related to
$\epsilon^\prg_P$, the security bound
of the pseudorandom-number generator $P$ specified by protocol $\Q$.
We define $\efstwo$  as follows:
\[\efstwo(\eta,tb,\vec{n}) = \epsilon^\prg_{P}(\eta, tb + \delta, q_n)
+ \eta\times q\times2^{-\eta}, \]
where $q_n$ is the total number of $\irl{new}$ actions on a trace, $q
= n/\eta*q_r$ is an upper bound on
the number of nonces that the adversary can send to a protocol thread
on a $\irl{receive}$,
$n$ is the length of the longest message that can be received,
$q_r$ is the number of receive actions in a trace, and $\delta$ is the reduction
overhead for the axiom and can be bounded by a  polynomial that is
$O(q_r\eta^2)$.\footnote{Similar to the above, the concrete reduction
for \axname{FS2} is  given in Algorithm ~\ref{fig:fs-instr} in Section~\ref{sec:soundness}.}

\paragraph*{Proof rule for protocol invariants}
To give the proof rules for invariants, we need a little notation.
$IS(\Q)$
(read as the \emph{initial segments} of $\Q$) denotes the set of all prefixes of
each of the roles in $\Q$. Since, a protocol is composed of only a finite number
of roles, the proof rule has only a finite number of antecedents.

\begin{description}
\item[\axname{HON}.]\ \
  $\begin{array}{c}\QDentails{\forall I.(\pred{Start}(I) \Rightarrow \varphi)} \\ \forall P \in IS(\mathcal{Q}).\quad  \QDentails{\forall I.(\pred{Start}(I)[P]_I\varphi)} \\ \hline {\QDentails{\varphi}} \end{array}$
\end{description}

\noindent Informally, this rule states that if a property $\varphi$ holds at the
beginning of a trace and is preserved by every protocol program, then $\varphi$
holds at the end of a trace.

\paragraph*{Proof rules for modal formulas}
We have the following rule and axiom for reasoning about modal
formulas:
\begin{description}
\item[\axname{PC}$_{\uparrow}$.] $\inferrule{ \Dentails \B^\epsilon
  \varphi }{ \Dentails \B^{\epsilon} (\theta [P]_I \varphi) }$
\item[\axname{PC}$_{\Rightarrow}$.] $\Dentails ((\theta
[P]_X (\varphi \Rightarrow \psi)) \, \wedge  \, (\theta [P]_X
\varphi)) \Rightarrow (\theta [P]_X \psi)$.
\end{description}

\medskip

The variant of \axname{PC}$_{\uparrow}$ that does not mention belief---from
$\phi$ infer $\theta [P]_I \phi$---is also sound and, indeed, can be
derived from \axname{PC}$_{\uparrow}$ and the rules for reasoning about
beliefs given below.  We seem to need the stronger version to prove results
about protocols of interest.  We remark that the axiom
$(\B^{\epsilon_1} (\theta
[P]_I (\varphi \Rightarrow \psi)) \wedge  \B^{\epsilon_2} (\theta [P]_I
\varphi)) \Rightarrow \B^{\epsilon_1 + \epsilon_2} (\theta [P]_I \psi)$,
which we actually use in our proof, can be
derived easily from \axname{PC}$_{\Rightarrow}$ and the axioms
for belief given below.

We have the following rules for Hoare-Style reasoning:

\begin{description}
\item[\axname{G1}.] $\inferrule{ \Dentails \theta[P]_X\varphi_1 \quad \Dentails\theta[P]_X\varphi_2 }{ \Dentails \theta[P]_X\varphi_1\land\varphi_2}$.
\item[\axname{G2}.] $\inferrule{ \Dentails \theta_1[P]_X\varphi \quad \Dentails\theta_2[P]_X\varphi }{ \Dentails \theta_1\lor\theta_2[P]_X\varphi}$.
\item[\axname{G3}.] $\inferrule{ \Dentails \theta\Rightarrow\theta' \quad \Dentails\theta'[P]_X\varphi' \quad \Dentails \varphi' \Rightarrow \varphi }{ \Dentails \theta[P]_X\varphi}$.
\end{description}

We also have the following rule for sequential composition:

\begin{description}
\item[\axname{S1}.] $\inferrule{ \Dentails \varphi_1[P]_X\varphi_2 \quad \Dentails\varphi_2[P']_X\varphi_3 }{ \Dentails \varphi_1[PP']_X\varphi_3}$.
\end{description}

\paragraph*{First-order reasoning about beliefs}
Our proof system includes a complete proof system for first-order belief logic
that allows us to prove formulas of the form $\B^{\epsilon} \varphi$ assuming a
set of formulas of the form $\B^{\epsilon'} \psi$.
The axioms and inference rules are just specializations of more
general axioms and rules given by Halpern \citeyear{Hal37}.
Here is (a slightly simplified version of) the proof system\footnote{The more general
versions of axioms B1-3 are presented in Appendix~\ref{section:conditionallogic}}:

\noindent \axname{B1}.
$\inferrule{\Dentails{\varphi \Rightarrow \phi'}}
{\Dentails{\B^{\epsilon} \varphi \Rightarrow \B^{\epsilon} \phi'}}$.

\noindent \axname{B2}. \ $\Dentails{ \B^0 (\true) }$

\noindent \axname{B3}. \ $\Dentails{\B^{\epsilon} \varphi \land \B^{\epsilon'} \varphi'
\Rightarrow B^{\epsilon + \epsilon'} (\varphi \land \varphi')}$

\smallskip

The next axiom is sound in our setting, where all traces get positive
probability.  This means if $\phi$ holds with probability 1, then it
must hold on all traces, since no trace has probability 0.
\smallskip

\noindent \axname{B4}. \ $\Dentails{\B^0 (\phi) \Rightarrow \phi}$.

\medskip

We include all first order tautologies as axioms in our proof system.

\subsection{Soundness}
\label{sec:soundness}

We now state our soundness theorem.
$\Qentails \psi$
 denotes that $\psi$
is provable using instances of the axiom and
proof rule schemas for protocol $\mathcal{Q}$ (that is, all rules of
the form $\Qentails \ldots$ or $\entails \ldots$).

\begin{theorem} (Soundness)
If $\Qentails{\psi}$
then $\Delta\Qmodels  \psi$.
\end{theorem}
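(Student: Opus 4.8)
The plan is to prove soundness by induction on the structure of the QPCL derivation of $\psi$, showing that each axiom scheme is semantically valid (with the stated concrete bound) and that each inference rule preserves validity while combining the concrete bounds as specified. Concretely, I would establish: for every adversary $\mathcal{A}$ and valuation $\lambda$, if $\T_{\mathcal{Q},\mathcal{A}},\lambda \models \Delta$, then $\T_{\mathcal{Q},\mathcal{A}},\lambda \models \psi$. The base cases split into two groups. The first-order/belief axioms \axname{B1}–\axname{B4}, the modal axioms \axname{PC}$_{\Rightarrow}$, \axname{G1}–\axname{G3}, \axname{S1}, and \axname{PC}$_{\uparrow}$, and all first-order tautologies are sound for \emph{every} protocol; their soundness is purely semantic bookkeeping on the trace/tree semantics of Section~\ref{sec:semantics} — for instance \axname{B3} is exactly the union bound on the complements $\Sem{\neg\varphi}^\lambda(\tree)$ and $\Sem{\neg\varphi'}^\lambda(\tree)$, and \axname{B4} uses the fact (noted after the semantics) that every trace in $\T(\eta,tb,\vec{n})$ has strictly positive probability, so a set of measure $1$ is the whole tree. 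The second group, the cryptographic axioms \axname{VER} and \axname{FS2}, are the substantive content: their soundness is deferred to the exact-security reductions of Appendix~\ref{sec:soundness}, which I would invoke as black boxes here — each reduction shows that whenever the axiom's formula fails on a set of traces of probability exceeding the stated $\ever$ (resp.\ $\efstwo$), one can, with only polynomial reduction overhead $\delta$, construct an adversary against the signature scheme $S$ (resp.\ the PRG $P$) that succeeds with probability exceeding $\epsilon^\ufcma_S(\eta,tb+\delta,q)$ (resp.\ exceeding $\epsilon^\prg_P(\eta,tb+\delta,q_n)$ after accounting for the additive $\eta\cdot q\cdot 2^{-\eta}$ birthday term), contradicting the concrete security of the primitive; here one uses the syntactic side conditions (one instance of \axname{VER} per constant principal \agent{B}; nonce-preservation of $\Q$ for \axname{FS2}) exactly as stated.

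For the inductive step I would treat each inference rule. The belief rule \axname{B1} and its modal companion \axname{PC}$_{\uparrow}$ require that from $\Dentails \varphi \Rightarrow \varphi'$ one may pass inside $\B^\epsilon(\cdot)$ and inside $\theta[P]_I(\cdot)$; semantically this is monotonicity of $\Sem{\cdot}^\lambda$ under trace-set inclusion, so the measure bound is preserved verbatim. The conjunction-style rules \axname{G1} and \axname{B3} add bounds by the union bound; \axname{G2} and \axname{S1} (and \axname{G3}) require no bound arithmetic at all, only that the matching relation $\matches{\trace,\lambda}{[P]_I}{\trace_b}{\trace_e}{\lambda'}$ behaves correctly under disjunction of preconditions and concatenation of programs, which one reads off the inductive definition of matching in Section~\ref{sec:semantics}. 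For the invariant rule \axname{HON}, I would argue that if $\varphi$ holds on every empty-prefix trace and is preserved by matching any $P \in IS(\Q)$, then by induction on the length of a trace $\trace \in \T_{\mathcal{Q},\mathcal{A}}(\eta,tb,\vec{n})$ — peeling off labeled transitions and observing that each completed action of each honest thread corresponds to extending a match against some initial segment in $IS(\Q)$ — $\varphi$ holds at the end of $\trace$; since $\varphi$ is a safety property this suffices, and the restriction of QPCL to safety properties (flagged in the introduction and Section~\ref{sec:features}) is precisely what makes "$\varphi$ at every point" follow from "$\varphi$ at the end," which is the direction we actually need.

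The main obstacle I anticipate is the \axname{HON} case together with the interaction between the modal operator's matching semantics and the concurrent, adversarially scheduled operational semantics: one must verify that every action a thread performs in an arbitrary interleaved trace can be accounted for as a step along \emph{some} prefix of that thread's role (so that no honest behavior escapes the invariant), and that the definition of $\matches{\cdot}{\cdot}{\cdot}{\cdot}{\cdot}$ correctly threads valuations through the $\trace = \trace_1;\trace_2;\trace_3$ decomposition even when $\trace_3$ contains further actions of the same thread or of other threads. A secondary subtlety is ensuring that the \emph{function-valued} bounds compose correctly: since provable bounds are restricted to $\epsilon ::= 0 \mid \ever \mid \efstwo \mid \epsilon_1+\epsilon_2$, I must check that every rule application produces a bound within this class (it does, because only \axname{B3}/\axname{G1} introduce addition and the cryptographic axioms introduce the atoms), so the final $\B^\epsilon\varphi$ in any proof carries a bound that is a finite nonnegative integer combination of $\ever$ and $\efstwo$, and the semantic bound $1-\epsilon(\vec{n},\eta,tb)$ claimed in the theorem statement is exactly the one accumulated. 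Everything else is routine verification against the definitions in Sections~\ref{sec:lang} and~\ref{sec:qpcl}.
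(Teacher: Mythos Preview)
Your overall architecture---induction on the derivation, with cryptographic axioms handled by the reductions and the rest by semantic bookkeeping---matches the paper. But there is a genuine gap in your treatment of \axname{PC}$_{\uparrow}$.

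You describe \axname{PC}$_{\uparrow}$ as the ``modal companion'' of \axname{B1}, letting one pass an implication $\varphi \Rightarrow \varphi'$ inside $\theta[P]_I(\cdot)$ by monotonicity. That is not what the rule does. \axname{PC}$_{\uparrow}$ says: from $\B^\epsilon\varphi$ infer $\B^\epsilon(\theta[P]_I\varphi)$. Semantically, you must show that the set of traces $\trace$ on which $\theta[P]_I\varphi$ fails is contained in the set on which $\varphi$ fails. Unfolding the semantics of the modal formula, $\theta[P]_I\varphi$ fails on $\trace$ when there exist $\trace_1 \preceq \trace_2 \preceq \trace$ and $\lambda'$ with $\matches{\trace,\lambda}{[P]_I}{\trace_1}{\trace_2}{\lambda'}$, $\trace_1,\lambda \models \theta$, and $\trace_2,\lambda' \not\models \varphi$. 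To conclude $\trace,\lambda \not\models \varphi$ from $\trace_2,\lambda \not\models \varphi$ you need precisely that $\varphi$ is a \emph{safety property}. This is not automatic from the syntax; the paper proves it as a separate lemma (Lemma~\ref{lemma:safety}): for every $\psi$ with $\Qentails\psi$, the erasure $\erase{\psi}$ is a safety property, by a second induction on the derivation. Without that lemma your \axname{PC}$_{\uparrow}$ case does not go through, and ``monotonicity of $\Sem{\cdot}^\lambda$ under trace-set inclusion'' is not the relevant mechanism.

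Relatedly, you locate the need for safety inside the \axname{HON} case, but the paper's proof of \axname{HON} does not invoke safety at all: it is a direct induction on trace length, using small lemmas that non-action transitions (\textsc{abort}, \textsc{wait}, \textsc{switch}) preserve truth of basic formulas. The safety restriction is doing its real work in \axname{PC}$_{\uparrow}$ (and implicitly in the simulation-based reductions for \axname{VER} and \axname{FS2}, via Lemma~\ref{lemma:safety-prefix}), not in \axname{HON}.
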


We first present definitions and lemmas required
to prove the soundness of QPCL's proofsystem,
and then prove the soundness of the axioms about
cryptographic primitives as illustrative cases. Appendix~\ref{sec:soundness-app} contains proofs of other axioms.

Each property that is provable using QPCL's proof system is
a safety property in the following sense.

\begin{definition}
  A basic formula $\varphi$ is called a \emph{safety property} if
  $\trace, \lambda \not\models \varphi$ implies that for all traces
  $\trace'$ such that $\trace \preceq \trace'$, $\trace', \lambda$ and $
  \not\models \varphi$.
\end{definition}

Safety ensures that properties which are false on a trace continue to be false.
This allows us the bound the probability of a formula being false in the
middle of an execution, by the probability at the end of the execution.

Since safety applies only to trace properties and not to formulas with belief
operators, we convert probabilistic assertions to basic formulas using an
erasure operation $\erase{\cdot}$.

\begin{definition}[Erasure]
  The erasure operation $\erase{\psi}$ is defined by induction on $\psi$ as follows:
      \begin{itemize}
        \item $\erase{\B^\epsilon(\varphi)} = \varphi$.
        \item $\erase{\psi_1 \land \psi_2} = \erase{\psi_1} \land \erase{\psi_2}$.
        \item $\erase{\neg\psi} = \neg\erase{\psi}$.
        \item $\erase{\forall x.\psi} = \forall x.\erase{\psi}$.
        \item $\erase{\varphi} = \varphi$.
      \end{itemize}
\end{definition}

\begin{lemma}[Safety]
\label{lemma:safety}
If $\Qentails \psi$, then $\erase{\psi}$ is a safety property.
\end{lemma}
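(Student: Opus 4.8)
The plan is to prove the lemma by induction on the derivation of $\Qentails\psi$: for each axiom schema we check that its erasure is a safety property, and for each proof rule we check that if the erasures of the premises are safety properties, so is the erasure of the conclusion. (We may assume any hypotheses in $\Delta$, such as $\pred{Honest}(\agent{B})$, are themselves safety properties; in fact $\pred{Honest}(\agent{B})$ is determined by the initial configuration and so has the same truth value on every trace of a given model, hence is trivially safety.) Two elementary observations carry most of the cases. First, for a fixed \valuation $\lambda$ the safety condition on $\varphi$ is exactly that $\{\trace : \trace,\lambda\models\varphi\}$ is closed under taking prefixes; such sets are closed under arbitrary unions and intersections, so $\varphi\land\varphi'$, $\varphi\lor\varphi'$, $\forall x.\varphi$ and $\exists x.\varphi$ are safety whenever their immediate subformulas are. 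Second, \emph{every} modal formula $\theta[P]_I\varphi$ is a safety property, for any $\theta$ and $\varphi$: if $\trace,\lambda\not\models\theta[P]_I\varphi$, there is a split $\trace=\trace_1;\trace_2;\trace_3$ witnessing the failure (with $\trace_2$ matching $[P]_I$, $\trace_1,\lambda\models\theta$, and $\trace_1;\trace_2,\lambda\not\models\varphi$), and for any $\trace'=\trace;\trace''$ the split $\trace'=\trace_1;\trace_2;(\trace_3;\trace'')$ still witnesses the failure, since matching and the evaluation of $\theta$ and $\varphi$ only depend on $\trace_1$ and $\trace_1;\trace_2$, which are unchanged.

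Given these, the bulk of the proof is routine. Rule \axname{B1} has the same erasure in premise and conclusion, so it trivially preserves safety. Axioms \axname{B2}, \axname{B3}, \axname{B4} and every first-order-tautology axiom erase to substitution instances of first-order validities, hence hold on all traces and are vacuously safety; the same applies to the derived belief/modal axioms mentioned in the text (e.g. $\B^{\epsilon_1}(\theta[P]_X(\varphi\Rightarrow\psi))\land\B^{\epsilon_2}(\theta[P]_X\varphi)\Rightarrow\B^{\epsilon_1+\epsilon_2}(\theta[P]_X\psi)$), whose erasures are propositional tautologies. Axiom \axname{PC}$_\Rightarrow$ and rules \axname{PC}$_\uparrow$, \axname{G1}--\axname{G3}, \axname{S1} produce modal formulas (or beliefs in modal formulas), so their erasures are safety by the second observation alone --- \axname{PC}$_\uparrow$ needs nothing from its premise. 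For rule \axname{HON}, the conclusion is $\varphi$; one uses that each $\pred{Start}(I)[P]_I\varphi$ premise is safety (again by the second observation), that $\pred{Start}(I)$ is itself prefix-closed, and closure under $\forall$ and $\Rightarrow$ of the guarded premise $\forall I.(\pred{Start}(I)\Rightarrow\varphi)$, to conclude $\varphi$ is safety.

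The cases requiring genuine work are the cryptographic axioms \axname{VER} and \axname{FS2}, whose erasures are deliberately \emph{not} valid. I will treat \axname{VER} as the representative case; \axname{FS2} is analogous. For fixed $\lambda$, its erasure is $\pred{Verify}(Y,s,m,\agent{B})\Rightarrow\exists\iota.\,\pred{Sign}(\<\agent{B},\iota\>,s,m)<\pred{Verify}(Y,s,m,\agent{B})$. Suppose it fails on $\trace$ under $\lambda$: then the $\irl{verify}$ label for $Y,s,m,\agent{B}$ occurs in $\trace$, but for no $\iota$ does a matching $\irl{sign}$ label occur earlier in $\trace$. For any $\trace'\succeq\trace$, the $\irl{verify}$ label still occurs in $\trace'$, and its position relative to all labels already present in $\trace$ is unchanged (new labels only appear afterwards), so there is still no matching $\irl{sign}$ label before it in $\trace'$; hence the formula fails on $\trace'$. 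The two facts exploited here --- action occurrence is monotone along $\preceq$, and the relative order of two labels already present in $\trace$ is preserved in every extension --- are exactly what is needed for each atom occurring in \axname{VER} and \axname{FS2}, and this is why the delicate correlation between the two occurrences of $\pred{Verify}$ (the consequent only asserts the existence of a $\pred{Sign}$ event \emph{before} the $\pred{Verify}$ event) makes the whole implication safety even though $\pred{Sign}<\pred{Verify}$ by itself is not.

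The step I expect to be the main obstacle is making the treatment of \emph{general first-order reasoning within the belief logic} airtight: modus ponens and the quantifier rules must never take us outside the safety fragment, even though safety is \emph{not} closed under negation or under arbitrary implication. The resolution is that the only non-valid formulas the system can introduce at all are \axname{VER}, \axname{FS2}, and formulas obtained by wrapping arbitrary basic formulas inside a modal operator (via \axname{PC}$_\uparrow$ and the Hoare-style rules); the first two have safety erasures by the persistence argument above, the last are safety by the second observation, and every propositional or quantifier step thereafter either preserves the erasure verbatim or combines already-safety erasures using only $\land$, $\lor$, $\forall$, $\exists$ and valid implications, which by the first observation keeps us inside the safety fragment. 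Turning this informal description into a clean induction amounts to formulating an auxiliary syntactic invariant on derivations --- tracking which atoms may occur, and with which polarity, in $\erase{\psi}$ --- and verifying that the cryptographic axioms and all of the rules respect it; this bookkeeping, rather than any single hard argument, is the technical heart of the proof.
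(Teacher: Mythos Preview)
Your approach is essentially the same as the paper's: induction on the derivation of $\Qentails\psi$, with the cryptographic axioms (\axname{VER}, \axname{FS2}) handled by a direct persistence argument and the modal cases handled by observing that the witnessing split $\trace_1,\trace_2$ survives extension of $\trace$. The paper only spells out three cases (\axname{B1}, \axname{PC}$_\uparrow$, \axname{FS2}) and leaves the rest to the reader; you go further and organize the routine cases around two clean observations. One point where you are actually sharper than the paper: you note that \emph{every} modal formula $\theta[P]_I\varphi$ is safety regardless of $\varphi$, whereas the paper's \axname{PC}$_\uparrow$ case invokes the induction hypothesis on $\varphi$ but never uses it --- your argument and theirs are identical once you strip that unused assumption. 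Your final paragraph flagging the need for an auxiliary syntactic invariant to handle arbitrary first-order steps is a difficulty the paper simply does not address; the paper's proof is at the same (or lower) level of completeness as yours on that point. Your \axname{HON} sketch is the weakest part --- deducing that $\varphi$ itself is safety from safety of $\forall I.(\pred{Start}(I)\Rightarrow\varphi)$ is not the triviality you suggest --- but the paper does not treat \axname{HON} either.
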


\begin{proof}
By induction on the derivation of $\Qentails \psi$. We consider three
cases here, and leave the remaining cases to the reader:
\begin{description}
  \item[\axname{B1}.] Assume by the induction hypothesis that $\varphi
    \rimp \varphi'$ is a safety property.
    Then it is immediate that $\erase{\B^\epsilon(\varphi) \rimp
      \B^\epsilon(\varphi')}$ is
a safety property.

  \item[\axname{PC}$_{\uparrow}$.] Assuming, by the induction hypothesis,
    that $\varphi$ is a safety property, we must
    show that $ \theta[P]_I\varphi$ is a safety property. Assume
    that for some $\trace$, $\lambda$, it is the case that
    $\trace, \lambda \not\models \theta[P]_I\varphi$. By the
    semantics of the modal formula,
    we know that there exists $\tra_1$, $\tra_2$, $\lambda'$ such that $(\matches{\trace, \lambda}{[P]_I}{\tra_1}{\tra_2}{\lambda'})$,
    $\trace_1, \lambda \models \theta$, and $\tra_2, \lambda' \not\models \varphi$. For
    all traces $\trace'$ such that $\trace \preceq \trace'$, we must have
    that $\matches{\trace',
      \lambda}{[P]_I}{\trace_1}{\trace_2}{\lambda'}$. Therefore,
    $\trace', \lambda \not\models \theta[P]_I\varphi$.

  \item[\axname{FS2}.] We need to show that $\varphi_{\axname{FS2}} =
    \pred{FirstSend}(X, n, m) \land \pred{Receive}(Y, m') \rimp
    \pred{Send}(X, m) < \pred{Receive}(Y, m')$ is a safety
    property. Assume that for some trace $\trace$, $\lambda$,
    it is the case that $\trace, \lambda \not\models \varphi_{\axname{FS2}}$.
    Therefore, it must be the case that $(\lamhat(Y), \irl{receive},
    \lamhat(m')) \in \trace$ and that $(\lamhat(X), \irl{sent}~\lamhat(m),
    \triv)$ does not appear in $\trace$ before $(\lamhat(Y), \irl{receive},
   \lamhat(m')) \in \trace$. Therefore, the same must be true for all
    $\trace'$ such that $\trace \preceq \trace'$.
\end{description}
\end{proof}

\paragraph*{Soundness of the axioms about cryptographic primitives}
We prove axioms about cryptographic primitives sound by reducing the validity
of the axiom to the security of the primitive. Such proofs require augmenting
the implementations of primitives with a program that monitors the validity of
the axiom on the trace. When a violation of the axiom is detected, it can be
transformed into a violation of the primitive's security.  As every attack on
the axiom can be transformed in to an attack on the primitive, the probability
of the primitive being broken is a bound on the probability that the axiom is
false. However, we need to account for the additional time required to monitor
the validity of the axiom while computing the probability of attack on the
primitive.

In order to prove the soundness of \axname{VER}, we need to provide a
formal definition of execution tree, so that we can do an induction on
the structure of the execution tree.

We define a prefix relation on trees. Intuitively, a tree $\tree$
is a prefix of $\tree'$ if every trace in $\tree$ is a prefix of $\tree'$.

\begin{definition}[Tree Prefix]
A tree $\tree$ is a \emph{prefix} of $\tree'$, denoted $\tree \preceq \tree'$, if
one of the following conditions hold:
\begin{enumerate}
  \item $\tree = \tree'$,
  \item $\tree$ is a configuration $\conf$ and $\tree' = (\conf \rightarrow \tree_1 ~|~\cdots~|~\conf \rightarrow \tree_k)$\footnote{
   Cost and probability labels are elided here for brevity. },
  \item $\tree = (\conf \rightarrow \tree_1 ~|~\cdots~|~\conf
    \rightarrow \tree_k)$,
$\tree' = (\conf \rightarrow \tree'_1 ~|~\cdots~|~\conf \rightarrow \tree'_k)$,
    and $\tree_1 \preceq \tree'_1 \cdots \tree_k \preceq \tree'_k$.
\end{enumerate}
\end{definition}

We now present an equivalent definition of the probability of a
formula interpreted on a tree. The function $\Pr(\varphi, \trace, \tree, \lambda)$
denotes the probability of a formula $\varphi$ in the subtree
$\tree$ that follows after trace $\trace$, with \valuations $\lambda$. Figure
$~\ref{fig:prob}$ shows a trace $\trace$ followed by a subtree $\B_{\trace}$
of $\tree$. Since QPCL formulas depend on an entire trace, it is not enough to
define the probability of a formula with respect to a subtree, but to include
the history of events that ocurred before the subtree as illustrated in
Figure~\ref{fig:prob}. In the definitions below, $\trace \rightarrow \conf$
denotes the the trace obtained by extending $\trace$ with the transition
$\conf' \rightarrow \conf$, where $\conf'$ is the last configuration in $\trace$.
The empty trace is denoted by $\cdot$, and $\cdot \rightarrow \conf$ is the
trace $\conf$, containing only a single configuration and no transitions.

\begin{figure}
  \centering
		\includegraphics[width=0.4\textwidth]{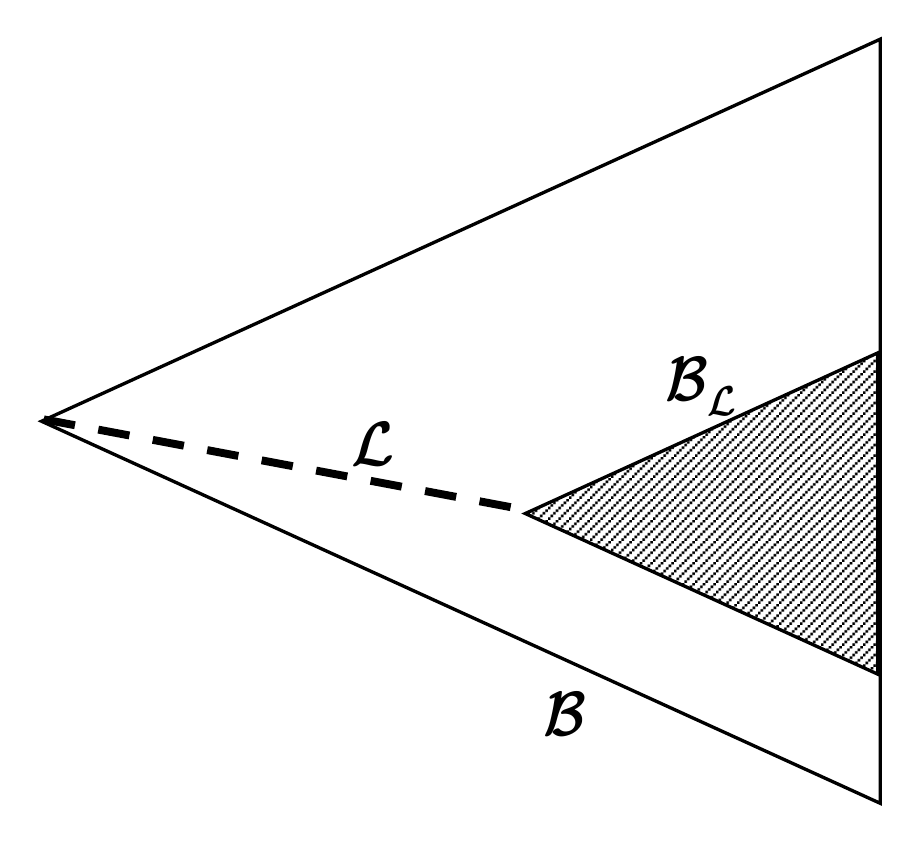}
        \caption{An Execution Tree with a trace $\tra$ and subtree $\tree_\tra$}
   \label{fig:prob}
 \end{figure}

\begin{definition}[Probability]
$\Pr(\varphi, \trace, \tree, \lambda)$ is defined inductively on $\tree$
as follows:
\begin{enumerate}
\item $\Pr(\varphi, \trace, \conf, \lambda) = 1$ if $\trace \rightarrow \conf,\lambda \models \varphi$,
\item $\Pr(\varphi, \trace, \conf, \lambda) = 0$ if $\trace\rightarrow \conf ,\lambda \not\models \varphi$,
\item $\Pr(\varphi, \trace, \tree, \lambda) = \sum_{i=0}^k p_k \Pr(\varphi, \trace\rightarrow \conf, \tree_i, \lambda)$
   if $\tree = (\conf \xrightarrow{p_1, c_1} \tree_1 ~|~\cdots~|~\conf
   \xrightarrow{p_k,c_k} \tree_k)$.
\end{enumerate}
\end{definition}

The probability of a formula $\varphi$ interpreted on a tree $\tree$
preceded by
 the empty trace, and using
\valuation $\lambda$ is $\Pr(\varphi, \cdot, \tree, \lambda)$.
The following lemma states
 that the two definitions of probability are equivalent.
\begin{lemma}
\label{lemma:ind-equiv}
For all \valuations $\lambda$,
$\measure{[[\varphi]]^\lambda(\tree)} = \Pr(\varphi, \cdot, \tree, \lambda)$.
\end{lemma}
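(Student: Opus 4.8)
The plan is to prove Lemma~\ref{lemma:ind-equiv} by induction on the structure of the tree $\tree$, but the key realization is that the naive induction does not go through directly: the statement as phrased concerns the probability computed from the root with the empty history $\cdot$, whereas the recursive clause of $\Pr$ in Definition (Probability) passes a growing prefix $\trace$ into the subtrees. So first I would prove a more general claim by induction on $\tree$, namely that for every trace $\trace$ (thought of as a history ending in the root configuration of $\tree$) and every valuation $\lambda$,
\[
\sum_{\sigma \in \tree,\ \trace;\sigma,\ \lambda \models \varphi} \measure{\sigma}
\;=\; \Pr(\varphi, \trace, \tree, \lambda),
\]
where $\trace;\sigma$ denotes the concatenation of the history $\trace$ with a path $\sigma$ through $\tree$, and the sum on the left ranges over those paths $\sigma$ in $\tree$ for which the concatenated trace satisfies $\varphi$ under $\lambda$. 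The desired lemma is then the special case $\trace = \cdot$, using that $\Sem{\varphi}^\lambda(\tree) = \{\sigma \in \tree \mid \sigma, \lambda \models \varphi\}$ by the definition of $\Sem{\varphi}$ given in the semantics section, together with the definition $\measure{\tree} = \sum_{\sigma \in \tree} \measure{\sigma}$ extended to the filtered subtree.

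**Next I would carry out the induction.** In the base case $\tree = \conf$ is a single configuration: the only path is the empty path (just $\conf$), so $\trace;\sigma$ is $\trace \rightarrow \conf$, and its probability as a trace is $1$ (the empty product, or more precisely $\measure{\trace \to \conf}$ relative to the subtree which is the trivial one-node tree — here one checks the local normalization that the single node contributes weight $1$). Clauses (1)--(2) of the definition of $\Pr$ then give exactly $1$ or $0$ according to whether $\trace \rightarrow \conf, \lambda \models \varphi$, matching the left-hand side. For the inductive step, $\tree = (\conf \xrightarrow{p_1,c_1} \tree_1 \mid \cdots \mid \conf \xrightarrow{p_k,c_k} \tree_k)$. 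Every path $\sigma$ through $\tree$ decomposes uniquely as the initial transition $\conf \xrightarrow{p_i,c_i} \conf_i$ followed by a path $\sigma'$ through $\tree_i$, for exactly one $i$; and $\measure{\sigma} = p_i \cdot \measure{\sigma'}$ by the definition of $p(\trace)$ as the product of transition probabilities. Grouping the sum on the left by which child $\tree_i$ the path enters,
\[
\sum_{\sigma \in \tree,\ \trace;\sigma \models \varphi} \measure{\sigma}
= \sum_{i=1}^{k} p_i \sum_{\sigma' \in \tree_i,\ (\trace \to \conf);\sigma' \models \varphi} \measure{\sigma'}
= \sum_{i=1}^{k} p_i \,\Pr(\varphi, \trace \to \conf, \tree_i, \lambda),
\]
where the last equality is the induction hypothesis applied to each $\tree_i$ with history $\trace \to \conf$. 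This is precisely clause (3) of the definition of $\Pr$, completing the induction.

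**The main obstacle** I anticipate is bookkeeping around what "a path in $\tree$ extended by a history $\trace$" means as an object on which basic formulas are evaluated — in particular making sure that $\trace;\sigma$ is a well-formed trace (the last configuration of $\trace$ must be the root of $\tree$) and that the satisfaction relation $\trace;\sigma, \lambda \models \varphi$ is unambiguous. Since $\varphi$ is a basic formula interpreted on traces, and the semantics of action predicates, ordering, $\pred{Start}$, etc., depend only on the labeled transitions appearing in the trace, the concatenation is well-behaved: a labeled transition is in $\trace;\sigma$ iff it is in $\trace$ or in $\sigma$, and the ordering is the concatenated order. One should also note that $\varphi$ here need not be a safety property — the equivalence of the two probability definitions holds for arbitrary basic formulas — so Lemma~\ref{lemma:safety} is not needed; only the structural definitions of $p(\trace)$, of paths through trees, and of $\Sem{\varphi}$ are used. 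A minor point worth a sentence is the degenerate reindexing in clause (3) (the stated sum runs from $i=0$ to $k$ but there are $k$ children indexed $1$ through $k$); I would treat this as a typo and sum over $i = 1, \ldots, k$ throughout.
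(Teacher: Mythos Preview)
The paper does not actually supply a proof of this lemma; it is stated and then used without justification. Your structural induction on $\tree$, strengthened to the generalized statement carrying an arbitrary history $\trace$, is exactly the natural argument and is correct; the observation that the induction hypothesis must be generalized because clause~(3) of the definition of $\Pr$ extends the history is the only real point, and you have it. Your remark about the index typo in clause~(3) is also apt.
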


We also state a lemma that relates the probability of a safety
property on a tree $\tree$ to its probability on a prefix of $\tree$.
\begin{lemma}
\label{lemma:safety-prefix}
If $\varphi$ is a safety property and $\tree \preceq \tree'$,  then
for all $\lambda$,
$\measure{[[\varphi]]^\lambda(\tree)} \geq \measure{[[\varphi]]^\lambda(\tree')}$.
\end{lemma}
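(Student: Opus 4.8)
The plan is to reduce the claim to a statement about the inductively-defined probability $\Pr$ via Lemma~\ref{lemma:ind-equiv}, and to prove that statement by induction on the derivation of $\tree \preceq \tree'$. Specifically, I would prove the slightly stronger claim that
\[
\Pr(\varphi,\trace,\tree,\lambda) \;\geq\; \Pr(\varphi,\trace,\tree',\lambda)
\]
for \emph{every} history trace $\trace$ and every $\lambda$; quantifying over $\trace$ is essential, since the definition of $\Pr$ passes the extended history $\trace\rightarrow\conf$ to its recursive calls, so a statement phrased purely in terms of $\tree$ and $\tree'$ would not be preserved by the recursion. The lemma itself is then the case $\trace = \cdot$, together with Lemma~\ref{lemma:ind-equiv} applied to both sides.

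The induction follows the three clauses in the definition of tree prefix. The clause $\tree = \tree'$ is immediate. For the clause in which $\tree = (\conf \xrightarrow{p_1,c_1}\tree_1 \mid \cdots \mid \conf \xrightarrow{p_k,c_k}\tree_k)$ and $\tree' = (\conf \xrightarrow{p_1,c_1}\tree'_1 \mid \cdots \mid \conf \xrightarrow{p_k,c_k}\tree'_k)$ with $\tree_i \preceq \tree'_i$ for each $i$, unfolding the definition of $\Pr$ gives $\Pr(\varphi,\trace,\tree,\lambda) = \sum_i p_i \Pr(\varphi,\trace\rightarrow\conf,\tree_i,\lambda)$ and the analogous identity for $\tree'$; applying the induction hypothesis to each pair $\tree_i \preceq \tree'_i$ with history $\trace\rightarrow\conf$ and using $p_i \geq 0$ concludes this case.

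The substantive clause is $\tree = \conf$ and $\tree' = (\conf \xrightarrow{p_1,c_1}\tree'_1 \mid \cdots \mid \conf \xrightarrow{p_k,c_k}\tree'_k)$, where the tree genuinely grows and safety is used. If $\trace\rightarrow\conf,\lambda \models \varphi$, then $\Pr(\varphi,\trace,\conf,\lambda) = 1$, which dominates $\Pr(\varphi,\trace,\tree',\lambda) = \sum_i p_i \Pr(\varphi,\trace\rightarrow\conf,\tree'_i,\lambda)$ since each summand lies in $[0,1]$ and $\sum_i p_i \leq 1$. If instead $\trace\rightarrow\conf,\lambda \not\models \varphi$, then $\Pr(\varphi,\trace,\conf,\lambda) = 0$, and it suffices to show $\Pr(\varphi,\trace\rightarrow\conf,\tree'_i,\lambda) = 0$ for each $i$. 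For this I would prove an auxiliary fact by induction on an arbitrary tree $\mathcal{U}$: if $\varphi$ is a safety property and $\trace_0,\lambda \not\models \varphi$, then $\Pr(\varphi,\trace_0,\mathcal{U},\lambda) = 0$. The point is that every history threaded through the recursion on $\mathcal{U}$ has $\trace_0$ as a prefix, so by the definition of a safety property it again fails $\varphi$; hence every leaf of $\mathcal{U}$ contributes $0$. Instantiating this auxiliary fact with $\trace_0 = \trace\rightarrow\conf$ and $\mathcal{U} = \tree'_i$ gives the required equalities.

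I expect the main obstacle to be organizational rather than deep: stating the ``falsity propagates through the tree'' auxiliary fact at the right level of generality (over arbitrary history traces and arbitrary subtrees) so that it dovetails with the main induction, and keeping the history-threading in $\Pr$ consistent across the cases. Once these are in place, the remaining steps --- the convex-combination inequality and the final appeal to Lemma~\ref{lemma:ind-equiv} with $\trace = \cdot$ --- are routine, yielding $\measure{[[\varphi]]^\lambda(\tree)} = \Pr(\varphi,\cdot,\tree,\lambda) \geq \Pr(\varphi,\cdot,\tree',\lambda) = \measure{[[\varphi]]^\lambda(\tree')}$.
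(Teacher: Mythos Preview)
The paper states Lemma~\ref{lemma:safety-prefix} without proof, so there is no argument to compare against; your proposal is a correct and natural way to supply one. Reducing via Lemma~\ref{lemma:ind-equiv} to the inductive probability $\Pr$, generalizing over the history trace $\trace$, and doing induction on the derivation of $\tree \preceq \tree'$ is exactly the structure the definitions invite, and your isolation of the auxiliary fact (``if $\trace_0,\lambda \not\models \varphi$ then $\Pr(\varphi,\trace_0,\mathcal{U},\lambda)=0$ for every subtree $\mathcal{U}$'') cleanly captures the one place safety is actually used. One minor remark: in the growing-leaf case with $\trace\rightarrow\conf,\lambda \models \varphi$, the paper in fact requires $\sum_i p_i = 1$ (not merely $\leq 1$), so the bound $\Pr(\varphi,\trace,\tree',\lambda)\leq 1$ is immediate without any slack; this does not affect correctness.
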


In our proofs of the soundness of the axioms about cryptographic
primitives, we replace the implementations of some actions with
different implementations that preserve input-output behavior, in a
sense made precise by the following definition.

\begin{definition}[Simulation]
An implementation $\bar{A}$ \emph{simulates} an implementation $A$
with the relation $\sim$ on $S_A \times S_{\bar{A}}$
(where $S_B$ denotes the set of states of implementation $B$), if
(a) $\sigma_A^{\mathit{init}}\sim\sigma_{\bar{A}}^{\mathit{init}}$, where $\sigma_A^{\mathit{init}}$ and $\sigma_{\bar{A}}^{\mathit{init}}$
are the initial states of $A$ and $\bar{A}$ respectively and
(b) if $\sigma_A \sim \sigma_{\bar{A}}$, then for all
transitions $\sigma_A, A(t) \xmapsto[p]{c} \sigma_A',ra$ of $A$, there exists a
transition $\sigma_{\bar{A}},\bar{A}(t) \xmapsto[p]{c'}
\sigma_{\bar{A}}', ra$ of $\bar{A}$ such that $\sigma_A' \sim \sigma_{\bar{A}}'$.
If there exists a $b$ such that, for all transitions of $A$, if
$c$ is the cost of the transition
and $c'$ is the cost of the
corresponding transitions of $\bar{A}$, we have
$c \leq c' \leq c+b$,
 then $\bar{A}$ simulates $A$ \emph{with time bound $b$}.
\end{definition}
\noindent Intuitively, if $\bar{A}$ simulates $A$ with the relation $\sim$, then starting from
$\sim$-related states, the implementations $A$ and $\bar{A}$ exhibit the same
input-output behavior, but have possibly different costs.

We wish to lift simulations over action implementations to simulations over
setups, which are collections of action implementations.  However, for a setup
$\mathcal{S'}$ to simulate $\mathcal{S}$, it is not sufficient for
the corresponding implementations in $\mathcal{S}'$ and $\mathcal{S}$ to be
simulations.  Recall that different implementations may share state, and that
state sharing is governed by the label $\sttag{a}_I$ of an action $a$ in the thread
with identifier
$I$ in the global state. Therefore, we require implementations that share state
to respect each other's simulation relations.

\begin{definition}[Setup Simulation] A setup $\mathcal{S}'$ simulates a setup
$\mathcal{S}$ if there exists a set $\{\sim_{h_1}, \cdots,
\sim_{h_n}\}$ of relations, indexed by global state labels  $\{h_1,
\cdots, h_n\}$,  and each
$\impl{a}_I$ in $\mathcal{S'}$ simulates the corresponding implementation in
$\mathcal{S}$ with relation $\sim_h$, where $h = \sttag{a}_I$.
\end{definition}

We now show that if two setups $\mathcal{S}$ and $\mathcal{S}'$ differ only in
their action implementations, and the action implementations in $\mathcal{S}'$
simulate the corresponding implementations in $\mathcal{S}$, then for any safety property $\varphi$,
and assignment $\lambda$, the probability of $\varphi$ in the tree
generated by $\mathcal{S}$ is bounded by the probability of $\varphi$ in the
tree generated by $\mathcal{S}'$.

\begin{lemma}\label{lemma:simulation-bound}
If setup $\mathcal{S'}$ simulates
  setup $\mathcal{S}$, each implementation $\impl{a}_I$ in
  $\mathcal{S}'$ simulates the corresponding implementation in
  $\mathcal{S}$ with time bound $b_{a_I}$, and
  $\varphi$ is a safety property,
  then for all $\lambda$, $\measure{[[\varphi]]^\lambda(\mathcal{T}{(\mathcal{S}, \mathbf{t},
      \vec{n}, tb,
\eta)}}) \geq \measure{[[\varphi]]^\lambda(\mathcal{T}{(\mathcal{S'}, \mathbf{t},
      \vec{n},
tb + \Sigma_{a_I} n_{a_I} b_{a_I}, \eta)}})$, where $n_{a_I}$ is the
  number of times
implementation $a_I$ is called.
\end{lemma}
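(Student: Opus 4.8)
The plan is to exhibit $\mathcal{T}(\mathcal{S},\mathbf{t},\vec{n},tb,\eta)$ as a tree prefix of $\mathcal{T}(\mathcal{S'},\mathbf{t},\vec{n},tb+\Sigma_{a_I}n_{a_I}b_{a_I},\eta)$, up to a relabeling of transition costs and internal implementation states that does not affect the truth of any basic formula, and then to invoke Lemma~\ref{lemma:safety-prefix}.

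First I would lift the per-implementation simulations to a correspondence between the two executions. Using the family $\{\sim_{h}\}$ of relations supplied by the definition of setup simulation, call a configuration of the $\mathcal{S}$-execution \emph{related} to one of the $\mathcal{S'}$-execution when they have the same waiting threads and the same active program (as programs) and, for every global-state label $h$, the local state tagged $h$ on the left is $\sim_h$-related to the one on the right; the two initial configurations are related since $\sigma_A^{\mathit{init}}\sim\sigma_{\bar A}^{\mathit{init}}$ for each implementation. Proceeding by induction on the finite execution tree, I would check for each of the four transition rules of Table~\ref{tab:transitionrules} that from related configurations the active program and the action to run coincide, so --- term evaluation being deterministic --- the same input $v$ reaches the implementation; clause~(b) of the simulation definition then supplies a matching right-hand transition with the \emph{same} reaction (hence the same action label $(I,a(v),v')$ and the same effect on programs), the \emph{same} probability, and cost in $[c,\,c+b_{a_I}]$, landing in another related pair. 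Because transition probabilities sum to one on both sides, this correspondence is moreover a probability-preserving bijection of outgoing transitions, so no extra branches appear on the right. The key point is that the semantics of basic formulas in Section~\ref{sec:semantics}, including the matching relation for modal formulas, refers only to action labels and returned values, never to costs or internal states; hence a trace of the $\mathcal{S}$-execution and its image satisfy exactly the same basic formulas under every $\lambda$, $\varphi$ in particular.

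Next I would handle the truncation by the time bound. Along any sequence of reactions realized by the $\mathcal{S}$-execution, each invocation of $a_I$ costs at most $b_{a_I}$ more on the right, and $a_I$ is invoked at most $n_{a_I}$ times on any trace, so after every prefix the accumulated right-hand cost exceeds the left-hand one by at most $\Sigma_{a_I}n_{a_I}b_{a_I}$. Since a trace of $\mathcal{T}(\mathcal{S},\mathbf{t},\vec{n},tb,\eta)$ has total cost at most $tb$, its image has cost at most $tb+\Sigma_{a_I}n_{a_I}b_{a_I}$, and therefore the right-hand execution under the bound $tb+\Sigma_{a_I}n_{a_I}b_{a_I}$ is never truncated before reaching the endpoint of an $\mathcal{S}$-trace. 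Pruning $\mathcal{T}(\mathcal{S'},\mathbf{t},\vec{n},tb+\Sigma_{a_I}n_{a_I}b_{a_I},\eta)$ at the images of the leaves of $\mathcal{T}(\mathcal{S},\mathbf{t},\vec{n},tb,\eta)$ yields a tree $\tree$ that is a tree prefix of the former and is isomorphic --- preserving probabilities, action labels and returned values --- to $\mathcal{T}(\mathcal{S},\mathbf{t},\vec{n},tb,\eta)$.

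Finally, $\measure{[[\varphi]]^\lambda(\mathcal{T}(\mathcal{S},\mathbf{t},\vec{n},tb,\eta))}=\measure{[[\varphi]]^\lambda(\tree)}$ for all $\lambda$ by the previous paragraph, and since $\tree\preceq\mathcal{T}(\mathcal{S'},\mathbf{t},\vec{n},tb+\Sigma_{a_I}n_{a_I}b_{a_I},\eta)$ and $\varphi$ is a safety property, Lemma~\ref{lemma:safety-prefix} yields $\measure{[[\varphi]]^\lambda(\tree)}\geq\measure{[[\varphi]]^\lambda(\mathcal{T}(\mathcal{S'},\mathbf{t},\vec{n},tb+\Sigma_{a_I}n_{a_I}b_{a_I},\eta))}$, which is exactly the claim. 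I expect the main obstacle to be the first step: lifting the implementation-level simulation to whole configurations while respecting which implementations share state via the global-state labels $\sttag{a_I}$ (i.e.\ verifying that the whole family $\{\sim_h\}$ is preserved by every configuration transition), together with the per-call cost bookkeeping that pins the extra running time to exactly $\Sigma_{a_I}n_{a_I}b_{a_I}$ and the care needed because clause~(b) of the simulation definition is only one-directional.
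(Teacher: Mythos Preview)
Your proposal is correct and follows the same two-step strategy as the paper: establish that the $\mathcal{S}$-execution tree is (up to the relabeling you describe) a tree prefix of the $\mathcal{S'}$-execution tree with the enlarged time budget, then invoke Lemma~\ref{lemma:safety-prefix}. The paper's own proof is only a brief sketch of this; you have in fact been more careful than the paper, which asserts the prefix relation directly without noting that the two trees' configurations differ in internal implementation state and transition costs, a gap your intermediate tree $\tree$ and the observation that basic-formula semantics ignore costs and states are designed to close.
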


\begin{proof}
  We first show by induction on  $\mathcal{T}{(\mathcal{S}, \mathbf{t}, \vec{n}, tb,
\eta)}$ that $\mathcal{T}{(\mathcal{S}, \mathbf{t}, \vec{n}, tb,
\eta)} \preceq \mathcal{T}{(\mathcal{S'}, \mathbf{t}, \vec{n},
tb + \Sigma_{a_I} n_{a_I} b_{a_I}, \eta)}$. Essentially, we can show
that
each trace using the implementations in $\mathcal{S'}$ makes
transitions identical to the
  trace using the implementations in $\mathcal{S}$, and
takes no more time than the extra time allotted ($\Sigma_{a_I} n_{a_I}
b_{a_I}$).
We then apply Lemma~\ref{lemma:safety-prefix}.
\end{proof}

In the theorem above, we call the extra time allocated for
the simulation to execute
($\Sigma_{a_I} n_{a_I} b_{a_I}$) the \emph{reduction overhead}.

\begin{lemma} The inference rule \axname{VER} is sound. \end{lemma}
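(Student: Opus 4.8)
The plan is to prove soundness via an exact security reduction to the $\ufcma$ security of the signature scheme $S$ specified by protocol $\mathcal{Q}$. The goal is to show that for every adversary $\mathcal{A}$ and valuation $\lambda$, if $\T_{\mathcal{Q},\mathcal{A}},\lambda\models\Delta,\pred{Honest}(\agent{B})$, then the probability (over traces in the execution tree) that the formula
$$\varphi_{\axname{VER}} \;=\; \pred{Verify}(Y,s,m,\agent{B}) \Rightarrow \exists\iota.\,\pred{Sign}(\<\agent{B},\iota\>,s,m) < \pred{Verify}(Y,s,m,\agent{B})$$
fails is at most $\ever(\eta,tb,\vec{n}) = \epsilon^\ufcma_S(\eta,tb+\delta,q)$. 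Since $\varphi_{\axname{VER}}$ is a safety property (this follows by the argument already sketched for \axname{FS2} in the proof of Lemma~\ref{lemma:safety}, applied to this formula, which is also in the provable fragment), it suffices by Lemma~\ref{lemma:safety-prefix} to bound the probability of failure on final trees, and moreover the failure set is monotone so the relevant event is well-behaved.

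The first step is to build an instrumented setup $\mathcal{S}'$ that simulates the original setup $\mathcal{S} = \mathcal{S}_\mathcal{A}\cup\mathcal{S}_\mathcal{Q}$ in the sense of Definition~[Setup Simulation]. Concretely, $\mathcal{S}'$ replaces the implementations of $\irl{sign}$ and $\irl{verify}$ for principal $\agent{B}$ by wrapped implementations that additionally maintain, as part of their (shared, by the label $\irl{s}(\agent{B})$) state, a log of all message/signature pairs $(m,s)$ that have been produced by a $\irl{sign}$ action of some thread $\<\agent{B},\iota\>$. On a $\irl{verify}\<s,m,vk(\agent{B})\>$ call that succeeds, the wrapper checks whether $(m,s)$ is in the log; if not, it has detected a violation of $\varphi_{\axname{VER}}$ on the current trace. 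All other behavior (including return values and reactions) is unchanged, so $\mathcal{S}'$ simulates $\mathcal{S}$ with a per-call time bound $b_{a_I}$ that is polynomial — I expect $O(q_i^2\eta)$ per signing/verification action for maintaining and searching the log — giving total reduction overhead $\delta = \Sigma_{a_I} n_{a_I} b_{a_I} = O(q^2\eta)$. By Lemma~\ref{lemma:simulation-bound}, the probability that $\varphi_{\axname{VER}}$ fails in the tree generated by $\mathcal{S}$ with time bound $tb$ is at most the probability it fails in the tree generated by $\mathcal{S}'$ with time bound $tb+\delta$; and in the latter, failure coincides exactly with the event that the wrapper flags a detection.

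The second step is the reduction itself: I construct a forger $\mathcal{F}$ against the $\ufcma$ game for $S$ that internally runs the entire instrumented execution $\T_{\mathcal{Q},\mathcal{A}}$ with time bound $tb+\delta$, using its signing oracle to answer every $\irl{sign}$ call made by a thread $\<\agent{B},\iota\>$ (and choosing its own keys and randomness to simulate all other threads and the PRG/nonce implementations honestly). Since $\agent{B}$ is $\pred{Honest}$, no thread other than those of $\agent{B}$ signs with $sk(\agent{B})$, and $sk(\agent{B})$ never appears in any message, so $\mathcal{F}$ never needs that key except through the oracle. The total number of oracle queries is bounded by $q = \Sigma_i n_i q_i$, matching the third parameter of $\epsilon^\ufcma_S$. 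When the wrapper flags a detection — a successful $\irl{verify}\<s,m,vk(\agent{B})\>$ with $(m,s)$ not in the log — $\mathcal{F}$ outputs $(m,s)$ as its forgery; by $\ufcma$ correctness of $\mathit{Verify}$, $s$ is a valid signature of $m$ under $vk(\agent{B})$, and since $(m,s)$ is not in the log, $m$ was never queried to the oracle (more precisely: $m$ was never the input of a signing action of $\agent{B}$, hence never queried), so it is a fresh forgery. Hence $\Pr[\mathcal{F}\text{ wins}] = \Pr[\text{detection}] = \Pr[\varphi_{\axname{VER}}\text{ fails under }\mathcal{S}']$.

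Putting the pieces together: $\Pr[\varphi_{\axname{VER}}\text{ fails in }\T_{\mathcal{Q},\mathcal{A}}] \le \Pr[\varphi_{\axname{VER}}\text{ fails under }\mathcal{S}',\ tb+\delta] = \Pr[\mathcal{F}\text{ wins}] \le \epsilon^\ufcma_S(\eta, tb+\delta, q) = \ever(\eta,tb,\vec{n})$, which is exactly what $\T_{\mathcal{Q},\mathcal{A}},(\eta,tb,\vec{n}),\lambda\models\B^{\ever}(\varphi_{\axname{VER}})$ requires; since $\mathcal{A}$, $\lambda$, and $(\eta,tb,\vec{n})$ were arbitrary (with $\Delta,\pred{Honest}(\agent{B})$ assumed), the rule is sound. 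The main obstacle I anticipate is the bookkeeping in step two: carefully arguing that $\mathcal{F}$ runs in time $tb+\delta$ plus the (unmodeled, hardware-dependent) reduction constant, that every signing action of $\agent{B}$ can be routed to the oracle without $\mathcal{F}$ ever needing $sk(\agent{B})$ directly — this is where $\pred{Honest}(\agent{B})$ and the syntactic restriction that signing keys are not terms are essential — and that "$(m,s)$ not in the log" really does imply "$m$ not queried", which requires that the log faithfully records exactly the successful signing actions of $\agent{B}$'s threads. The interaction with shared state (the adversary-controlled $\irl{send}$/$\irl{receive}$ state versus $\agent{B}$'s signing state) must also be checked to fit the Setup Simulation framework, but the per-principal state separation enforced by the labels $\irl{s}(\pi_1(I))$ makes this routine.
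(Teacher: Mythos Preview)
Your proposal is correct and follows essentially the same approach as the paper: instrument the $\irl{sign}$ and $\irl{verify}$ implementations with wrappers that log signing activity and flag a violation on a successful verify of an unlogged item, invoke Lemma~\ref{lemma:simulation-bound} to pass to the instrumented setup with overhead $\delta$, and then observe that the flag event is a $\ufcma$ forgery against $S$, yielding the bound $\epsilon^\ufcma_S(\eta,tb+\delta,q)$. Your write-up is in fact more explicit than the paper's about the forger $\mathcal{F}$ and its use of the signing oracle (the paper simply asserts that ``when $bad$ is set to true, the $\ufcma$ security of the scheme $S$ has been violated'' without spelling out the oracle simulation), and you correctly isolate the role of $\pred{Honest}(\agent{B})$ and the syntactic ban on signing keys as terms; the only cosmetic difference is that the paper's Algorithm~\ref{fig:ver-instr} logs just messages $m$ rather than pairs $(m,s)$.
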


\begin{proof}
Let $\varphi_{\axname{VER}}$ abbreviate the formula $
  \pred{Verify} (Y, s, m, \agent{B})
 \Rightarrow \exists \iota. \pred{Sign} (\<\agent{B}, \iota\>, s, m)$.
We must show that  \mbox{$\Qmodels \B^{\ever} \varphi$}, where $\ever$
  is defined as
follows:
\[\ever(\eta,tb,\vec{n}) = \epsilon^\ufcma_{S}(\eta, tb + \delta, q),\]
where, $\epsilon^{\ufcma}_S$ is the concrete security bound of the signature
scheme $S$ specified by the protocol $\Q$, $q = \Sigma_{i=1}^{k}\; (n_i \times
q_i)$ is a bound on the total number of signing actions on a trace, $n_i$ is
the number of instances of $R_i$ in $\Q$, $q_i$ is the number of signing
actions in role $R_i$, and $\delta$ is the reduction overhead
for the simulation where the implementations of actions $\irl{sign}$ and
$\irl{verify}$ are replaced respectively by the implementations $\irl{sign}'$ and $\irl{verify}'$
described in Algorithm ~\ref{fig:ver-instr}.

  \begin{algorithm2e}
    \SetKwFunction{isign}{sign'}\SetKwFunction{iverify}{verify'}
    \SetKwFunction{sign}{sign}\SetKwFunction{verify}{verify}
    \SetKwProg{myalg}{Implementation}{}{}
    \myalg{\isign{$(\sigma, M), (m, k)$}}{
      \nl $(\sigma', r) \leftarrow$ \sign{$\sigma, (m,k)$}\;
      \nl \KwRet{$(\sigma', M\cup\{m\}), r$}\;}{}
    \setcounter{AlgoLine}{0}
    \SetKwProg{myproc}{Procedure}{}{}
    \myalg{\iverify{$(\sigma, M), (m, s, k)$}}{
      \nl $(\sigma', r) \leftarrow$ \verify{$\sigma, (m, s, k)$}\;
      \nl$bad \leftarrow (m \not\in M)\land(r \not= \abort) $\;
      \nl \KwRet{$(\sigma', M), r$}\;}{}
  \caption{Implementations $\mathtt{sign'}$ and $\mathtt{verify'}$ augment $\mathtt{sign}$ and $\mathtt{verify}$ to monitor \axname{VER}}
  \label{fig:ver-instr}
  \end{algorithm2e}

Given $\A, \eta, tb$, $\vec{n}$, and $\lambda$, we want to show that
\[\measure{\Sem{\neg\varphi_{\axname{VER}}}^\lambda(\T_{\Q,\A}(\vec{n}, tb, \eta))}
\leq \ever(\eta, tb, \vec{n}).\]
As a first step, we replace the implementations of $\irl{sign}$ and
$\irl{verify}$ in $\Q$ with
the implementations shown in Algorithm~\ref{fig:ver-instr}.
The implementation $\irl{sign}'$ keeps track of the messages signed;
$\irl{verify}'$ checks whether a verified message has been signed before. The
variable $bad$ in $\irl{verify}'$ is set to true when a message is verified
that has not been signed before. Therefore, $bad$ is set to true precisely when
$\neg{\varphi_{\axname{VER}}}$ is true on a trace.  Note that the
implementations $\irl{sign}'$ and $\irl{verify}'$ simulate $\irl{sign}$ and
$\irl{verify}$. The simulation relation is $\sigma \sim (\sigma, M)$ for all
$\sigma$ and $M$, and the initial state for $\irl{sign}'$ and $\irl{verify}'$
is $(\sigma^{init}, \emptyset)$, where $\sigma^\mathit{init}$ is the initial state for the
implementation of $\irl{sign}$ and $\irl{verify}$. Therefore, by
Lemma~\ref{lemma:simulation-bound}, we have that
\[\measure{\Sem{\neg\varphi_{\axname{VER}}}^\lambda(\T_{\Q,\A}(\vec{n}, tb, \eta))}
\leq \measure{\Sem{\neg\varphi_{\axname{VER}}}^\lambda(\T_{\Q',\A}(\vec{n}, tb +
  \delta, \eta))}.\]
Let $\Q'$ be the protocol that results by replacing the implementations
$\irl{sign}$ and $\irl{verify}$ in $\Q$ with $\irl{sign}'$ and $\irl{verify}'$.
Let $\delta$ be the additional runtime required
for the reduction in Algorithm~\ref{fig:ver-instr}; $\delta$ can be bounded by a
polynomial that is $O((q_i+q_j)^2\eta)$, where $q_i$ and $q_j$ are the number of signature and verification
actions in a role, assuming some quadratic implementation for the set operations
in Algorithm~\ref{fig:ver-instr}.
Recall from Section~\ref{sec:primer} that the $\ufcma$ security condition
is violated when the adversary is able to create a signature of
a message that has not been signed before. Therefore, when $bad$ is set
to true, the $\ufcma$ security of the scheme $S$ has been violated. Consequently, the
probability that $bad$ is set to true can be bounded by $\epsilon^{\ufcma}_S(tb+\delta, q, \eta)$,
where $q$ is the number of signing actions and is bounded by the number of signing
operations possible in running the protocol, that is, $q =
\Sigma_{i=1}^{k}\; (n_i
\times q_i)$ and $q_i$ is the number of signing actions in role $\rho_i$ of
protocol $Q = {r_1, \ldots, r_k}$. Since $bad$ is set to true when $\neg\varphi_{\axname{VER}}$
is true on a trace, we can show that
\[\measure{\Sem{\neg\varphi_{\axname{VER}}}^\lambda(\T_{\Q',\A}(\vec{n}, tb + \delta, \eta))} \leq
\epsilon^{\ufcma}_S(tb+\delta, q, \eta) = \ever(\eta, tb, \vec{n}).\]
\end{proof}

\newcommand{\nonces}[1]{\mathtt{nonces}(#1)}
\begin{lemma} The axiom \axname{FS2}
  is sound. \end{lemma}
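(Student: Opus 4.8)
The plan is to follow the template of the \axname{VER} proof, but with an extra hybrid step, since the unpredictability of $\irl{new}$ rests on a pseudorandom-number generator rather than directly on a trace-monitoring invariant. Write $\varphi_{\axname{FS2}}$ for the formula $\pred{FirstSend}(X,n,m) \land \pred{Receive}(Y,m') \land \pred{Contains}(n,m') \Rightarrow \pred{Send}(X,m) < \pred{Receive}(Y,m')$; fixing $\A,\eta,tb,\vec{n},\lambda$, the goal is
\[\measure{\Sem{\neg\varphi_{\axname{FS2}}}^\lambda(\T_{\Q,\A}(\vec{n},tb,\eta))} \le \epsilon^\prg_P(\eta, tb+\delta, q_n) + \eta\, q\, 2^{-\eta}.\]
I would obtain this in three stages: (1) instrument the implementations so that a flag $bad$ is raised exactly when $\neg\varphi_{\axname{FS2}}$ holds, reducing via Lemma~\ref{lemma:simulation-bound} to bounding $\Pr[bad]$; (2) replace the PRG nonce source by truly uniform bits, paying $\epsilon^\prg_P$ through a reduction; (3) bound the residual probability of $bad$ information-theoretically, using the nonce-preserving conditions.

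For stage (1) I would define, in Algorithm~\ref{fig:fs-instr}, instrumented implementations $\irl{new}'$, $\irl{sign}'$, $\irl{send}'$, and $\irl{receive}'$. Here $\irl{new}'$ records, for each thread $X$, the set of nonces $X$ has generated; $\irl{sign}'$ and $\irl{send}'$ mark a nonce as ``released'' as soon as it appears inside a signed or sent message, so that the bookkeeping always tracks each thread's set of \emph{fresh} (generated-but-not-yet-released) nonces; and $\irl{receive}'$ sets $bad$ whenever a message $m'$ delivered to some thread permits deriving --- by the projections and message-recovery operations underlying $\pred{Contains}$ --- a nonce currently fresh for some thread $X$. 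Using the two nonce-preserving conditions (fresh nonces never enter a $\irl{verify}$ input, and a message containing a fresh nonce is sent immediately after it is signed), one checks that $bad$ is set on a trace precisely when $\neg\varphi_{\axname{FS2}}$ holds. These implementations simulate the originals with the obvious relation on states (the original state paired with the extra bookkeeping), with a per-call time overhead; assuming quadratic set operations, the accumulated overhead $\delta$ is $O(q_r\eta^2)$. Lemma~\ref{lemma:simulation-bound} then gives
\[\measure{\Sem{\neg\varphi_{\axname{FS2}}}^\lambda(\T_{\Q,\A}(\vec{n},tb,\eta))} \le \measure{\Sem{\neg\varphi_{\axname{FS2}}}^\lambda(\T_{\Q',\A}(\vec{n},tb+\delta,\eta))} = \Pr[bad],\]
where $\Q'$ is $\Q$ with these implementations substituted.

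For stage (2), let $\Q''$ be $\Q'$ with $\irl{new}'$ further modified to draw each nonce uniformly and independently from $\{0,1\}^\eta$. This is not a simulation, so instead I build a PRG distinguisher $D$: given its sample oracle, $D$ runs $\A$ against the instrumented protocol, answering the $k$-th $\irl{new}$ call with the $k$-th oracle sample, and outputs $1$ iff $bad$ is set; $D$ issues at most $q_n$ queries and runs within $tb+\delta$ plus minor overhead, so $|\Pr_{\Q'}[bad] - \Pr_{\Q''}[bad]| \le \epsilon^\prg_P(\eta, tb+\delta, q_n)$. For stage (3), in $\Q''$ the nonces are information-theoretically uniform, and the two nonce-preserving conditions guarantee that while a nonce $n$ of thread $X$ is fresh it has appeared in no message, signature, or verification query visible to the adversary, so the adversary's view is independent of $n$. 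Hence for a single $\irl{receive}$ delivering a message $m'$ of length at most $n$ bits, the probability that $m'$ contains (via $\pred{Contains}$) a currently-fresh nonce is at most the number of positions from which an $\eta$-bit value can be recovered from $m'$ (bounded by $n$) times $2^{-\eta}$; union-bounding over the $q_r$ receive actions yields $\Pr_{\Q''}[bad] \le n\, q_r\, 2^{-\eta} = \eta\, q\, 2^{-\eta}$ with $q = (n/\eta)\,q_r$. Summing the three bounds gives $\efstwo(\eta,tb,\vec{n})$.

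The step I expect to be the main obstacle is stage (3): turning the syntactic nonce-preserving conditions into a rigorous independence statement over the execution tree. One must argue, by tracing data flow through the operational semantics, that between the $\irl{new}$ producing $n$ and the first $\irl{send}$ releasing it the only place $n$ can appear is inside a signature the honest program is still holding (second condition) and never inside a $\irl{verify}$ input (first condition), so that no function of $n$ is handed to the adversary before $m$ is sent; and then one needs a careful conditioning argument over subtrees of $\T_{\Q'',\A}$, together with the combinatorial $2^{-\eta}$ counting, to make the residual bound come out exactly as $\eta\, q\, 2^{-\eta}$. By contrast, the simulation step is essentially identical to the \axname{VER} case, and the PRG reduction is routine once the instrumentation pins $bad$ to $\neg\varphi_{\axname{FS2}}$.
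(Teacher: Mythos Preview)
Your three-stage plan---instrument to detect a $bad$ event, apply Lemma~\ref{lemma:simulation-bound}, hybrid to true randomness via a PRG distinguisher, then union-bound over guesses---is exactly the paper's approach, and your identification of stage~(3) as the delicate step is apt (the paper in fact asserts the independence claim rather tersely).

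One difference worth flagging: the paper instruments only $\irl{new}$, $\irl{send}$, and $\irl{receive}$, not $\irl{sign}$. Its set $M$ holds nonces that have been generated but not yet \emph{sent}, so membership in $M$ lines up directly with the $\pred{FirstSend}$ predicate, and the claim ``$bad$ is set iff $\neg\varphi_{\axname{FS2}}$ holds'' follows essentially by definition. Your version removes a nonce from the fresh set already on $\irl{sign}$; then a nonce that has been signed but not yet sent is no longer tracked, so if such a nonce appeared in a $\irl{receive}$ your $bad$ would not fire even though $\neg\varphi_{\axname{FS2}}$ could hold on that trace. You patch this by invoking the nonce-preserving conditions already in stage~(1), but that mixes the bookkeeping argument with the information-flow argument and makes the ``precisely when'' claim harder to justify. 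Dropping the $\irl{sign}'$ instrumentation, as the paper does, keeps stage~(1) purely syntactic and pushes all use of the nonce-preserving conditions into stage~(3), where they belong.
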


\begin{proof}
Let $\afstwo$ abbreviate
$\pred{FirstSend} (X, n, m) \land
 \pred{Receive} (Y, m')
\land \pred{Contains} (n, m')
\Rightarrow \pred{Send} (X, m) < \pred{Receive} (Y, m').
$
We must show that $\Qmodels \B^{\efstwo} \afstwo$, where $\efstwo$ is defined as follows:
\[\efstwo(\eta,tb,\vec{n}) = \epsilon^\prg_{P}(\eta, tb + \delta, q_n)
+ \eta\times q\times2^{-\eta}, \]
where $q_n$ is the total number of $\irl{new}$ actions on a trace, $q =
l/\eta*q_r$ is an upper bound on the number of nonces that the adversary can
send to a protocol thread on a $\irl{receive}$, $l$ is the length of the
longest message that can be received, $q_r$ is the number of new, send, and receive actions in
a trace, and $\delta$ is the reduction overhead for the simulation where the
implementations of actions $\irl{new}$, $\irl{send}$ and $\irl{receive}$ are
replaced respectively by the implementations $\irl{new}'$,  $\irl{send}'$ and
$\irl{receive}'$ described in Algorithm~\ref{fig:fs-instr}.

\begin{algorithm2e}
    \SetKwFunction{isend}{send'}\SetKwFunction{irecv}{recv'}
    \SetKwFunction{send}{send}\SetKwFunction{recv}{recv}
    \SetKwFunction{new}{new}\SetKwFunction{inew}{new'}
    \SetKwProg{myalg}{Implementation}{}{}
    \myalg{\inew{$(\sigma, M), ()$}}{
      \nl $(\sigma', \ret{n}) \leftarrow$ \new{$\sigma, ()$}\;
      \nl \KwRet{$(\sigma', M \cup \{n\}), \ret{n}$}\;}{}
    \setcounter{AlgoLine}{0}
    \myalg{\isend{$(\sigma, M), (m, k)$}}{
      \nl $(\sigma', r) \leftarrow$ \send{$\sigma, (m,k)$}\;
      \nl \KwRet{$(\sigma', M - \nonces{m}), r$}\;}{}
    \setcounter{AlgoLine}{0}
    \SetKwProg{myproc}{Procedure}{}{}
    \myalg{\irecv{$(\sigma, M), ()$}}{
      \nl $(\sigma', r) \leftarrow$ \recv{$\sigma, ()$}\;
      \nl \lIf{$r = \ret{m}$}{$bad \leftarrow \exists n \in \nonces{m} \land n \in M $}
      \nl \KwRet{$(\sigma', M), r$}\;}{}
      \caption{Implementations $\mathtt{new'}$, $\mathtt{send'}$, and $\mathtt{recv'}$ augment $\mathtt{new}$, $\mathtt{send}$ and $\mathtt{recv}$ repectively to monitor \axname{FS2}}
  \label{fig:fs-instr}
  \end{algorithm2e}

Given $\A, \eta, tb$, and $\vec{n}$, we must show that
\[\measure{\Sem{\neg\afstwo}^\lambda(\T_{\Q,\A}(\vec{n}, tb, \eta))} \leq
\efstwo(\eta, tb, \vec{n}).\]
Let $\Q'$ be the protocol that results from replacing
the implementations of $\irl{new}$, $\irl{send}$, and
$\irl{receive}$
with the implementations shown in Algorithm~\ref{fig:fs-instr}.
The
set $M$ records the
nonces that have been generated but not been sent out. The implementation
$\irl{new}'$ records the nonces that have been generated. When
a message is sent with a nonce using $\irl{send}'$, it is
removed from $M$.
The variable $bad$ is set to true when there is a nonce in $M$ that has been received.
Essentially, $bad$ is set to true on a trace exactly when $\afstwo$ is false.

Consider the
following two experiments:
\begin{itemize}
\item $\textbf{Exp}^\text{Random}$: Run protocol $\Q'$, with $\irl{new}$ implemented
by a random-number generator with uniformly random distibution. If
$bad$ is set to true output $1$ else $0$.
\item $\textbf{Exp}^P$: Run protocol $\Q'$, with $\irl{new}$ implemented
by the pseudorandom-number generator $P$ specified by the
protocol. If $bad$ is set to true output $1$ else
$0$.
\end{itemize}

Recall from Section~\ref{sec:primer} that the probability of an adversary being
able to distinguish a uniform random number generator from the output of a
pseudorandom-number generator $S$ after seeing $q$ samples and in
time $t$ is bounded by
$\epsilon_P^\prg(\eta, t, q)$.
Therefore, we have the following:
\[\textbf{Pr}(\textbf{Exp}^P = 1) - \textbf{Pr}(\textbf{Exp}^\text{Random} = 1)  \leq
\epsilon^\prg_P(\eta, tb+\delta, q),\]
where $q_n = \Sigma_{i=1}^{k}\; (n_i \times q_i)$, where $q_i$ is the number of $\irl{new}$ actions in
role $i$ and $n_i$ is the number of instances of role $r_i$ in protocol roles $\{r_1, \cdots, r_k\}$. The additional runtime $\delta$
can be bounded by a polynomial that is $O(q_r^2\eta)$, where $q_r$ is
the total number of
new, send, and receive actions in the protocol, assuming some quadratic implementation
of the set operations in Algorithm~\ref{fig:fs-instr}.
The total runtime for $\epsilon^\prg$ is $tb+\delta$ to account for the additional
runtime for the reduction.

As $bad$ is set to true exactly when $\neg \afstwo$ is
true, as with the proof for $\axname{VER}$, we can use Lemma~\ref{lemma:simulation-bound}
to get the following bound for $\afstwo$:
\[ \measure{\Sem{\neg\afstwo}^\lambda\T_{\Q,\A}(\vec{n}, tb, \eta)} \leq  \textbf{Pr}(\textbf{Exp}^P(A) = 1). \]

We now only need to derive a bound for
$\textbf{Pr}(\textbf{Exp}^\text{Random} = 1)$, since
$\textbf{Pr}(\textbf{Exp}^P = 1) \leq \textbf{Pr}(\textbf{Exp}^\text{Random}(A) = 1) + \epsilon^\prg(\eta, tb+\delta, q)$.
When we use a true random number generator, at each guess, the adversary can
guess the random number generated with probability only $2^{-\eta}$. Therefore,
if the adversary has $q$ guesses, by the union bound, we can show that the total
probability of $bad$ being true is $q 2^{-\eta}$.  The adversary can attempt to
guess the nonce at each $\irl{receive}$ action. However, each message received
can contain multiple nonces. If the longest message in the protocol is of
length $l$, then each message received can contain at most $l/\eta$ nonces of
length $\eta$, and so a bound on $q$, the total number of guesses available to
the adversary, is $q_r l/\eta$. Finally, we have that
\[
\textbf{Pr} (\textbf{Exp}^\text{Random} = 1) \leq \frac{q_r l}{\eta2^{\eta}}.
\label{eq:rand-bound}
\]
Therefore, we have that
\[\measure{\Sem{\neg\afstwo}^\lambda\T_{\Q,\A}(\vec{n}, tb, \eta)} \leq
\textbf{Pr}(\textbf{Exp}^P = 1) \leq
\epsilon^\prg(\eta, tb+\delta, q) + \frac{q_r l}{\eta2^{\eta}} = \efstwo(\eta, tb, \vec{n}).\]
\end{proof}

\section{Proof Outline for the Example Protocol}
\label{sec:example}

A formal proof of the weak authentication property for the initiator
guaranteed by executing the CR protocol is presented in Appendix ~\ref{app:example}.
We explain below the outline of the proof and discuss how the axioms and rules described
in the previous section are used. The formal proof proceeds roughly as follows.

First, we assert what actions were executed by the thread with identifier $A$ in the initiator role. Specifically,
in this part of the proof, we prove that $A$ has received and
verified \agent{B}'s signature. We then use the fact that the
signatures of honest parties are unforgeable (axiom \axname{VER}) and
first-order reasoning about beliefs
to conclude that some thread with name $\iota$ of $\hat{B}$ ($B = \<\hat{B}, \iota\>$ below) must have produced this signature.
This step introduces a bound of $\ever$, and we obtain
the following formula:
\begin{equation}
  \QentailsC{\pred{Honest}(\agent{B})}{\B^{\ever}(
		 \true [\cordname{Init}_{CR}(\agent{B})]_A \ \pred{Sign} (B,
   s, \<\resp, y, m, \agent{A}\>))}.
	\label{eq:ysigned}
\end{equation}

Second, we use the honesty rule \axname{HON} to infer that whenever any thread (and by instantiation using the
rules of first-order belief logic, any the thread with identifier $B$ of agent \agent{B})
generates a signature of this form, he must
have previously received the first message from \agent{A} and then
sent the second message; moreover, the second message
was in fact the first send involving the random nonce $y$:

\begin{equation}
\label{eqn:inv}
  \begin{array}{ll} \Qentails{} &
		\B^0(\pred{Sign} (B, s, \<\resp, y, m,
\agent{A}\>) \Rightarrow (\Receive(B, m) \\
&~~~~~\land (\Send(B,\<y, s\>) \Rightarrow\\
&~~~~~~~~~~~~~ \Receive(B,m) < \Send(B,\<y, s\>)) \\
&~~~~~\land \pred{FirstSend} (B,\<y, s\>)
			)).
 		\end{array}
\end{equation}

Combining (\ref{eq:ysigned}) and (\ref{eqn:inv})
using rules for first-order reasoning
about beliefs, we conclude that the following
formula, which orders the actions performed by $B$, holds:

\begin{equation}
\label{eqn:msgorder1}
\begin{array}{ll} \Qentails{} & 
	\B^{\ever}(
		\true [\cordname{Init}_{CR}(\agent{B})]_A \\
    &~~~~~~~~~~~\Receive(B,m) < \Send(B,\<y, s\>)).
 	\end{array}
\end{equation}

Next we want to prove that $B$ received the first message after $A$ sent
it. We use the fact that since the thread with identifier $A$ generated the random nonce
$m$ and sent it out for the first time in the first message of the protocol,
then if the thread with identifier $B$ received a message from which $m$ is computable, he must have done so after $A$ sent out the first message.
This step uses the \axname{FS2} axiom, and therefore the bound
associated with the underlying PRG applies:

  \begin{equation}
\label{eqn:FS2}
  \begin{array}{ll}
\Qentails{} &
	\B^{\efstwo}  (		\true [\cordname{Init}_{CR}(\agent{B})]_A \\
  &~~~~	((\pred{FirstSend} (A, m,  m)  \land 			\Receive(B,m) \land \pred{Contains} (m, m)) \\
   &~~~~~~~~			\Rightarrow \Send(A,m) < \Receive(B,m)
		 )
	  ).
\end{array}
\end{equation}

Using the fact that $A$'s first send had the form in the premise of the implication in (\ref{eqn:FS2}) and that $B$ received the
corresponding message from (\ref{eqn:inv}),
from B3, (\ref{eqn:msgorder1}), and (\ref{eqn:FS2}), we infer
desired ordering of actions by $A$ and $B$.
Note that since two belief formulas are combined, the bounds add up:
$$\begin{array}{l}
  \pred{Honest}(\agent{B}) \entails_\Q \B^{\ever+\efstwo} (\true [\cordname{Init}_{CR}(\hat{B})]_A\\
~~~~~~~~~~~~~~~~~~~~~~~~ \ \Send(A,m) < \Receive(B,m)).
\end{array}
$$

The rest of the proof uses similar ideas (including another application
of \axname{FS2}) to order
other actions performed by threads $A$ and $B$ to prove a matching-conversation like property of
the protocol with a bound that is tightly related to the security of the underlying digital signature scheme
and pseudorandom-number generator:
$$
\begin{array}{ll}
  \pred{Honest}(\agent{B})\vdash_\Q	& \B^{\ever + 2\cdot\efstwo}(
  \true [\cordname{Init}_{CR}(\agent{B})]_A \ \exists \iota. \\
&~~~~		\Send(A,m) < \Receive(B,m) \land \\
&~~~~					\Receive(B,m)	 < \Send(B,\<y, s\>) \land \\
&~~~~				 	\pred{Send} (B, \<y, s\>)	< \pred{Receive} (A, \<y, s\>)),
\end{array}
$$
where $B = \langle\agent{B}, \iota\rangle$. The bound on the security guarantee is, $\epsilon = \ever+ 2\efstwo$. Therefore $\epsilon(\eta, tb, \vec{n})  =  \epsilon^\ufcma_S(\eta, tb + \delta_s, q)+2\cdot n/\eta \cdot q_r \cdot 2^{-\eta} +
2\epsilon^\prg_P(\eta, tb + \delta_p, q')$ and $S$, $P$ are respectively the signature scheme
and pseudorandom-number generator for protocol $\Q$. 
The quantities $\delta_s$ and $\delta_p$ are the respective runtime overheads
for the proofs of the $\axname{VER}$ and $\axname{FS2}$ axioms.
These are small polynomials and are described in \ref{sec:soundness}.
Note that each role uses just one signature action. Hence $q$
in (\ref{eq:ysigned})
is upper bounded by $n_{init} + n_{resp}$, where $n_{init}$ and
$n_{resp}$ are the number of instances of the initiator and responder
roles respectively. Each role uses just one
random number generation action. Hence, $q'$ is upper bounded by $n_{init} + n_{resp}$.

\section{Related Work}
\label{sec:related}

We discuss (1) formal techniques for proving concrete security, and (2) 
formal techniques asymptotic security in computational models.

\subsection{Formal Concrete Security}

Prior work on formal methods for concrete security proofs is largely based on
reasoning about program equivalence. This approach operates at the same level
of abstraction as traditional proofs in cryptography, where security is proved
via a sequence of games, all of which are equivalent up to a small probability.
Techniques to prove equivalence of probabilistic programs have been used to
formalize cryptographic reduction proofs. Tools for assisting
development of such proofs are CryptoVerif~\cite{Blanchet06},
CertiCrypt~\cite{Barthe09certicrypt}, and EasyCrypt~\cite{EasyCrypt09}.
ZooCrypt~\cite{ZooCrypt13} contains a formal system for computing the concrete
bounds of padding-based encryption primitives. In contrast, relational reasoning
in QPCL is not exposed to the users of the proof system, and is pushed down to the soundness
proofs of axioms related to cryptography. An interesting direction to pursue would be
to mechanize the soundness proofs of QPCL's axioms using such a framework.

\subsection{Formal Asymptotic Security}

While formal techniques for concrete security proofs of cryptography are
a relatively new development, a number of approaches have been proposed
for deriving proofs of security in asymptotic models, where it
is shown that the probability with which a probabilistic polynomial
time adversary can break a scheme is a negligible function of the
security parameter. We discuss some of these approaches here.

The closest strand of related work to ours include first-order logics such as
the computational extensions of PCL~\cite{DDMST05,DDMW06} and the
logic of Bana and 
Comon-Lundh~\cite{BC-post12} (CCSA). QPCL borrows heavily from PCL's
syntax and reasoning style, which allows for the specification of precise temporal properties
and invariants of programs such as 
the matching-conversations property proved in this paper. This is an
important point 
of difference from CCSA \cite{BC-post12}, where the specification language
expresses relations between terms appearing in protocol, such as whether
one term can be derived from another. An appealing feature of CCSA is
that it is designed to be unconditionally sound, in the sense 
that all assumptions about primitives are encoded in the logic itself.
An interesting direction to pursue would be to adapt their
unconditionally sound approach to QPCL, 
while retaining the richness of our specification language.

Symbolic models have been successful
in proving the security of a large number protocols, assuming perfect
cryptography. In the so-called Dolev-Yao model~\cite{DY84}, an attacker's
capabilities are defined by a symbolic abstraction of cryptographic primitives.
The security of a protocol in a Dolev-Yao model does not, in general, imply
security under standard cryptographic assumptions.  To justify reasoning with
symbolic models, a research direction initiated by Abadi and Rogaway
\cite{AbadiRogaway}  
investigates the conditions under which security in the symbolic model implies
security in the computational model. These results (e.g.,
\cite{BPW03,MW04,Backes09cosp,CLC08,Cortier11compsoundness}), referred to 
as \emph{computational soundness proofs}, typically show that under certain
conditions a computational attacker can be translated to a symbolic attacker
with high probability.  Reasoning about cryptography in a
symbolic model 
has been shown to be helpful for automating proofs of
protocols~\cite{ProVerif,Avispa}.  Computational soundness theorems
are very strong general results that prove through complicated reductions that any program secure
in a particular symbolic model will be asyptotically secure. We
conjecture that it is because 
of the generality of these results that these
reductions have not been shown to yield precise concrete bounds for most cryptographic
schemes.

\section{Conclusion}
\label{sec:conclusion}

In this paper, we present Quantitative Protocol Composition Logic (QPCL), a
program logic for reasoning about concrete security bounds of of cryptographic
protocols. QPCL supports reasoning about temporal trace properties, Hoare
logic-style invariants and postconditions of protocol programs, and first-order
belief assertions. The semantics and soundness proofs of QPCL depend on a
formal probabilistic programming model that describes the concurrent execution
of protocol programs with a computationally bounded adversary, while accounting
for exact runtimes and probabilities. As an illustrative example, we use QPCL
to prove an authentication property with concrete bounds of a 
challeng-response protocol.
 
In future work, we hope to extend QPCL to support reasoning about non-trace
properties, such as secrecy based on computational indistinguishability, and to
cover additional cryptographic primitives (e.g., symmetric and public key
encryption, hash functions, and message authentication codes).  Because of the
correspondence between QPCL presented here and earlier versions of PCL, for
which we have proofs for deployed protocols such as SSL and Kerberos, we
believe it should be possible to develop exact security analyses for such
protocols.

\newpage
\onecolumn
\appendix
\section{Appendix}
\subsection{Proof System and Soundness Proofs}
\label{sec:soundness-app}

In this section, we prove the soundness of the axioms that we discussed in the
main text.

\begin{lemma}
The axiom \axname{PC}$_{\uparrow}$
is sound.
\end{lemma}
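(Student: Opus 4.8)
The plan is to prove the rule semantically, inside the induction on derivations that underlies the soundness theorem. So I assume the induction hypothesis for the premise, $\Delta\Qmodels\B^{\epsilon}\varphi$, and I also apply Lemma~\ref{lemma:safety} to the premise $\Qentails\B^{\epsilon}\varphi$, which tells me that $\erase{\B^{\epsilon}\varphi}=\varphi$ is a safety property --- the only structural fact about $\varphi$ I will use. Fixing an adversary $\A$, a valuation $\lambda$ with $\T_{\Q,\A},\lambda\models\Delta$, and parameters $(\eta,tb,\vec{n})$, and writing $\tree=\T_{\Q,\A}(\eta,tb,\vec{n})$, the semantics of $\B^{\epsilon}$ reduces the goal to $\measure{\Sem{\neg(\theta[P]_I\varphi)}^{\lambda}(\tree)}\le\epsilon(\eta,tb,\vec{n})$, whereas the induction hypothesis already gives $\measure{\Sem{\neg\varphi}^{\mu}(\tree)}\le\epsilon(\eta,tb,\vec{n})$ for every valuation $\mu$ that agrees with $\lambda$ on the variables occurring in $\Delta$ (in particular for $\mu=\lambda$), since by the standing convention that bound variables are renamed apart, the variables bound in $P$ do not occur in $\Delta$.

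The core step is the set inclusion $\Sem{\neg(\theta[P]_I\varphi)}^{\lambda}(\tree)\subseteq\Sem{\neg\varphi}^{\lambda}(\tree)$, which immediately yields the bound. To prove it I take $\trace\in\tree$ with $\trace,\lambda\not\models\theta[P]_I\varphi$; by the modal semantics there are $\trace_1$, $\trace_2$ and a valuation $\lambda'$ with $\matches{\trace,\lambda}{[P]_I}{\trace_1}{\trace_2}{\lambda'}$, $\trace_1,\lambda\models\theta$, and $\trace_2,\lambda'\not\models\varphi$. The matching clauses force $\trace_2\preceq\trace$, so since $\varphi$ is a safety property, falsity of $\varphi$ on the prefix $\trace_2$ propagates forward and $\trace,\lambda'\not\models\varphi$, i.e.\ $\trace\in\Sem{\neg\varphi}^{\lambda'}(\tree)$. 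Finally I observe that $\lambda'$ arises from $\lambda$ only by rebinding the variables of $P$ to the values recorded along $\trace$, so $\lambda'$ agrees with $\lambda$ on every variable in $\Delta$ and --- in the intended use of the rule, where the variables of $P$ are supplied by $P$ itself and not left free in $\varphi$ --- also on the variables free in $\varphi$; hence $\trace\in\Sem{\neg\varphi}^{\lambda}(\tree)$, giving the inclusion.

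The step I expect to be the real obstacle is exactly this passage from the trace-dependent witness $\lambda'$ back to a single valuation when $\varphi$ does contain variables bound by $P$: then $\lambda'$ is read off $\trace$ and varies from trace to trace, and since the induction hypothesis bounds only the $\neg\varphi$-mass of each fixed valuation, a naive union over the admissible $\lambda'$ need not stay below $\epsilon$. My approach would be to partition $\tree$ by the tuple of values the matching assigns to $P$'s bound variables; on each block the witnessing valuation is constant, so the induction hypothesis bounds that block's $\neg\varphi$-mass by $\epsilon$, and the safety restriction is what lets a violation observed on a prefix already count as a violation at the end of the trace, so that every violating $\trace$ lies in $\Sem{\neg\varphi}^{\lambda'}(\tree)$ for its own block's valuation; combining the per-block bounds into a single bound $\epsilon$ is the delicate point, and it is where the safety restriction and the determinacy of the read-off values from $\trace$ are both used. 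Everything else --- unfolding the definitions of $\B^{\epsilon}$ and of $\theta[P]_I\varphi$, and invoking Lemma~\ref{lemma:safety} --- is routine.
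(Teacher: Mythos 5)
Your proposal is correct and follows essentially the same route as the paper's proof. Both arguments reduce the claim to the set inclusion $\Sem{\neg(\theta[P]_I\varphi)}^{\lambda}(\tree)\subseteq\Sem{\neg\varphi}^{\lambda}(\tree)$ for the fixed valuation $\lambda$, obtained by combining (i) the fact that matching forces $\trace_2\preceq\trace$, (ii) Lemma~\ref{lemma:safety} applied to the premise so that falsity of $\varphi$ propagates from $\trace_2$ to $\trace$, and (iii) agreement of $\lambda$ and $\lambda'$ on the variables relevant to $\varphi$. The one point of divergence is your final paragraph: the ``real obstacle'' you flag---$\varphi$ containing variables bound by $P$, so that the witnessing $\lambda'$ varies from trace to trace---is exactly what the paper dispatches with a without-loss-of-generality step (``the free variables in $\varphi$ are not bound by $[P]_I$, since this can be ensured via renaming''), backed by the standing convention that bound variables are globally unique; a formula established as a premise $\B^{\epsilon}\varphi$ independently of $P$ has no business referring to $P$'s bound variables in the first place. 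Be aware that your fallback partition argument would not close as sketched: the induction hypothesis bounds the $\neg\varphi$-mass of the \emph{entire} tree for each fixed valuation, so summing the per-block bounds yields (number of blocks)$\,\times\,\epsilon$ rather than $\epsilon$. Since that case is excluded by the renaming convention, however, nothing is missing from your main argument, and it coincides with the paper's.
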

\begin{proof}
Fix a protocol $\Q$, and suppose that $\Qmodels
B^\epsilon \varphi$.  We must show that  $\Qmodels
\Bel{\epsilon}{\psi[P]_I\varphi}$.  Given an adversary $A$, security
parameter $\eta$, time bound $tb$,
 instance vector $\vec{n}$, and assigment $\lambda$,
let $\tree = \T_{Q, A}(\eta, tb, \vec{n})$.
Since $\Qmodels B^{\epsilon}\varphi$, we must have $
\measure{\Sem{\neg{\varphi}}^\lambda(\tree)} \leq \epsilon(\eta,tb, \vec{n})
$.

Let $\tree_1 =
\Sem{\neg(\psi[P]_I\varphi)}^\lambda(\tree) =  \{\tra \in
  \tree~|~ \exists \tra_1, \tra_2, \lambda'. (\matches{\tra,
  \lambda}{[P]_I}{\tra_1}{\tra_2}{\lambda}), \tra_1, \lambda \models
  \theta,\text{and}~\tra_2,\lambda'\not\models\phi\}$.
  We can assume without loss of generality that the free variables in
  $\varphi$ are not bound by
  $[P]_I$, since
  this can be ensured via renaming. We can also show by the definition of matching
  that $\lambda(x) = \lambda'(x)$ for all variables $x$ that are not bound by $[P]_I$.
  Therefore, we have that
  $\tree_1 = \{\tra \in
  \tree~|~ \exists \tra_1, \tra_2, \lambda'. (\matches{\tra,
  \lambda}{[P]_I}{\tra_1}{\tra_2}{\lambda}), \tra_1, \lambda \models \theta,\text{and}~\tra_2,\lambda\not\models\phi\}$.
  Additionally, we know using Lemma \ref{lemma:safety} that $\varphi$
  is a safety property. Also, by the definition
of
  matching, we know that if $(\matches{\tra,
  \lambda}{[P]_I}{\tra_1}{\tra_2}{\lambda})$, then $\tra_2 \subseteq \tra$, and consequently, $\tra_2, \lambda \not\models \varphi$
  implies $\tra, \lambda \not\models \varphi$. Therefore, we can show that:
\[
\begin{array}{lll}
  \measure{\tree_1} &=& \measure{\{ \tra \in \tree~|~ \exists \tra_1, \tra_2, \lambda'. (\matches{\tra}{[P]_I}{\tra_1}{ \tra_2}{\lambda'}), \tra_1, \lambda \models \theta,~\text{and}~\tra_{2},\lambda\not\models\phi\}} \\
                     &\leq& \measure{\{ \tra \in \tree~|~ \exists \tra_1, \tra_2, \lambda'. (\matches{\tra}{[P]_I}{\tra_1}{ \tra_2}{\lambda'}), \tra_1, \lambda \models \theta,~\text{and}~\tra,\lambda\not\models\phi\}} \\
                     &\leq& \measure{\{ \tra \in \tree~|~ \tra,\lambda\not\models\phi\}} \\
                     &=& \measure{\Sem{\neg{\varphi}}^\lambda(\tree)}\\
  &\leq& \epsilon(\eta, tb, \vec{n}).
\end{array}
\]
Therefore, $\measure{\Sem{\neg({\psi[P]_I\varphi})}^\lambda(\tree)} =
\measure{\tree_1} \leq \epsilon(\eta, tb, \vec{n})$, and
$\Qmodels \Bel{\epsilon}{\psi[P]_I\varphi}$, as desired.
\end{proof}

\begin{lemma} The axiom \axname{PC}$_{\Rightarrow}$
 is sound. \end{lemma}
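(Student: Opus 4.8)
The plan is to establish validity of $\axname{PC}_{\Rightarrow}$ directly from the trace semantics of the modal operator. Since this axiom is a basic formula containing no belief operator, the hypothesis set $\Delta$ is irrelevant: by the semantics of conditional formulas restricted to basic formulas, it suffices to show that for every trace $\trace$ and every valuation $\lambda$ we have $\trace, \lambda \models ((\theta[P]_X(\varphi \Rightarrow \psi)) \wedge (\theta[P]_X\varphi)) \Rightarrow (\theta[P]_X\psi)$. So I would fix $\trace$ and $\lambda$, assume the antecedent, i.e. both $\trace, \lambda \models \theta[P]_X(\varphi \Rightarrow \psi)$ and $\trace, \lambda \models \theta[P]_X\varphi$, and aim to derive $\trace, \lambda \models \theta[P]_X\psi$.

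Unfolding the semantics of the modal formula, the goal becomes: for all prefixes $\trace_1, \trace_2$ and all valuations $\lambda'$ such that $\matches{\trace, \lambda}{[P]_X}{\trace_1}{\trace_2}{\lambda'}$ and $\trace_1, \lambda \models \theta$, we have $\trace_2, \lambda' \models \psi$. I would take such a decomposition $(\trace_1, \trace_2, \lambda')$ as given. The key observation is that the matching relation $\matches{\trace,\lambda}{[P]_X}{\trace_1}{\trace_2}{\lambda'}$, being defined by induction on $[P]_X$, makes no reference to any postcondition, and the side condition $\trace_1, \lambda \models \theta$ is likewise shared; hence this same decomposition is an admissible witness for the semantics of all three modal subformulas. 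Applying the first hypothesis to it gives $\trace_2, \lambda' \models \varphi \Rightarrow \psi$, and applying the second gives $\trace_2, \lambda' \models \varphi$. By the trace semantics of implication (which unwinds $\Rightarrow$ through $\neg$ and $\wedge$, so that modus ponens holds pointwise on each trace), we conclude $\trace_2, \lambda' \models \psi$, which is what was needed.

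There is no real obstacle in this argument; it is purely a matter of pushing the propositional inference inside the postcondition. The only points that require care are that the matching relation and the evaluation of the precondition $\theta$ (under $\lambda$, not $\lambda'$) are uniform across the three modal subformulas, so that the ranges of decompositions quantified over genuinely coincide; once this is noted, the soundness of $\axname{PC}_{\Rightarrow}$ is immediate.
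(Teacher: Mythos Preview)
Your proposal is correct and follows essentially the same approach as the paper: fix a trace and valuation, assume the two modal antecedents, take an arbitrary matching decomposition $(\trace_1,\trace_2,\lambda')$ with $\trace_1,\lambda\models\theta$, and observe that the same decomposition yields $\trace_2,\lambda'\models\varphi\Rightarrow\psi$ and $\trace_2,\lambda'\models\varphi$, so modus ponens on the trace gives $\trace_2,\lambda'\models\psi$. Your remark that the matching relation and precondition check are independent of the postcondition, making the decomposition uniformly applicable, is exactly the point the paper uses implicitly.
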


\begin{proof}
We want to show that $\Qmodels (\theta
[P]_I (\varphi_1 \Rightarrow \varphi_2)) \, \wedge  \, (\theta [P]_I
\varphi_1)) \Rightarrow (\theta [P]_I \varphi_2)$.
Given $A, \eta, tb, \vec{n}, \lambda$, let $\tree = \T_{\Q,A}
(\eta, tb, \vec{n})$. Fix a trace $\tra \in \tree$.
Suppose that $\tra,\lambda \models \theta
[P]_I (\varphi_1 \Rightarrow \varphi_2)$ and  $\tra, \lambda \models
\theta [P]_I\varphi_1$. We must show that $\tra, \lambda \models
\theta [P]_I \varphi_2$. Fix $\trace_1$, $\trace_2$ and $\lambda'$
such that $\matches{\trace,\lambda}{[P]_I}{\tra_1}{\tra_2}{\lambda'}$
and $\tra_1, \lambda \models \theta$. We must show that $\tra_2, \lambda' \models \varphi_2$.
This is immediate since it must be the case that $\tra_2,\lambda' \models \varphi_1 \rimp \varphi_2$ and $\tra_2, \lambda' \models \varphi_1$.
\end{proof}

\begin{lemma} The axiom \axname{PC}$_{\land}$ is sound. \end{lemma}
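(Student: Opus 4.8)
The plan is to prove \axname{PC}$_{\land}$ by imitating, almost line for line, the soundness proof of \axname{PC}$_{\Rightarrow}$, since the axiom has the same shape: it asserts $((\theta [P]_X \varphi_1) \land (\theta [P]_X \varphi_2)) \Rightarrow (\theta [P]_X (\varphi_1 \land \varphi_2))$, and the crucial point is again that the two modal conjuncts in the antecedent quantify over exactly the same family of matchings. First I would fix an arbitrary protocol $\Q$, adversary $A$, security parameter $\eta$, time bound $tb$, instance vector $\vec{n}$, and valuation $\lambda$, and put $\tree = \T_{\Q,A}(\eta, tb, \vec{n})$. Because the formula in question is a basic (non-conditional) formula, it suffices to show that $\tra, \lambda \models ((\theta [P]_X \varphi_1) \land (\theta [P]_X \varphi_2)) \Rightarrow (\theta [P]_X (\varphi_1 \land \varphi_2))$ for every trace $\tra \in \tree$.

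Next I would assume the antecedent, i.e.\ $\tra, \lambda \models \theta [P]_X \varphi_1$ and $\tra, \lambda \models \theta [P]_X \varphi_2$, and establish the consequent $\tra, \lambda \models \theta [P]_X (\varphi_1 \land \varphi_2)$ by unfolding the semantics of the modal operator: I would fix arbitrary $\tra_1$, $\tra_2$ and a valuation $\lambda'$ with $\matches{\tra, \lambda}{[P]_X}{\tra_1}{\tra_2}{\lambda'}$ and $\tra_1, \lambda \models \theta$, and must show $\tra_2, \lambda' \models \varphi_1 \land \varphi_2$. Instantiating the first assumption at this very matching yields $\tra_2, \lambda' \models \varphi_1$, and instantiating the second yields $\tra_2, \lambda' \models \varphi_2$; the clause for $\wedge$ in the trace semantics then gives $\tra_2, \lambda' \models \varphi_1 \land \varphi_2$, as required.

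I do not expect a genuine obstacle here; the only point worth a sentence is that the valuation $\lambda'$ used in the two instantiations is literally the same object, namely the $\lambda'$ of the single matching we fixed --- equivalently, both universally quantified modal hypotheses range over the identical triples $(\tra_1, \tra_2, \lambda')$, so no mismatch of bound variables can arise. Finally, should \axname{PC}$_{\land}$ be intended in the belief-annotated form $(\B^{\epsilon_1}(\theta [P]_X \varphi_1) \land \B^{\epsilon_2}(\theta [P]_X \varphi_2)) \Rightarrow \B^{\epsilon_1 + \epsilon_2}(\theta [P]_X (\varphi_1 \land \varphi_2))$, I would not prove it from scratch but derive it, exactly as the paper does for the belief version of \axname{PC}$_{\Rightarrow}$, from the plain \axname{PC}$_{\land}$ just proved together with \axname{B1} (to move the belief operator inside) and \axname{B3} (to combine $\B^{\epsilon_1}\psi_1 \land \B^{\epsilon_2}\psi_2$ into $\B^{\epsilon_1+\epsilon_2}(\psi_1 \land \psi_2)$).
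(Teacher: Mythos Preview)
Your proposal is correct and matches the paper's approach exactly: the paper's own proof of \axname{PC}$_{\land}$ is the single sentence ``This proof is analogous to the previous proof,'' referring to the soundness argument for \axname{PC}$_{\Rightarrow}$, and you have simply spelled out that analogy in detail. Your observation that both modal hypotheses are instantiated at the same matching triple $(\tra_1,\tra_2,\lambda')$ is precisely the (trivial) point that makes the analogy go through.
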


\begin{proof}
This proof is analogous to the previous proof.
\end{proof}

Before proving the soundness of inference rule \axname{HON}, we state three
lemmas
asserting that the validity of a simple formula $\varphi$ on a trace is not affected if the
trace is extended with a transition in which in an action
is not executed, that is, the
configuration transition is due to one of the rules $\rulen{abort},
\rulen{wait}$, or $\rulen{switch}$. These can be
proved by a straightforward induction on the structure of the formula $\varphi$.

\begin{lemma}
\label{lemma:extabort}
If trace $\tra$ ends in configuration $\conf$, $\infern{ }{\conf \rightarrow
\conf'}{abort}$ and $\tra'$ is the extension of $\tra$ with $\conf'$, then
$\tra, \lambda \models \varphi$ implies $\tra', \lambda \models \varphi$.
\end{lemma}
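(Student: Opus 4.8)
The plan is to prove, by induction on the structure of $\varphi$, the slightly stronger biconditional that $\tra,\lambda \models \varphi$ if and only if $\tra',\lambda \models \varphi$; the stated implication is then a special case, and carrying the equivalence (rather than a one-way implication) through the induction is exactly what makes the negation case go through. The single fact driving everything is that the transition $\conf \rightarrow \conf'$ appended to $\tra$ comes from rule $\rulen{abort}$, which (unlike $\rulen{act}$) records \emph{no} action label $(I,a(v),v')$ below the arrow. Hence $\tra$ and $\tra'$ contain exactly the same action labels, in the same relative order; moreover the set of principals running a protocol role, and the values $\lamhat(t)$ assigned to terms by $\lambda$, are untouched. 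I would isolate this observation first, since it does all the work in the base cases.

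For the base cases I would then just read off the semantics. The action predicates $\pred{Send}$, $\pred{Receive}$, $\pred{Sign}$, $\pred{Verify}$, $\pred{New}$ and the predicate $\pred{Start}(I)$ are defined entirely in terms of which labels of a given shape occur in the trace; $a_1 < a_2$ adds only a constraint on the relative order of two labels; $u=v$ and $\pred{Contains}(u,v)$ depend only on $\lambda$ (via $\lamhat$); and $\pred{Honest}(\agent{B})$ depends only on which principals are instantiated, which is fixed at initialization. Each of these therefore has the same truth value on $\tra$ and on $\tra'$. The connective and quantifier cases $\theta \wedge \varphi'$, $\neg \theta$, and $\forall x.\psi$ are immediate from the induction hypothesis — and this is precisely where having the biconditional rather than a one-sided implication matters, for $\neg\theta$.

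The only case needing a little care is the modal formula $\theta[P]_I\varphi'$. Here I would argue that every decomposition $\matches{\tra',\lambda}{[P]_I}{\tra_1}{\tra_2}{\lambda'}$ picks out (by the matching rules, when $P$ is nonempty) a suffix $\tra_2 \preceq \tra'$ ending in a \emph{labeled} transition; since the final transition of $\tra'$ is unlabeled, such a $\tra_2$ — and hence $\tra_1$ as well — is already a prefix of $\tra$. Thus the matchings of $[P]_I$ against $\tra'$ coincide with those against $\tra$, and the side conditions $\tra_1,\lambda \models \theta$ and $\tra_2,\lambda' \models \varphi'$ are evaluated on prefixes common to both traces, so the truth value of $\theta[P]_I\varphi'$ is unchanged (the induction hypothesis on $\varphi'$ is invoked only in the degenerate situation where $P$ matches all of $\tra'$). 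I do not anticipate any real obstacle: the modal case is the only one that is not wholly mechanical, and even it reduces to the single structural remark that an $\rulen{abort}$ step is unlabeled — which is also what makes the companion lemmas for the $\rulen{wait}$ and $\rulen{switch}$ transitions go through verbatim.
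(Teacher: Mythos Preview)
Your proposal is correct and matches the paper's approach: the paper simply states that this lemma (and its companions for $\rulen{wait}$ and $\rulen{switch}$) ``can be proved by a straightforward induction on the structure of the formula $\varphi$'' without giving further detail. Your write-up actually goes beyond the paper in two useful ways: you make explicit the strengthening to a biconditional needed for the $\neg\varphi$ case, and you spell out the modal case by observing that matchings must end in a labeled transition and hence cannot extend past the unlabeled $\rulen{abort}$ step.
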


\begin{lemma}
\label{lemma:extwait}
If trace $\tra$ ends in configuration $\conf$, $\infern{ }{\conf \rightarrow
\conf'}{wait}$ and $\tra'$ is the extension of $\tra$ with
$\conf'$,  then $\tra, \lambda \models \varphi$ implies $\tra',
\lambda \models \varphi$.
\end{lemma}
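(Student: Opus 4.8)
The plan is to prove, by induction on the structure of the basic formula $\varphi$, the \emph{biconditional} that $\tra, \lambda \models \varphi$ iff $\tra', \lambda \models \varphi$ (for every valuation $\lambda$), rather than merely the implication stated in the lemma. Strengthening to a biconditional is what makes the induction go through the negation case, where the hypothesis is needed in the direction $\tra', \lambda \models \psi \Rightarrow \tra, \lambda \models \psi$. Two properties of the added transition $\conf \to \conf'$ drive the whole argument: (i) it is obtained by rule $\rulen{wait}$, hence is not a labelled transition and carries no action record, so the set of action labels occurring in $\tra'$ equals the set occurring in $\tra$; and (ii) rule $\rulen{wait}$ merely moves the active program into the pool of waiting threads, so the collection of threads in the last configuration of $\tra'$ — in particular the set of principals executing a protocol role — is exactly that of $\tra$.

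First I would dispatch the atomic cases. The action predicates $\pred{Send}, \pred{Receive}, \pred{Sign}, \pred{Verify}, \pred{New}$ and the temporal ordering $a_1 < a_2$ are defined purely in terms of which action labels occur in the trace and in what order, so they are preserved by (i); $\pred{Start}$ and $\pred{Honest}$ additionally refer to the threads executing the protocol, so they are preserved by (i) and (ii); and $t = t'$ and $\pred{Contains}(t,t')$ depend only on $\lambda$. Next, conjunction and the universal quantifier follow immediately from the induction hypothesis (applied at $\lambda[x \mapsto v]$ for each value $v$ in the quantifier case), and negation follows directly from the biconditional form of the induction hypothesis.

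The one case that needs a small observation is the modal formula $\theta[P]_I\psi$. By the inductive definition of matching, any sub-trace $\tra_2$ that matches $[P]_I$ must end in a labelled transition $\ctrans{\conf}{c}{(\lamhat(I), \lamhat(\alpha), t)}{\conf'}$. Since the only transition of $\tra'$ that is not already a transition of $\tra$ is the trailing $\rulen{wait}$ transition, which is unlabelled, every such $\tra_2$ — together with the corresponding ``before'' prefix and the produced valuation — is already a matching sub-trace of $\tra$, and conversely. Hence the set of triples $(\tra_1, \tra_2, \lambda')$ with $\matches{\tra', \lambda}{[P]_I}{\tra_1}{\tra_2}{\lambda'}$ coincides with the corresponding set for $\tra$, and on each such triple the sub-formulas $\theta$ and $\psi$ are evaluated on prefixes common to $\tra$ and $\tra'$, where the result already holds (or holds by the induction hypothesis for $\theta$ and $\psi$ themselves). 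Therefore $\tra, \lambda \models \theta[P]_I\psi$ iff $\tra', \lambda \models \theta[P]_I\psi$, closing the induction.

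The only genuinely non-routine points are formulational: one must notice that the implication in the lemma should be strengthened to a biconditional to cover negation, and that in the modal case the trailing $\rulen{wait}$ transition can never lie inside a matched block. Everything else — pinning down exactly which component of a configuration each atomic predicate inspects — is routine, and the companion Lemma~\ref{lemma:extabort} for the $\rulen{abort}$ rule (and the analogous statement for $\rulen{switch}$) is handled the same way.
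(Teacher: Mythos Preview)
Your proposal is correct and follows exactly the approach the paper indicates: the paper itself omits the proof, stating only that Lemmas~\ref{lemma:extabort}--\ref{lemma:extswitch} ``can be proved by a straightforward induction on the structure of the formula $\varphi$.'' You have correctly supplied the details the paper elides, including the necessary strengthening to a biconditional (to handle $\neg$) and the observation that in the modal case the trailing unlabelled $\rulen{wait}$ transition cannot participate in a match since matched segments must end in a labelled transition.
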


\begin{lemma}
\label{lemma:extswitch}
If trace $\tra$ ends in configuration $\conf$, $\infern{ }{\conf \rightarrow
\conf'}{switch}$ and $\tra'$ is the extension of $\tra$ with
$\conf'$, then $\tra, \lambda \models \varphi$ implies $\tra',
\lambda \models \varphi$.
\end{lemma}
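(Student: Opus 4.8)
The plan is to prove the slightly stronger claim that $\tra,\lambda \models \varphi$ \emph{iff} $\tra',\lambda \models \varphi$, by induction on the structure of the basic formula $\varphi$. The equivalence (rather than just the implication asked for) is what lets the induction pass through negation, and its forward direction is exactly the statement of Lemma~\ref{lemma:extswitch}. Everything rests on one observation about the rule $\rulen{switch}$ in Table~\ref{tab:transitionrules}: that transition is \emph{unlabeled} --- unlike $\rulen{act}$, it carries no record of the form $(I,a(v),v')$ below the arrow --- and it merely moves a thread $I:P$ from the waiting set into the active slot, changing only the scheduler's local state, which basic formulas never mention. Hence $\tra$ and $\tra'$ contain exactly the same labeled transitions in the same order, and the same threads occur across their configurations; equivalently, the proper prefixes of $\tra'$ are precisely the prefixes of $\tra$.

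Given this, the induction is routine. For the action predicates $\pred{Send},\pred{Receive},\pred{Sign},\pred{Verify},\pred{New}$ the semantics is stated purely as membership of a label $(\lamhat(I),\ldots)$ in the trace, so the two sides agree; the ordering formula $a_1 < a_2$ adds only a reference to the relative position of two labels, which is also unchanged; the predicates $t=t'$ and $\pred{Contains}(t,t')$ do not depend on the trace at all; $\pred{Start}(I)$ asserts the absence of any label $(\lamhat(I),\cdot,\cdot)$ and $\pred{Honest}(\agent{B})$ asserts that $\agent{B}$ runs a role somewhere on the trace, both determined by the unchanged labeled transitions and thread occurrences. Conjunction and negation follow immediately from the induction hypothesis (the negation case being where the ``iff'' is needed), and $\forall x.\varphi$ follows by applying the induction hypothesis under each valuation $\lambda[x\mapsto v]$.

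The one case worth spelling out is the modal formula $\theta[P]_I\varphi$. Here I would argue that the matching relation ``sees'' the same suffixes in $\tra$ and in $\tra'$: unfolding the inductive definition of $\matches{\tra,\lambda}{[P]_I}{\tra_1}{\tra_2}{\lambda'}$, the only way it depends on the ambient trace is through conditions of the form $\tau \preceq \tra$ in which $\tau$ (e.g. $\tra_2$, or an intermediate right endpoint) ends in a \emph{labeled} transition $\ctrans{\conf}{c}{(\lamhat(I),\lamhat(\alpha),t)}{\conf'}$. Since the last transition of $\tra'$ is the unlabeled $\rulen{switch}$ step, such a $\tau$ can never be all of $\tra'$, so $\tau \preceq \tra'$ iff $\tau \preceq \tra$; hence $\matches{\tra',\lambda}{[P]_I}{\tra_1}{\tra_2}{\lambda'}$ holds iff $\matches{\tra,\lambda}{[P]_I}{\tra_1}{\tra_2}{\lambda'}$ does. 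As $\theta$ and $\varphi$ are then evaluated on the fixed prefixes $\tra_1$ and $\tra_2$, which are common to $\tra$ and $\tra'$, and not on $\tra$ or $\tra'$ themselves, the truth value of $\theta[P]_I\varphi$ is the same on both traces. The proofs of Lemma~\ref{lemma:extabort} and Lemma~\ref{lemma:extwait} are the same, using that $\rulen{abort}$ and $\rulen{wait}$ are also unlabeled and introduce no new thread (for $\rulen{abort}$ the active thread is dropped, but it still occurred earlier on the trace, so $\pred{Honest}$ is unaffected). The only real obstacle is noticing that one should strengthen to an equivalence and then checking the modal case carefully; the remaining cases are pure bookkeeping.
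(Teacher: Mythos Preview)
Your proposal is correct and follows exactly the approach the paper indicates: the paper states only that Lemmas~\ref{lemma:extabort}--\ref{lemma:extswitch} ``can be proved by a straightforward induction on the structure of the formula $\varphi$,'' and your argument is precisely that induction, carried out in full. Your observation that one must strengthen to an equivalence to push the induction through $\neg$, and your careful unfolding of the matching relation for the modal case, are the details the paper elides.
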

\begin{lemma} The inference rule \axname{HON} is sound.
\end{lemma}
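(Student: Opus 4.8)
The plan is to prove the rule by induction on the structure of the trace $\trace$, showing that if the two antecedents of \axname{HON} hold in the model $\T_{\mathcal{Q},\mathcal{A}}$ (for fixed $\eta, tb, \vec{n}$ and valuation $\lambda$), then $\varphi$ holds on every trace $\trace$ in the execution tree. Since $\varphi$ contains no belief operators, ``$\T_{\mathcal{Q},\mathcal{A}} \models \varphi$'' reduces to ``$\trace, \lambda \models \varphi$ for every trace $\trace$ in the tree and every $\lambda$,'' so it suffices to reason trace-by-trace. Concretely, I would strengthen the statement slightly for the induction: for every prefix $\trace'$ of a trace $\trace$ and every thread identifier $I$, if $\Start(I)$ holds on $\trace'$, then $\varphi$ holds after any further segment that matches an initial segment $P \in IS(\mathcal{Q})$ of $I$'s role. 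This makes the inductive hypothesis track ``how far each active thread has progressed through its role.''

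The base case is the empty trace (a single initial configuration): here no actions have been executed, so for every $I$ we have $\trace, \lambda \models \Start(I)$, and the first antecedent $\forall I.(\Start(I) \Rightarrow \varphi)$ gives $\trace, \lambda \models \varphi$. For the inductive step, consider a trace $\trace'$ obtained from $\trace$ by one more configuration transition. There are four cases, according to which rule (\rulen{act}, \rulen{abort}, \rulen{wait}, \rulen{switch}) produced the transition. For \rulen{abort}, \rulen{wait}, and \rulen{switch}, no action is executed, so I invoke Lemmas~\ref{lemma:extabort}, \ref{lemma:extwait}, and \ref{lemma:extswitch} respectively: these tell us that validity of the basic formula $\varphi$ is preserved by such extensions, so $\trace, \lambda \models \varphi$ gives $\trace', \lambda \models \varphi$ directly. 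The interesting case is \rulen{act}: the transition is a labeled transition $(I, a(v), v')$ for some thread $I$ executing the next action statement $l{:}x \leftarrow a(t)$ of its program. Here I use the second antecedent: since every protocol role is executed from a state satisfying $\Start(I)$ (keys are pre-assigned and the initial configuration has each thread at the start of its role), the portion of $\trace'$ beginning where thread $I$ started and ending at $\trace'$ matches some initial segment $P \in IS(\mathcal{Q})$ of $I$'s role. Writing $\trace_1$ for the prefix of $\trace'$ up to the point where $I$ started, the matching relation $\matches{\trace', \lambda}{[P]_I}{\trace_1}{\trace'}{\lambda'}$ holds for an appropriate $\lambda'$; by the inductive hypothesis (applied to $\trace_1$, which is a prefix of $\trace$, or by the base case if $\trace_1$ is empty) we get $\trace_1, \lambda \models \varphi$, hence $\trace_1, \lambda \models \Start(I)$ need not hold — rather we need $\Start(I)$ on $\trace_1$, which holds because by construction $\trace_1$ ends exactly at the point before $I$ executes its first action. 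Then the antecedent $\forall I.(\Start(I)[P]_I \varphi)$ gives $\trace', \lambda' \models \varphi$, and since the free variables of $\varphi$ not bound by $P$ are interpreted identically by $\lambda$ and $\lambda'$ (a standard property of matching, also used in the \axname{PC}$_\uparrow$ proof), we conclude $\trace', \lambda \models \varphi$.

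The main obstacle is bookkeeping around the matching relation: one must carefully identify, for the thread $I$ active in the \rulen{act} step, the correct split point $\trace_1$ where $I$'s role began, argue that $\Start(I)$ holds there, and verify that the segment from $\trace_1$ to $\trace'$ genuinely matches an element of $IS(\mathcal{Q})$ according to the inductive definition of $\gg$. This requires that the operational semantics guarantee each thread executes exactly the prefix of its role corresponding to the \rulen{act} transitions it has taken so far — which follows from the rules in Table~\ref{tab:transitionrules}, but needs to be stated as an auxiliary invariant. A secondary subtlety is that \axname{HON}'s second antecedent quantifies over \emph{all} $P \in IS(\mathcal{Q})$, so after the new \rulen{act} step we simply use the (possibly longer) initial segment; monotonicity of $IS(\mathcal{Q})$ under extension by one action statement is what lets the induction go through. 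Once these invariants are in place, the four-case analysis is routine given the three extension lemmas and the matching argument already developed for \axname{PC}$_\uparrow$.
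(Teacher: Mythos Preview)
Your proposal is correct and follows the same overall skeleton as the paper: induction on the number of transitions in the trace, a four-way case split on the transition rule, with the \rulen{abort}, \rulen{wait}, and \rulen{switch} cases dispatched by Lemmas~\ref{lemma:extabort}--\ref{lemma:extswitch}, and the \rulen{act} case handled via the matching relation and the free-variable argument you mention.

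The one genuine difference is in the \rulen{act} case. You take the rule literally as written, with precondition $\Start(I)$, and so must locate the exact prefix $\trace_1$ at which the acting thread $I$ began, argue $\Start(I)$ holds there, and check that the segment from $\trace_1$ to the current point matches an element of $IS(\mathcal{Q})$. The paper instead treats the second antecedent as $\varphi[P]_I\varphi$ (an invariant form): it simply picks \emph{some} $\trace_1 \prec \trace$ and $P \in IS(\mathcal{Q})$ for which matching holds, applies the induction hypothesis to get $\trace_1, \lambda \models \varphi$ directly, and then uses $\varphi[P]_I\varphi$ to conclude. This avoids all the bookkeeping you identify as the ``main obstacle'' --- there is no need to find the thread's starting point or to verify $\Start(I)$ there --- at the cost of a mismatch with the rule as displayed. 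Your hesitation in the sentence ``we get $\trace_1, \lambda \models \varphi$, hence $\trace_1, \lambda \models \Start(I)$ need not hold --- rather we need $\Start(I)$'' is exactly the symptom: with the $\Start(I)$ precondition the induction hypothesis is not actually used in the \rulen{act} case, whereas with the invariant precondition $\varphi$ it is, and the argument becomes uniform.
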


\begin{proof}
Suppose that $\Qmodels \pred{Start}(I) \Rightarrow \varphi$
and $\Qmodels \varphi[P]_I\varphi$ for all thread-identifiers $I$ and all roles $P
\in IS(\mathcal{Q})$.  We
must show that $\Qmodels \varphi$.
Fix $\Q, \A, \eta, tb, \vec{n}, \lambda$, and let $T = \T_{\Q,\A} (\eta, tb, \vec{n})$.
Consider a trace $\tra \in T$, a valuation $V$, and a thread-identifier $I$.
We show that $\tra, \lambda \models \varphi$ by induction on the number of
transitions in $\tra$.

For the base case, suppose that $\tra$ has no transitions (it is an empty trace or has single configuration $\conf$), then $\tra, \lambda \models
\pred{Start}(I)$.  Since  $\Qmodels \pred{Start}(I) \Rightarrow
\varphi$,
it follows that $\tra, \lambda \models \phi$, as desired.

If $\tra$ has at least one transition, then there exists some trace $\tra'$
ending with a configuration $\conf$  such that
$\tra$ is $\tra'$ extended with the transition $\conf
\rightarrow \conf'$. By the induction hypothesis, we have that $\tra' \models
\varphi$. We now have the following cases for $\conf \rightarrow \conf'$:
\begin{itemize}
	\item{($\rulen{act}$)} In this case, $\conf \rightarrow
          \conf'$ is an action transition.
Since traces are generated by executing the actions of protocol programs in
order, it can be shown that
$\matches{\tra,\lambda}{[P]_I}{\tra_1}{\tra}{\lambda'}$, for some $P \in
IS(\Q)$, and some $\tra_1 \prec \tra$. By the induction hypothesis, we know that
$\tra_1, \lambda \models \varphi$. Also, by the antecedent, we know that $\tra,
\lambda \models \varphi[P]_I\varphi$. Therefore, $\tra, \lambda' \models
\varphi$.  We can assume without loss of generality that the free variables in
$\varphi$ are not bound by $[P]_I$, since this can be ensured via renaming. We
can also show by the definition of matching that $\lambda(x) = \lambda'(x)$ for
all variables $x$ that are not bound by $[P]_I$. Therefore, we have that $\tra,
\lambda \models \varphi$.
 \item{($\rulen{abort}$)} By Lemma \ref{lemma:extabort}, $\tra, \lambda \models
\varphi$.
 \item{($\rulen{wait}$)} By Lemma \ref{lemma:extwait}, $\tra, \lambda \models
\varphi$.
 \item{($\rulen{switch}$)} By Lemma \ref{lemma:extswitch}, $\tra, \lambda \models
\varphi$.

\end{itemize}
\end{proof}

\paragraph*{Additional axioms}
In the main text, we did not give axioms for \pred{Contains},
\pred{Fresh} or equality.  We now give these axioms.  The proof of soundness for
each of these axioms is straightforward, and is left to the reader.
Axioms \axname{COMP1}-\axname{COMP7} capture computability, and \axname{NCOMP} captures non-computability. Axioms \axname{FS1}-\axname{FS5} capture freshness of nonces. Axioms \axname{EQ1} and \axname{EQ2} capture
equality.

\smallskip
\begin{description}[style=unboxed]
  \item[\axname{COMP1}.] $ \Bel{0}{ \pred{Contains} (m, m)}$
  \item[\axname{COMP2}.] $ \Bel{0}{\pred{Contains} (x, y) \land \pred{Contains} (y, z) \Rightarrow \pred{Contains} (x, z)}$
  \item[\axname{COMP3}.] $ \Bel{0}{\pred{Contains}(n, m_0) \Rightarrow \pred{Contains} (n, \<m_0, m_1\>)}$
  \item[\axname{COMP4}.] $ \Bel{0}{\pred{Contains}(n, m_1) \Rightarrow \pred{Contains} (n, \<m_0, m_1\>}$
\item[\axname{COMP5}.] $ \Bel{0}{\pred{Contains}(x, y) \ [P]_I \
\pred{Contains} (x, y)}$
		\item[\axname{COMP6}.] $ \Bel{0}{ \true \ [m' \varbind \action{sign} \ m, k]_I \ \pred{Contains} (m, m')}$
		\item[\axname{COMP7}.]  $ \Bel{0}{ \true \ [\_ \varbind \action{verify} \ m', m, k]_I \ \pred{Contains} (m', m)}$
\item[\axname{NCOMP}.]  If $m$ and $m'$ are atomic and $m \ne m'$, then
$\Bel{0}{\neg \pred{Contains} (m, m') }$.
	\item[\axname{FS1}.] $ \Bel{0}{ \true \ [n \varbind \action{new} ]_I \ \pred{Fresh} (I, n)}$
		\item[\axname{FS3}.] $ \Bel{0}{ \pred{Fresh} (I, n) \ [\action{no send actions}]_I \pred{Fresh} (I, n)}$
		\item[\axname{FS4}.] $ \Bel{0}{ \pred{Fresh} (I, n) \ [\action{send} \ m]_I \neg \pred{Contains} (n, m) \Rightarrow \pred{Fresh} (I, n)}$
		\item[\axname{FS5}.] $ \Bel{0}{ \pred{Fresh} (I, n) \ [\action{send} \ m]_I
		\pred{Contains} (n, m) \Rightarrow \pred{FirstSend}
		(I, n, m)}$
        \item[\axname{EQ1}.] $\<t_1, \cdots, t_k\> = \<t'_1, \cdots, t'_k\> \Rightarrow (t_1 = t'_1) \land \cdots \land (t_k = t'_k) $
    \item[\axname{EQ2}.] $\varphi \land (x = t) \Rightarrow [t/x]\varphi $
  \end{description}
\smallskip

Our proofs also use axioms for local reasoning about the temporal ordering
of actions in a program.
In these axioms, $a, a_1, a_2 \in \{ \action{send},
\action{receive}, \action{new}, \action{verify} \}$.

\begin{description}
  \item[\axname{AA1}.]  $\Bel{0}{\true [\cdots x \leftarrow a \ t; \cdots]_{I} \pred{a} (I, x, t)}$
  \item[\axname{AA2}.]  $\Bel{0}{\true [\cdots x_1 \leftarrow a_1 \ t_1; \cdots x_2 \leftarrow a_2
      \ t_2 \cdots]_{I}( \pred{a}_1 (I, x_1, t_1) < \pred{a}_2 (I, x_2, t_2)})$
  \item[\axname{AA3}.]  $\pred{a}_1 (I, x_1, t_1) < \pred{a}_2 (I, x_2, t_2) \Rightarrow \pred{a_1} (I, x_1, t_1)$
    \item[\axname{AA4}.]  $\pred{a}_1 (I, x_1, t_1) < \pred{a}_2 (I, x_2, t_2) \Rightarrow \pred{a_2} (I, x_2, t_2)$
    \item[\axname{AA5}.] If for all $\lambda$, $\lambda(\pred{a}(I, x, t)) \not=\lambda(\pred{a'}(I, x', t'))$, then $\neg\pred{a}(I, x, t)[x' \leftarrow a'\ t']_I\neg \pred{a}(I, x, t)$
    \item[\axname{AA6}.] $\neg\pred{a}(I, x, t)[x' \varbind a\ t']_X (\pred{a}(I, x, t) \Rightarrow (x' = x) \land (t' = t))$
    \item[\axname{Start}.] $\pred{Start}(I) \Rightarrow \neg\pred{a}(I, x, t)$
\end{description}

\noindent Informally, axiom \axname{AA1} states that that if $a$ is an action
and $\pred{a}$ is the corresponding action predicate, then after the thread with identifier $I$
executes $(x \leftarrow a \ t;)$, $\pred{a} (I, x, t)$ holds. Axiom \axname{AA2} establishes an
order between actions in a protocol program. Axioms \axname{AA3} and
\axname{AA4} state that if we can establish an order between two actions, both
actions must have actually occurred. Axiom \axname{AA5} states that if an action predicate does not unify with an action, and if the action predicate is false before the action, then it is false after the action. Axiom \axname{Start} states that no
actions with thread identifier $I$ have occurred if $\pred{Start}(I)$ holds on a trace.

\subsection{Full Proof of Protocol Example}
\label{app:example}
We now provide a full, formal proof of the correctness of the
protocol example.

Using \axname{AA1},
$$
\true [\cordname{Init}_{CR}(\agent{B})]_A \, \pred{Verify} (A, s,
    \<{\resp, y, m, \agent{A}}\>, \agent{B})
$$
is provable.  Thus, by \axname{B1} and \axname{B2} so is
$$
B^0(\true [\cordname{Init}_{CR}(\agent{B})]_A \, \pred{Verify} (A, s,
\<{\resp, y, m, \agent{A}}\>, \agent{B})).
$$
Choose $\Delta$ to be $\pred{Honest}(\agent{B})$.
By \axname{VER} instantiated to principal $\agent{B}$,
$$
\QDentails	B^\ever(
	 \pred{Verify} (A, s,  \<{\resp, y, m, \agent{A}}\>, \agent{B})\Rightarrow
	 \exists \iota. \pred{Sign} (\<\hat{B}, \iota\>, s, \<\resp, y, m,
	 \agent{A}\>) )
$$
is provable.
By \axname{PC}$_{\uparrow}$, so is
$$
			\begin{array}{l}
\QDentails	B^\ever(\true [\cordname{Init}_{CR}(\agent{B})]_A(
\pred{Verify} (A, s,  \<{\resp, y, m, \agent{A}}\>,
\agent{B}) \Rightarrow
 \exists \iota. \pred{Sign} (\<\hat{B}, \iota\>, s, \<\resp, y, m, \agent{A}\>))).
\end{array}
$$
By \axname{PC}$_{\Rightarrow}$, so is
$$
	B^{\ever}(
		 \true [\cordname{Init}_{CR}(\agent{B})]_A \ \exists \iota. \pred{Sign} (\<\hat{B}, \iota\>, s, \<\resp, y, m, \agent{A}\>))
	.
$$

Let $B$ abbreviate the term $\<\agent{B}, \iota\>$. We have that
\begin{equation}
	B^{\ever}(
		 \true [\cordname{Init}_{CR}(\agent{B})]_A \exists\iota.  \
       \pred{Sign} (B, s, \<\resp, y, m, \agent{A}\>)).
	\label{eq:ysigned1}
\end{equation}
Next, we wish to use the \axname{HON} rule to prove the following invariant $\varphi_{inv}$ about protocol
programs.
\[
  \varphi_{inv} = 	\pred{Sign} (Y, s', \<\resp, y', m',
				\agent{X}\>) \Rightarrow
         \Receive(Y, m') \land \pred{FirstSend} (Y,y', \<y', s'\>) \land \Phi'
 	)
\]
where
$$
	\Phi' \equiv \Send(Y,\<y', s'\>) \Rightarrow \Receive(Y,m') <
 				\Send(Y,\<y', s'\>).
$$

Therefore, we need to show that $\pred{Start}(X) \Rightarrow \varphi_{inv}$, and
for all initial segments $P$ of protocol roles $\varphi_{inv}[P]_X\varphi_{inv}$.
By axiom \axname{Start}, we have:

\begin{equation}
    \pred{Start}(Y) \Rightarrow \neg \pred{Sign} (Y, s', \<\resp, y', m',\agent{X}\>).
    \label{eq:HON01}
\end{equation}
And therefore, by straightforward first order reasoning,
\begin{equation}
  \pred{Start}(Y) \Rightarrow \varphi_{inv}.
    \label{eq:HON0}
\end{equation}

We need to show for every initial segment $P$ of the roles of the protocol,
$\pred{Start}(Y)[P]_Y\varphi_{inv}$. We show this for three interesting cases.

We first consider the initial segment $P_1 = \<x, \agent{X}\> \varbind \irl{receive}$.
Using \axname{AA5}, we have:

\begin{equation}
  \neg\pred{Sign}(Y, r, \<\resp, n, x, \agent{X}\>)[P_1]_Y\neg\pred{Sign}(Y, r, \<\resp, n, x, \agent{X}\>).
  \label{eq:HON1.1}
\end{equation}

By straightforward first order reasoning and using \axname{Start} and \axname{G3}, we have:
\begin{equation}
  \pred{Start}(Y)[P_1]_Y\varphi_{inv}.
  \label{eq:HON1.2}
\end{equation}

Next, we consider the initial segment $P_2 = \<x, \agent{X}\> \varbind \irl{receive}; n \varbind \irl{new}; r \varbind \irl{sign}\<\resp, n, x, \agent{X}\>$.
Using \axname{AA1}, we have:

\begin{equation}
  true[P_2]_Y\pred{Receive}(Y, x).
  \label{eq:HON2.1}
\end{equation}

Weakening using \axname{G3}, we have:
\begin{equation}
  \pred{Start}(Y)[P_2]_Y\pred{Receive}(Y, x).
  \label{eq:HON2.1.1}
\end{equation}

Also, using \axname{AA5}, we have:
\begin{equation}
  \neg\pred{Send}(Y, t')[P_2]_Y\neg\pred{Send}(Y, t').
  \label{eq:HON2.2}
\end{equation}

Also, by the definition of $\pred{FirstSend}$, we can show that:
\begin{equation}
  \neg\pred{Send}(Y, t') \Rightarrow \pred{FirstSend}(Y, n, \<n, r\>).
\label{eq:HON2.3}
\end{equation}

Therefore, using \axname{Start}, \axname{G3} and (\ref{eq:HON2.2}) we have:
\begin{equation}
  \pred{Start}(Y)[P_2]_Y\pred{FirstSend}(Y, n, \<n, r\>).
  \label{eq:HON2.4}
\end{equation}

Similarly, we can show that:
\begin{equation}
  \pred{Start}(Y)[P_2]_Y\neg\pred{Send}(Y,\langle r, n\rangle).
  \label{eq:HON2.5}
\end{equation}
And therefore, by \axname{G3},

\begin{equation}
  \pred{Start}(Y)[P_2]_Y\pred{Send}(Y,\langle r, n\rangle) \Rightarrow \pred{Receive}(Y,x) < \pred{Send}(Y,\langle r, n\rangle).
  \label{eq:HON2.6}
\end{equation}

Therefore, from (\ref{eq:HON2.2}), (\ref{eq:HON2.4}), (\ref{eq:HON2.6}), and using \axname{G1}, we have:

\begin{equation}
  \pred{Start}(Y)[P_2]_Y \pred{Receive}(Y, x) \land \pred{FirstSend}(Y, n, \langle n, r\rangle) \land \pred{Send}(Y,\langle r, n\rangle) \Rightarrow \pred{Receive}(Y,x) < \pred{Send}(Y,\langle r, n\rangle).
  \label{eq:HON2.7}
\end{equation}

Now, using \axname{AA5}, \axname{AA6}, and \axname{Start}, for fresh $u$ we can show that:

\begin{equation}
  \pred{Start}(Y)[P_2]_Y\pred{Sign}(Y, u) \rightarrow u = \<r,\<\resp, n, x, \agent{X}\>\>
  \label{eq:HON2.8}
\end{equation}

Substituting $u$ with $\<s', \<\resp, y', m', \agent{X}\>\>$, and using \axname{EQ1}, we have:

\begin{equation}
  \pred{Start}(Y)[P_2]_Y\pred{Sign}(Y, \<s', \<\resp, y', m', \agent{X}\>\>) \Rightarrow (s' = r) \land (y' = n) \land (m' = x).
  \label{eq:HON2.9}
\end{equation}

Applying \axname{G1} to \ref{eq:HON2.7} and \ref{eq:HON2.9}, and rearranging terms, we have:

\begin{equation}
  \begin{array}{l}
  \pred{Start}(Y)[P_2]_Y\pred{Sign}(Y, \<s', \<\resp, y', m', \agent{X}\>\>) \Rightarrow \\
      \quad\quad\quad (s' = r) \land (y' = n) \land (m' = x) \\
      \quad\quad\quad \pred{Receive}(Y, x) \land \pred{FirstSend}(Y, n, \langle n, r\rangle) \land \pred{Send}(Y,\langle r, n\rangle) \Rightarrow \pred{Receive}(Y,x) < \pred{Send}(Y,\langle r, n\rangle).
    \end{array}
  \label{eq:HON2.10}
\end{equation}

Using \axname{EQ2} on (\ref{eq:HON2.10}), we have what we wanted to prove.

\begin{equation}
  \begin{array}{l}
  \pred{Start}(Y)[P_2]_Y\pred{Sign}(Y, \<s', \<\resp, y', m', \agent{X}\>\>) \Rightarrow \\
      \quad\quad\quad \pred{Receive}(Y, x) \land \pred{FirstSend}(Y, s', \langle y', m'\rangle) \land \pred{Send}(Y,\langle y', s'\rangle) \Rightarrow \pred{Receive}(Y,m') < \pred{Send}(Y,\langle y', s'\rangle).
    \end{array}
  \label{eq:HON2.11}
\end{equation}

Finally, we consider the initial segment $P_3 = \<x, \agent{X}\> \varbind \irl{receive}; n \varbind \irl{new}; r \varbind \irl{sign}\<\resp, n, x, \agent{X}\>; \irl{send}\<\agent{X}, <n,r>\>$.

Similar to  (\ref{eq:HON2.1.1}), we can show that

\begin{equation}
  \pred{Start}(Y)[P_3]_Y\pred{Receive}(Y, x).
  \label{eq:HON3.1}
\end{equation}

Using \axname{FS1}, we have :

\begin{equation}
  true[n \varbind \irl{new}]_Y\pred{Fresh}(Y, n).
  \label{eq:HON3.2}
\end{equation}
We can also show using \axname{G3} that:
\begin{equation}
  \pred{Start}(Y)[\<x, \agent{X}\> \varbind \irl{new}]_Y true.
  \label{eq:HON3.3}
\end{equation}
Therefore, by \axname{S1}, (\ref{eq:HON3.2}), and (\ref{eq:HON3.3}):

\begin{equation}
  \pred{Start}(Y)[\<x, \agent{X}\> \varbind \irl{receive}; n \varbind \irl{new}]_Y \pred{Fresh}(Y, n).
  \label{eq:HON3.4}
\end{equation}

By \axname{FS3}, we have:

\begin{equation}
  \pred{Fresh}(I, n)[ r \varbind \irl{sign}\<\resp, n, x, \agent{X}\> ]\pred{Fresh}(I, n)
  \label{eq:HON3.5}
\end{equation}

Therefore, by \axname{S1}, (\ref{eq:HON3.4}), and (\ref{eq:HON3.5}):

\begin{equation}
  \pred{Start}(Y)[P_2]_Y \pred{Fresh}(Y, n).
  \label{eq:HON3.6}
\end{equation}

Also, using \axname{COMP1} and \axname{COMP3}, we have:

\begin{equation}
  \pred{Contains}(n, \<n, r\>).
  \label{eq:HON3.7}
\end{equation}

Therefore, using \axname{FS5} and (\ref{eq:HON3.7}), we have:

\begin{equation}
  \pred{Fresh}(Y, n)[\irl{send}\<n, r\>)]_Y\pred{FirstSend(Y, n, \<n, r\>)}.
  \label{eq:HON3.8}
\end{equation}

Therefore, by \axname{S1}, (\ref{eq:HON3.6}), and (\ref{eq:HON3.8}):

\begin{equation}
  \pred{Start}(Y)[P_3]_Y \pred{FirstSend}(Y, n, \<n, r\>).
  \label{eq:HON3.9}
\end{equation}

Also, by \axname{AA2}, we have:
\begin{equation}
  \pred{Start}(Y)[P_3]_Y \pred{Receive}(Y,x) < \pred{Send}(Y,\langle r, n\rangle)
\end{equation}

Therefore, we have
\begin{equation}
  \pred{Start}(Y)[P_3]_Y \pred{Send}(Y,\langle r, n\rangle) \Rightarrow
 \pred{Receive}(Y,x) < \pred{Send}(Y,\langle r, n\rangle)
  \label{eq:HON3.10}
\end{equation}

Therefore, by \axname{G1},  (\ref{eq:HON3.1}), (\ref{eq:HON3.9}), and (\ref{eq:HON3.10}).

\begin{equation}
  \pred{Start}(Y)[P_3]_Y \pred{Receive}(Y, x) \land \pred{FirstSend}(Y, n, \langle n, r\rangle) \land \pred{Send}(Y,\langle r, n\rangle) \Rightarrow \pred{Receive}(Y,x) < \pred{Send}(Y,\langle r, n\rangle).
  \label{eq:HON3.11}
\end{equation}

Analogously, to proof steps of (\ref{eq:HON2.7})-(\ref{eq:HON2.10}), we can use \ref{eq:HON3.11} to show:

\begin{equation}
  \begin{array}{l}
  \pred{Start}(Y)[P_3]_Y\pred{Sign}(Y, \<s', \<\resp, y', m', \agent{X}\>\>) \Rightarrow \\
      \quad\quad\quad \pred{Receive}(Y, x) \land \pred{FirstSend}(Y, s', \langle y', m'\rangle) \land \pred{Send}(Y,\langle y', s'\rangle) \Rightarrow \pred{Receive}(Y,m') < \pred{Send}(Y,\langle y', s'\rangle).
    \end{array}
  \label{eq:HON3.12}
\end{equation}

We can show $\pred{Start}(Y)[P]_Y\varphi_{inv}$ for every initial sequence, and by using \axname{HON},
we obtain the following invariant about protocols:

\begin{equation}
	B^{0}(
				\pred{Sign} (Y, s', \<\resp, y', m',
				\agent{X}\>) \Rightarrow
         \Receive(Y, m') \land \pred{FirstSend} (Y,y', \<y', s'\>) \land \Phi'
 	).
\label{eq:HON}
\end{equation}
By instantiating some of the free variables in (\ref{eq:HON})
and using axiom B1, we get that
$$
	B^0({
			\pred{Sign} (B, s, \<\resp, y, m, \agent{A}\>)
		 \Rightarrow \Receive(B,m) \land  \pred{FirstSend} (B,y, \<y, r\>) \land \Phi
	})
$$
is provable, where
\begin{align*}
	\Phi &\equiv \Send(B,\<y, s\>) \Rightarrow \Receive(B,m) <
			\Send(B,\<y, s\>).
\end{align*}

By \axname{PC}$_{\uparrow}$,
$$
	B^0(
 	\true [\cordname{Init}_{CR}(\agent{B})]_A \ \pred{Sign} (B, s, \<\resp, y, m, \agent{A}\>)
 \Rightarrow
		(
\Receive(B, m) \land \pred{FirstSend} (B,y, \<y, s\>) \land \Phi
)
 	)
$$
is provable.
By \axname{PC}$_{\Rightarrow}$ and (\ref{eq:ysigned1}),
$$
\QDentails	B^{\ever}({
	 	\true [\cordname{Init}_{CR}(\agent{B})]_A  \
  \exists\iota.\ (
\Receive(B, m) \land \pred{FirstSend} (B,y, \<y, s\>) \land \Phi
   )
 	})
$$
is provable.
By \axname{FS1, COMP} so is
$$
	B^{0}(\true [\cordname{Init}_{CR}(\agent{B})]_A
			(\pred{FirstSend} (A, m, m) \land \pred{Contains} (m, m)
			 \land \pred{Contains} (y,\<y,s\>))).
$$
Therefore by \axname{PC}$_{\land}$ and B1, so is
\begin{equation}
\begin{array}{l}
\QDentails	B^{\ever}(
	 \true [\cordname{Init}_{CR}(\agent{B})]_A\exists \iota.\ (
	 			\Receive(B, m) \land \pred{FirstSend} (A, m,  m) \land \pred{Contains} (m, m) \land \\
\quad\quad\quad\quad\quad\quad\quad \pred{Receive} (A, \<y,s\>) \land \pred{FirstSend}(B,y, \<y, s\>)\land
					\pred{Contains} (y, \<y,s\>) \land \Phi
	)).
\end{array}
\label{eq:hon-facts}
\end{equation}

We instantiate \axname{FS2} with message $m$ and nonce $m$, and
use this to establish $\Send(A,m) < \Receive(B,m)$:
\begin{equation}
	B^{\efstwo} (
			\pred{FirstSend} (A, m, m) \land
				\Receive(B,m) \land \pred{Contains} (m, m)
	\Rightarrow	\Send(A,m) < \Receive(B,m)
		)
\label{eq:msg-ord1}
\end{equation}

Similarly, we instantiate \axname{FS2} with message $\<y, s\>$ and nonce $s$. We will use this to establish
				$\Send(B,\<y, s\>) < \Receive(A,\<y, s\>)$.

\begin{equation}
	B^{\efstwo} (
					\pred{FirstSend} (B,
y, \<y, s\>) \land
				\Receive(B,\<y, s\>)
\land \pred{Contains} (y, \<y, s\>)
			\Rightarrow
				\Send(B,\<y, s\>) < \Receive(A,\<y, s\>)
		).
\label{eq:msg-ord2}
\end{equation}

Therefore, by the conjunction of (\ref{eq:msg-ord1}) and
(\ref{eq:msg-ord2}) and using \axname{B3}: 
\begin{equation}
				\begin{array}{l}
	B^{2\cdot\efstwo} (
		\pred{FirstSend} (A,
m, m) \land
			\Receive(B,m)
\land \pred{Contains} (m, m) \land
\pred{FirstSend} (B, y, \<y, s\>) \\
\quad\quad \land\, (\Receive(B,\<y, s\>) \land \pred{Contains} (y, \<y, s\>) \\
		\quad\quad\quad			\Rightarrow
\Send(A,m) < \Receive(B,m)) \land
			\Send(B,\<y, s\>) <
\Receive(A,\<y, s\>)
		).
				\end{array}
\label{eq:msg-ord-conj}
\end{equation}

Therefore by (\ref{eq:hon-facts}), (\ref{eq:msg-ord-conj}), and using \axname{PC}$_{\uparrow}$ and \axname{PC}$_{\Rightarrow}$,
$$
\QDentails	B^{\ever+2\cdot\efstwo}(
	 	\true [\cordname{Init}_{CR}(\agent{B})]_A \exists\iota.\
			\Send(A,m) < \Receive(B,m) \land
			\Send(B,\<y, s\>) <
\Receive(A,\<y, s\>) \land \Phi
		)
	).
$$
Finally, by $\axname{AA4}$ and $\axname{PC}_{\Rightarrow}$ on $\Phi$,
$$
\begin{array}{ll}
\QDentails	B^{\ever+2\cdot\efstwo}(
	 	\true [\cordname{Init}_{CR}(\agent{B})]_A \exists\iota.\  (
\Send(A,m) < \Receive(B,m) \\ \quad \quad \land 
		\Receive(B,m) < \Send(B,\<y,
s\>) \land
		\Send(B,\<y, s\>) < \Receive(A,\<y,
 s\>)
		)
	).
			\end{array}
$$

\newcommand{\V}{{\cal V}}

\subsection{A First-Order Logic of Belief}
\label{section:conditionallogic}

Halpern \cite{Hal37} presents a ``quantitative'' variant of first-order
conditional logic that includes formulas whose semantics capture the
form of conditional probability statements that are common in cryptographic
security definitions.
In this paper, we use a simplified version of that logic as a foundation,
and show how we can
devise an elegant, powerful variant of the logic considered in
\cite{DDMST05}. We summarize below the relevant results of \cite{Hal37}.

The syntax of (qualitative) first-order conditional logic is straightforward.
Fix a finite first-order vocabulary $\V$ consisting, as usual,
of function
symbols, predicate symbols, and constants.  Starting with atomic
formulas of first-order logic over the vocabulary $\V$, we form more
complicated formulas by closing off under the standard truth-functional
connectives
(i.e., $\land$ ,$\lor$, $\neg$, and $\rimp$), first-order
quantification,
and the binary modal operator $\Cond$.  Thus, a typical formula is
$\forall x (P(x) \Cond \exists y (Q(x,y) \Cond R(y)))$.
Let $\LCondfo(\V)$ be the resulting language.
We are most interested in the fragment $\LCondfo^0(\V)$ of
$\LCondfo(\V)$ consisting of all formulas in $\LCondfo(\V)$
that have the form
$
\forall x_1 \ldots \forall x_n ( \phi \Cond \psi)$, where
$\phi$ and $\psi$ are first-order formulas.
(We henceforth omit the $\V$ unless it is necessary for clarity.)
Note that in $\LCondfo^0$ we cannot negate an $\rarrow$
formula, nor can we take the conjunction of two $\rarrow$ formulas.
As we shall
see, the fact that we cannot take the conjunction is not a real lack of
expressive power, since we consider sets of formulas (where a set can be
identified with the conjunction of the formulas in the set).
However, the lack
of negation does make for a loss of expressive power.

There are many approaches to giving semantics to first-order
conditional logic (see \cite{Hal31} for an overview).  The one most
relevant here is what has been called $\epsilon$-semantics \cite{GMPfull}.
Under this approach, 
the semantics of formulas in $\LCondfo(\V)$ is given with respect to
\emph{PS structures}.
A PS structure is a tuple $M = (D, W,\pi, \P)$, where $D$ is a domain,
$W$ is a set of worlds, $\pi$ is an \emph{interpretation} that
associates with each predicate symbol (resp., function symbol,
constant symbol) in $\V$ and world $w \in W$ a predicate (resp.,
function, domain element) of the right arity, and $\P = \<\Pr_1,
\Pr_2, \ldots\>$ is a probability sequence.  As
usual, a \emph{valuation} $V$
associates with each variable $x$ an element $V(x) \in D$.

Given a valuation $V$, we give semantics to $\land$, $\neg$, $\rimp$,
and
$\forall$ in the standard way.  In particular, we determine whether a
first-order formula $\phi$ is true at a world $w \in W$ as usual.  In
this case, we write   $(M,V,w) \satp \phi$.  Let
$\intension{\phi}_{M,V} = \{w: (M,V,w)
\satp\phi\}$.
If $\phi$ is a closed formula, so that its truth does
not depend on the valuation, we occasionally write $\intension{\phi}_M$
rather than $\intension{\phi}_{M,V}$.
We write $(M,V) \satp \phi$ if $(M,V,w) \satp \phi$ for all worlds $w$.

The truth of an $\rarrow$ formula does not depend on
the world, but only on the PS structure.
We can consider two possible semantics for conditional formulas in PS
structures.  The first just asks for convergence: 
$$(M,V) \satp \phi \Cond \psi \mbox{ if } \lim_{n \rightarrow \infty} {\Pr}_n
(\intension{\psi}_{M,V}  \mid \intension{\phi}_{M,V}) = 1,$$
where ${\Pr}_n (\intension{\psi}_{M,V}  \mid \intension{\phi}_{M,V})$ is
taken to be 1 if ${\Pr}_n(\intension{\phi}_{M,V}) = 0$.

The second approach may be of more interest in cryptographic applications, and
asks for super-polynomial convergence.
In this approach, we define
\begin{itemize}
\item[]
$(M,V) \satp \phi \Cond \psi$ if for all $k \ge 1
$, there exists some $n_k \ge 0$
such that, for all $n \ge n_k$,
${\Pr}_n (\intension{\psi}_{M,V}  \mid \intension{\phi}_{M,V}) \ge 1 -
(1/n^k)$.\footnote{Note that it  follows that if $p(n)$ is a polynomial whose leading
coefficient is positive, then ${\Pr}_n (\intension{\psi}_{M,V}  \mid
\intension{\phi}_{M,V}) \ge 1 - (1/p(n))$ for sufficiently large $n$.
In \cite{Hal37}, this was the condition used to define the semantics of
$\phi \Cond \psi$.  The approach used here is simpler, but equivalent.}
\end{itemize}

It turns out that the two appraoches are characterized by the same
axiom system, so we do not distinguish them further here.
As usual, we write $M \satp \phi$ if $(M,V) \satp \phi$ for all
valuations $V$, and $\M \satp \phi$ if $M \satp \phi$ for all PS
structures in a set $\M$.

In this paper, we need only formulas where the antecedent of
$\Cond$ is $\true$.  We take $B \phi$ (which can be read ``$\phi$ is
believed'' or ``the agent believes $\phi$') to be an abbreviation of $\true
\rarrow \phi$.  Roughly speaking, $B \phi$ holds if the probability of
$\phi$ approaches 1 faster than any inverse polynomial.  Let
$\LCondfo^B$ be the fragment of $\LCondfo^0$ where $\Cond$ occurs only
in the context $B$.
As we have seen, such formulas are used in \axname{VER}; they are also
needed in, for example, the CMA security definition for digital
signatures~\cite{BR-notes01}).

Sound and complete axiomatizations for first-order conditional logic
are well known \cite{FHK}.
Halpern \citeyear{Hal37} gives a sound and complete axiomatization for
$\LCondfo^0$.  The advantage of considering this fragment is that it
lends itself naturally to quantitative reasoning.  
We now review the axiomatization of $\LCondfo^0$, show how to
specialize it to $\LCondfo^B$,
and give the quantitative version of the axioms.

\newcommand{\Lfo}{\L^{\mathit{fo}}} 
\newcommand{\LCondBq}{\LCondfo^{B,q}}
\newcommand{\PfoB}{\mathbf{P}^{B}_{\Lambda}}
\newcommand{\PfoBq}{\mathbf{P}^{B,q}_{\Lambda}}
\newcommand{\Pfopq}{\mathbf{P}_\Lambda^{+,q}}
\newcommand{\inter}{\cap}

\subsection{A Sound and Complete Axiomatization for First-Order Conditional
Logic}
\label{sec:focl-details}

To describe the axioms, we define two more languages.  Let $\Lfo$
be the pure first-order fragment of $\LCondfo$, consisting of
$\rarrow$-free formulas;
let $\LCondfoo$ be the fragment of $\LCondfo^0$ consisting
of all formulas of the form $\phi \Cond \psi$ where $\phi$ and $\psi$
are closed first-order formulas.   The axiomatization is based on system
$\sysPL$, also known as \emph{the KLM properties} \cite{KLM}, restricted
to models where all worlds satisfy a particular first-order theory
$\Lambda$.  The system $\sysPL$ applies only to formulas in $\LCondfoo$
(so $\rarrow$ formulas cannot be universally quantified); we will need
additional axioms to deal with quantification.

We state the axioms in terms of statements of the form
$\Delta\hra \phi$, where $\Delta \subseteq \Lfo \union \LCondfo^0$
and $\phi \in \LCondfo^0$.  Roughly speaking, $\Delta\hra \phi$ is
interpreted as ``$\phi$ follows from the formulas in $\Delta$''.
We write $(M,V) \satp \Delta \hra \phi$ if $(M,V) \satp \phi' $
for every formula $\phi' \in \Delta$ implies that $(M,V)
\satp \phi$. A collection $\Delta$ of formulas in $\LCondfo^0$ can be
written as $\Deltar \union \Deltafo$, where 
$\Deltar \subseteq \LCondfo^0$ and $\Deltafo \subseteq \Lfo$.

We first describe the axioms of $\sysPL$, adapted to this framework.
The axioms all have conclusions of the form $\phi \Cond \psi$, since
they are intended for reasoning about $\LCondfoo$.
Let $\vdashL$ denote provability in first-order logic given the axioms
in the theory $\Lambda$.
(In describing these axioms, we keep the naming convention used in
\cite{Hal37}, which in turn is based on \cite{KLM}, for consistency.)

\begin{itemize}
  \item[LLE$^+$.]
If $\vdash_{\Lambda \union \Deltafo} \phi_1 \dimp \phi_2$,
then from $\Delta \hra \phi_1 \Cond \psi$ infer $\Delta \hra \phi_2
\Cond \psi$ (left logical equivalence).
  \item[RW$^+$.]
If $\vdash_{\Lambda \union \Deltafo} \psi_1 \rimp \psi_2$, then from
  $\Delta \hra \phi\Cond\psi_1$ infer
     $\Delta \hra \phi\Cond\psi_2$ (right weakening).
  \item[REF.] $\Delta \hra \phi\Cond\phi$ (reflexivity).
  \item[AND.] From $\Delta \hra \phi\Cond\psi_1$ and $\Delta \hra \phi\Cond\psi_2$ infer
     $\Delta \hra \phi\Cond \psi_1 \land \psi_2$.
  \item[OR.] From $\Delta \hra \phi_1\Cond\psi$ and $\Delta \hra
\phi_2\Cond\psi$ infer
     $\Delta \hra \phi_1\lor\phi_2\Cond \psi$.
  \item[CM.] From $\Delta \hra \phi_1\Cond\phi_2$ and $\Delta \hra \phi_1\Cond\psi$ infer
     $\Delta \hra \phi_1 \land \phi_2 \Cond \psi$ (cautious monotonicity).
\end{itemize}
It is interesting to contrast LLE$^+$, RW$^+$, and CM.
While RW$^+$ allows a formula $\psi_1$ on the right-hand side of
$\Cond$ to be replaced by a weaker formula $\psi_2$ (that is, a
formula such that that $\psi_1 \rimp \psi_2$ is provable), LLE$^+$
just allows a formula $\phi_1$ on the left-hand side of $\Cond$ to be
replaced by an equivalent formula $\phi_2$, rather than a stronger
formula.
CM allows the replacement of a formula $\phi_1$ on the left-hand side
by a stronger formula, $\phi_1 \land \phi_2$, but only if $\phi_1
\Cond \phi_2$ holds.
Intuitively, this says that if $\phi_2$ and $\psi$ each almost always
hold given $\phi_1$, then $\psi$ almost always holds given both
$\phi_1$ and $\phi_2$.
Monotonicity does not hold in general.
That is, if $\phi_1 \rimp \phi_2$ is provable and $\phi_2 \Cond
\psi$ holds, then $\phi_1 \Cond \psi$ does not necessarily hold.
For a simple
counterexample, it is not the case that if $\true \Cond \psi$ holds
then $\neg
\psi \Cond \psi$ holds.
If $\psi(x)$ states that $x$ cannot forge a signature, we might expect
that almost always, $x$ cannot forge the signature ($\true
\Cond \psi(x)$), but it surely is not the case that $x$ cannot forge
the signature given that $x$ can forge it.
The other two axioms AND and OR are easy to explain. AND says
that if both $\psi_1$ and $\psi_2$ almost always hold given $\phi$,
then so does $\psi_1 \land\psi_2$, while OR allows reasoning by cases:
if $\psi$ almost always holds given each of $\phi_1$ and $\phi_2$,
then it almost always holds given their disjunction.

The following additional axioms account for universal quantification,
and the fact that $\LCondfo^0$ includes  first-order formulas.
One of the axiom uses the notion of \emph{interpretation-independent}
formula, which is a first-order formula that does not mention any
constant, function, or predicate symbols (and thus, is a formula whose
atomic predicates are all of the form $x=y$).
In the axioms below, we assume that $\Lambda$ is a first-order theory.

\begin{itemize}
\item[$\Lambda$-AX$^+$.] If $\phi \in \Lfo$ and $\vdash_{\Lambda \union
\Deltafo} \phi$,
then from  $(\Delta \union \{\phi\}) \hra \psi$ infer
$\Delta \hra \psi$.\footnote{This rule has changed somewhat from the one in \cite{Hal37},
to make it more convenient to apply.}
\item[F1$^+$.] If $z$ is a variable that does not appear in $\phi$, then
from $(\Delta \union \phi[x/z]) \hra \psi$ infer
$(\Delta \union \forall x \phi) \hra \psi$.
\item[F3$^+$.] If $x$ does not appear free in $\Delta$, then from
$\Delta \hra \phi$ infer $\Delta \hra \forall x \phi$.
\item[EQ.] If $x$ does not appear free in $\Delta$, $\phi$, or $\psi$,
and $\sigma$ is a first-order formula, then
from $(\Delta \union \{\sigma\}) \hra \phi$ infer
$(\Delta \union \{\exists x \sigma\}) \hra \phi$
(existential quantification).
\item[REN.] If $y_1, \ldots, y_n$ do not appear in $\phi$,
then from  $\Delta \hra \forall x_1 \ldots \forall x_n \phi$ infer
$\Delta \hra \forall y_1 \ldots \forall y_n (\phi[x_1/y_1, \ldots, x_n/y_n])$
(renaming).
\item[II.] If $\sigma_1$ and $\sigma_2$ are interpretation-independent,
then from $(\Delta \union \{\sigma_1\}) \hra \phi$ and
$(\Delta \union \{\sigma_2\}) \hra \phi$, infer 
$(\Delta \union \{\sigma_1 \lor \sigma_2\}) \hra \phi$ 
(interpretation independence).
\end{itemize}

Let $\Pfop$ be the axiom system consisting of LLE$^+$, RW$^+$, REF, AND, OR,
CM, $\Lambda$-AX$^+$, F1$^+$, F3$^+$, EQ, REN, and II.
A \emph{derivation from $\Pfop$} consists of a
sequence of steps, where each step has the form $\Delta \hra \phi$ and
either (a) $\phi \in \Delta$ or
(b) there is an inference rule ``if condition $C$ holds, then from
$\Delta_1 \hra \phi_1$, \ldots,  $\Delta_k \hra \phi_k$ infer
$\Delta \hra \phi$'', condition $C$ holds, and
$\Delta_1 \hra \phi_1, \ldots, \Delta_k \hra \phi_k$ appear in
earlier steps of the derivation.  (The ``condition $C$'' might be a statement
such as 
``$\vdash_{\Lambda \union \Deltafo} \phi \dimp \psi$ and $\phi \in
\Lfo$'', which appears in
LLE$^+$.)
We write $\Pfop \vdash \Delta \hra \phi$ if there is a derivation from
$\Pfop$ whose last line in $\Delta \hra \phi$.

Let $\PS(\Lambda)$ consist of all PS structures $M$ 
such that world in $M$ satisfies $\Lambda$.
We write $\PS(\Lambda)
\satp \Delta \hra \phi$ if $(M,V) \satp
\Delta \hra \phi$ for all PS structures $M \in \PS(\Lambda)$
and valuations $V$.

\begin{theorem}\cite{Hal37}\label{thm:soundandcomplete}
If $\Delta \union \{\forall x_1 \ldots \forall x_n(\phi \Cond \psi)\}
\subseteq \LCondfo^0$, then
$\PfoB \vdash \Delta \hra \forall x_1 \ldots \forall
x_n
(\phi \Cond \psi)$ if and only if
$\PS(\Lambda) \satp \Delta \hra \forall x_1 \ldots
\forall x_n(\phi \Cond \psi)$.
\end{theorem}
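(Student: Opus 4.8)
The plan is to prove the two directions separately, following the standard template for soundness and completeness of the KLM system $\mathbf{P}$ in $\epsilon$-semantics over preferential/ranked structures \cite{KLM}, adapted here to a first-order language, a background first-order theory $\Lambda$, and the belief fragment in which $\Cond$ occurs only inside $B$. For soundness I would induct on the length of a $\PfoB$-derivation and check that each axiom and rule is valid in an arbitrary $M\in\PS(\Lambda)$ and valuation $V$ with $(M,V)\satp\Delta$. The propositional KLM rules reduce to elementary facts about each measure $\Pr_n$ that survive passage to the limit (and equally survive the super-polynomial-convergence reading of $\Cond$): RW$^+$ is monotonicity of $\Pr_n$; REF is that the conditional probability of $\phi$ given $\phi$ is $1$; AND is the union bound $1-\Pr_n(\psi_1\land\psi_2\mid\phi)\le(1-\Pr_n(\psi_1\mid\phi))+(1-\Pr_n(\psi_2\mid\phi))$; OR combines $\Pr_n(\neg\psi\land(\phi_1\lor\phi_2))\le\Pr_n(\neg\psi\land\phi_1)+\Pr_n(\neg\psi\land\phi_2)$ with $\Pr_n(\phi_1\lor\phi_2)\ge\max_i\Pr_n(\phi_i)$; CM uses that $\Pr_n(\phi_1\land\phi_2)/\Pr_n(\phi_1)\to1$ keeps the denominator eventually positive and pins the conditional probability of $\psi$ given $\phi_1\land\phi_2$ to its value given $\phi_1$; and LLE$^+$ uses that $\vdash_{\Lambda\cup\Deltafo}\phi_1\dimp\phi_2$ makes $\phi_1$ and $\phi_2$ true at exactly the same worlds of $M$. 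The quantifier and equality rules ($\Lambda$-AX$^+$, F1$^+$, F3$^+$, EQ, REN, II) are validated by ordinary first-order semantics at the worlds, using that $\Cond$-formulas are world-independent: F3$^+$ is universal generalization over a variable not free in $\Delta$, and EQ and II exploit that an interpretation-independent $\sigma$ has the same extension in every interpretation over a fixed domain.

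For completeness I would argue by contraposition: assuming $\PfoB\not\vdash\Delta\hra\forall x_1\ldots\forall x_n(\phi\Cond\psi)$, I would build $M\in\PS(\Lambda)$ and $V$ with $(M,V)\satp\Delta$ but $(M,V)\not\satp\forall x_1\ldots\forall x_n(\phi\Cond\psi)$. Write $\Delta=\Deltafo\cup\Deltar$. Step one: strip the quantifiers. Using F1$^+$, F3$^+$, REN, EQ, II, and $\Lambda$-AX$^+$ one reduces the underivability of the universally quantified target to the underivability, from a ground extension of $\Delta$ obtained by adding fresh Henkin constants $\vec c$, of the ground conditional $\phi[\vec c/\vec x]\Cond\psi[\vec c/\vec x]$; EQ and II are exactly what make it legitimate to realize the domain as a term model over these constants modulo an equality congruence. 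Step two: at the ground level invoke KLM/preferential completeness over worlds that satisfy $\Lambda$ -- since the ground target is not derivable from the ground instances of $\Deltar$ by system $\mathbf{P}$, there is a ranking $\kappa$ on the set $W$ of first-order $\Lambda$-models over the chosen domain such that each conditional $\alpha\Cond\beta$ among the ground instances of $\Deltar$ is satisfied (every $\kappa$-minimal $\alpha$-world satisfies $\beta$) while the target fails. Step three: convert $\kappa$ to a probability sequence by setting $\Pr_n(w)\propto\epsilon_n^{\kappa(w)}$ for $\epsilon_n\downarrow0$ chosen super-polynomially fast (e.g. $\epsilon_n=2^{-n}$); then $\Pr_n(\beta\mid\alpha)\to1$ super-polynomially for the conditionals that hold and does not converge to $1$ for the target, so $M=(D,W,\pi,\langle\Pr_1,\Pr_2,\ldots\rangle)$ together with $V$ sending $\vec x$ to the interpretations of $\vec c$ is the desired countermodel.

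The main obstacle I anticipate is the interface between first-order quantification and the ranking construction: grounding a possibly infinite conditional knowledge base over a Henkin domain while (a) keeping every world a model of $\Lambda$, (b) keeping the equality and interpretation-independent fragment honest -- the delicate bookkeeping that axioms EQ and II are designed to license, and whose correct use in the reduction is the one genuinely fiddly point -- and (c) maintaining a well-defined ranking in the presence of infinitely many ground conditionals, which I would handle either by a compactness argument reducing to a finite offending subset of conditionals or by an explicit ordinal ($Z$-style, lexicographic) ranking. Everything else is routine once the reduction to ground system-$\mathbf{P}$ completeness is in place; since this is precisely the argument of Halpern \cite{Hal37}, I would invoke that result rather than reproduce the construction in full.
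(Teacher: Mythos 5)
The paper offers no proof of this theorem: it is imported verbatim from \cite{Hal37}, so there is no in-paper argument to compare against, and your decision to sketch the structure and then defer to that reference for the actual construction matches the paper's own treatment exactly. (One reading note: the system named in the printed statement should almost certainly be $\Pfop$ rather than $\PfoB$ --- $\PfoB$ is only defined afterwards for the belief fragment $\LCondfo^B$, while the hypothesis here ranges over all of $\LCondfo^0$; your proof treats the full rule set LLE$^+$, RW$^+$, REF, AND, OR, CM together with the quantifier rules, so you have read the statement the intended way.) Your soundness calculations are the right ones; just make sure the convention that a conditional probability is $1$ when the conditioning event has measure zero is threaded through the OR and CM estimates (it is harmless there because $\Pr_n(\neg\psi\land\phi_i)=0$ whenever $\Pr_n(\phi_i)=0$). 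Your completeness outline --- contraposition, Henkin grounding licensed by F1$^+$, F3$^+$, REN, EQ, II and $\Lambda$-AX$^+$, ground system-$\sysPL$ completeness over $\Lambda$-models yielding a ranking, and conversion of the ranking into a probability sequence --- is the right shape, and you correctly identify the genuinely delicate points (keeping every world a $\Lambda$-model, the EQ/II bookkeeping, and well-definedness of the ranking over infinitely many ground conditionals). One small point worth making explicit: under the super-polynomial-convergence reading of $\Cond$ the countermodel must certify that the target's conditional probability stays bounded away from $1$, not merely that it converges too slowly; the ranking construction does deliver this, since some minimal-rank $\phi$-world violates $\psi$ and hence contributes a fixed positive fraction of the conditional mass for all $n$. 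Given that the theorem is stated purely as a citation, your concluding appeal to \cite{Hal37} is exactly what the paper itself does.
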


We now specialize these axioms to $\LCondfo^B$.  We start with the
analogues of axioms from $\sysPL$.  Consider the following three axioms:

\begin{description}
\item[B1.] If $\vdash_{\Lambda \union \Deltafo} \phi \rimp \phi'$,
then from $\Delta \hra B\phi$ infer $\Delta \hra B\phi'$.
\item[B2.] $\Delta \hra B (\true)$.
\item[B3.] From $\Delta \hra B\phi$ and $\Delta \hra B\phi'$ infer
$\Delta \hra  B(\phi \land \phi')$.
\end{description}
It should be clear that B1 is just RW$^+$ restricted to $\LCondfo^B$;
B2 is REF restricted to the only formula in $\LCondfo^B$
to which it applies, namely $\true \Cond \true$; and B3 is the
restriction of AND to $\LCondfo^B$.  It is easy to see that there are no
instances of LLE$^+$, OR, or CM in $\LCondfo^B$.
The version of B1--B3 in the main text can be viewed as the special case
where $\Delta = \emptyset$.

Let $\PfoB$ be the result of replacing LLE$^+$, RW$^+$, REF, AND, OR,
and CM in $\Pfop$ by B1--B3.
Given a collection $\Delta$ of $\rarrow$ formulas, we write
$\PfoB \vdash \Delta \hra \phi$ if $\phi\Cond\psi$ can be
derived from $\Delta$ using these rules.
A \emph{derivation from $\Delta$} consists of a sequence of steps of the
form  
$\Delta \hra \phi$, where either (a) $\phi \in
\Delta$, (b) $\phi$ is $B_i(\true)$ (so that this is just axiom B2), or
(c) $\Delta \hra \phi$ follows from previous steps by an application of
B1 or B3.

\begin{theorem}
If $\Delta \subseteq \LCondfo^B \union \Lfo$
and $ \forall x_1 \ldots \forall x_n B\phi   \in
\LCondfo^B$,
 then
$\PfoB \vdash \Delta \hra \forall x_1 \ldots \forall x_n
B \phi$ if and only if
$\PS(\Lambda) \satp \Delta \hra \forall x_1 \ldots
\forall x_n B\phi$.
\end{theorem}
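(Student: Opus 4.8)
The plan is to derive both directions from Theorem~\ref{thm:soundandcomplete}, the sound-and-complete axiomatization of $\Pfop$ for $\LCondfo^0$. Soundness (the ``only if'' direction) is essentially free: axioms B1, B2, and B3 are exactly the instances of RW$^+$, REF, and AND obtained by taking the antecedent of $\Cond$ to be $\true$ (recall $B\chi$ abbreviates $\true \Cond \chi$), and the remaining rules of $\PfoB$ --- $\Lambda$-AX$^+$, F1$^+$, F3$^+$, EQ, REN, and II --- are already rules of $\Pfop$. So any $\PfoB$-derivation of $\Delta \hra \forall x_1 \ldots \forall x_n B\phi$ is a $\Pfop$-derivation of a formula of $\LCondfo^0$, and $\PS(\Lambda) \satp \Delta \hra \forall x_1 \ldots \forall x_n B\phi$ follows from the soundness half of Theorem~\ref{thm:soundandcomplete}.

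For completeness (the ``if'' direction), suppose $\PS(\Lambda) \satp \Delta \hra \forall x_1 \ldots \forall x_n B\phi$ with $\Delta \subseteq \LCondfo^B \union \Lfo$. I would first mimic the quantifier handling in Halpern's proof of Theorem~\ref{thm:soundandcomplete}: rename $x_1,\ldots,x_n$ (by REN) so that they do not occur in $\Delta$ and aim to re-apply F3$^+$ at the end, so it suffices to derive $\Delta \hra B\phi$ with $x_1,\ldots,x_n$ now free parameters; and replace each universally quantified hypothesis $\forall y\, B\theta(y) \in \Delta$ by an instance $B\theta(z)$ for a fresh variable $z$, using F1$^+$ and EQ exactly as there. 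This reduces matters to a sentence-level claim: from a set $\Delta_0$ consisting of first-order formulas $\Deltafo$ together with belief formulas $B\theta_1, B\theta_2, \ldots$ (the $\theta_i$ and $\phi$ now sentences over the enlarged signature of parameters), derive $B\phi$, given that this entailment is valid over $\PS(\Lambda)$. The crucial point --- and the one place where being in the $B$-fragment matters --- is that belief hypotheses only force the probability sequence to concentrate, in the limit, on the worlds where all the $\theta_i$ hold, and constrain it in no other way; hence $\Deltafo \union \{B\theta_i\} \satp B\phi$ over $\PS(\Lambda)$ if and only if every world (of every PS structure in $\PS(\Lambda)$ satisfying $\Deltafo$) at which all $\theta_i$ hold also satisfies $\phi$, i.e. if and only if $\Lambda \union \Deltafo \union \{\theta_i\} \models \phi$ in the first-order sense. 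By compactness and first-order completeness there is a finite subset with $\vdash_{\Lambda \union \Deltafo} (\theta_{i_1} \land \cdots \land \theta_{i_m}) \rimp \phi$. The $\PfoB$-derivation then writes itself: from $B\theta_{i_1}, \ldots, B\theta_{i_m} \in \Delta_0$, repeated B3 gives $B(\theta_{i_1} \land \cdots \land \theta_{i_m})$ --- or B2 gives $B(\true)$ when $m=0$ --- and a single B1 step with the implication above yields $B\phi$; re-quantifying via F3$^+$ and REN recovers $\Delta \hra \forall x_1 \ldots \forall x_n B\phi$.

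The hard part is twofold. First, the reduction to the sentence-level claim must be carried out faithfully --- in particular the treatment of universally quantified belief hypotheses and the interplay of the first-order quantifiers with the domain --- but this is precisely the bookkeeping already present in the proof of Theorem~\ref{thm:soundandcomplete}, which I would reuse rather than redo. Second, and more essentially, one must verify the ``collapse'' claim, namely that over $\LCondfo^B$ the conditional-probability semantics degenerates to plain first-order entailment, so that LLE$^+$, OR, CM (and RW$^+$/REF/AND with non-trivial antecedents) are genuinely dispensable; the proof is the obvious one --- put all the mass on a single world satisfying all the $\theta_i$ but not $\phi$ to refute $B\phi$ when it fails, and use that $\Pr_n$ of a finite intersection of sets each of $\Pr_n$-measure tending to $1$ itself tends to $1$ for the converse --- and this is the step I expect to need the most care to state correctly. (One could instead try to rewrite the $\Pfop$-derivation handed back by Theorem~\ref{thm:soundandcomplete} so as to eliminate all non-$B$ conditionals, but OR can legitimately produce non-$B$ conditional lemmas $\phi_1 \Cond \psi$, $\phi_2 \Cond \psi$ even when the conclusion is a belief formula, so that route ends up needing the same semantic collapse; the direct argument above is cleaner.)
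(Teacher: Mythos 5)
Your proposal is correct and follows essentially the same route as the paper: both reduce completeness to the closed (sentence-level) fragment, establish the key semantic collapse that a belief entailment over $\PS(\Lambda)$ holds iff the corresponding first-order implication is $\Lambda$-valid (refuting the failing case with a one-world PS structure carrying all the mass), then derive the conclusion by repeated applications of B3 followed by B1, deferring the quantifier bookkeeping to the machinery of Theorem~\ref{thm:soundandcomplete}. The only cosmetic differences are the order in which the quantifier reduction is performed and your explicit appeal to compactness for infinite $\Delta$, which the paper elides by writing $\Delta$ as a finite list.
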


\begin{proof}  The soundness follows from the soundness of $\Pfop$,
proved in \cite{Hal37}.  To prove completeness, we first prove that
B1-B3 are sound and complete for the fragment 
$\LCondfo^B \inter \LCondfoo$. 
Again, soundness is immediate.  Completeness is also quite
straightforward,
as we now show.

We start by showing that $\PS(\Lambda) \satp \{\psi_1, \ldots, \psi_k,
B\phi_1, \ldots, B\phi_m\} \hra B\sigma$ (where  $\psi_1, \ldots,
\psi_k, \phi_1, \ldots, \phi_m$, and $\sigma$ are all in $\Lfo$) iff
$\Lambda \satp (\psi_1 \land \ldots \land \psi_k \land \phi_1 \land
\ldots \land \phi_m) \rimp \sigma$.
The ``if''
direction is obvious.  For the ``only if'' direction, suppose that
$\Lambda \not\satp (\psi_1\land \ldots \land \psi_k \land \phi_1 \land
... \land \phi_m) \rimp \sigma$.
Then $\sigma' = \psi_1 \land \ldots \land \psi_k \land \phi_1 \land
\ldots \land \phi_m \land \neg \sigma$ is satisfied in a relational
structure $R$ that also satisfies
all the formulas in 
$\Lambda$.   Let $M = (D,\{w\},\pi, \P)$, where $D$ is the domain in
$R$, $\pi(w)$ is the interpretation in $R$, and 
$\P = (\Pr, \Pr, \Pr, \ldots)$, and $\Pr$ assigns probability 1 to world
$w$.  Then $M \in \PS(\Lambda)$ and $M \satp \psi_1 \land \ldots \land
\psi_k \land B\phi_1
\land \ldots \land B\phi_m$, but $M \satp \neg B \sigma$.  Thus,
$M \not\satp \{\psi_1, \ldots, \psi_k, B\phi_1, \ldots, B\phi_m\} \hra
B\sigma$.

Now suppose that $\PS(\Lambda) \satp \Delta \hra B\phi$, where $\Delta =
\{\psi_1, \ldots, \psi_k, B\phi_1, \ldots, B\phi_m\}$.
Since $\vdash_{\Lambda \union \Deltafo} \true \rimp (\psi_1
\land \ldots \land \psi_k)$, by B1 and B2, it follows that
$\PfoB \vdash \Delta \hra B(\psi_1 \land \ldots \land \psi_k)$.
Since $B\phi_1, \ldots, B\phi_m \in \Delta$, applying B3 repeatedly, it
follows that $\PfoB \vdash B(\psi_1 \land \ldots \land \psi_k \land
\phi_1 \land \ldots  \land \phi_m)$.
By the arguments above and the completeness of first-order logic,
since $\PS(\Lambda) \satp \Delta \hra B\phi$,
we must have  $\vdash_{\Lambda} (\psi_1 \land \ldots \land \psi_k
\land \phi_1 \land \ldots \land \phi_m) \rimp \sigma$.  Thus, applying
B1 again, it follows that $\PfoB \vdash \Delta \hra B \phi$.

Given the soundness and completeness of B1-B3 for
the $\LCondfoo \cap \L^B$, the proof that adding the remaining
axioms gives soundness and completeness $\L^B$ is now identical to that
going from $\LCondfoo$ to $\LCondfo$ given in Theorem 3.5 of
\cite{Hal37}. We omit further details here.
\end{proof}

\subsection{Quantitative reasoning}
Up to now, we have just considered a qualitative semantics for
conditional logic.  A  formula of the form $\phi \Cond \psi$ is true if,
for all $k$, the conditional probability of $\psi$ given $\phi$ 
is at least  $1  - 1/n^k$ for all sufficiently large $n$.
Here $n$ can be, for example, the security parameter.  While this
asymptotic complexity certainly gives insight into the security of a
protocol, in practice, a system designer wants to achieve a certain
level of security, and needs to know, for example, how large to take the
keys in order  to achieve this level.  In \cite{Hal37}, a more
quantitative semantics appropriate for such reasoning was introduced
and related to the more qualitative ``asymptotic'' semantics.  We
briefly review the relevant details here, and specialize them to the
language of belief.

The syntax of the quantitative language considered in \cite{Hal37},
denoted $\LCondfoq$, is 
just like that of $\LCondfo^0$, except that instead of formulas of the
form $\phi \Cond \psi$ there are  formulas of the form $\phi \Cond^r
\psi$, where
$r$ is a real number in $[0,1]$.  The
semantics of such a formula is straightforward:
$$\begin{array}{ll}
(M,V) \models \phi \Cond^r \psi \mbox{ if  there exists some $n^* \ge
0$ such that}\\
\mbox{for all $n
\ge n^*$, } {\Pr}_n(\intension{\psi}_{M,V}  \mid \intension{\phi}_{M,V}) \ge 1-
r).
\end{array}$$
In our quantitative logic, we do not have a sequence of probability
distributions on the
left hand-side of $\satp$; rather, we have a single distribution
(determined by the function $\epsilon$ and the parameters $\eta$ $tb$,
and $\vec{\eta}$).  Nevertheless, as we shall see, the axioms that
characterize quantitative conditional logic in the case of a sequence
of distributions are essentially the same as those for a single distribution.

Let $\LCondfoq^0$ be the obvious analogue of $\LCondfo^0$, consisting of
all formulas of the form $\forall x_1 \ldots \forall x_n (\phi \Cond^r
\psi)$, where $\phi$ and $\psi$ are first-order formulas.
For each of the axioms and rules in system $\sysPL$, there is a
corresponding sound axiom or rule in $\LCondfoq^0$:
\begin{itemize}
 \item[LLE$^q$.]
If $\vdash_{\Lambda \union \Deltafo} \phi_1 \dimp \phi_2$, then from
 $\Delta \hra \phi_1\Cond^r\psi$ infer
    $\Delta \hra \phi_2\Cond^r\psi$.
 \item[RW$^q$.]
If $\vdash_{\Lambda \union \Deltafo} \psi_1 \rimp \psi_2$,
then from
 $\Delta \hra \phi\Cond^r \psi_1$ infer
    $\Delta \hra \phi\Cond^r\psi_2$.
 \item[REF$^q$.] $\Delta \hra \phi\Cond^0 \phi$ (reflexivity).
\item[AND$^q$.] From $\Delta \hra \phi\Cond^{r_1}\psi_1$ and
$\Delta \hra \phi\Cond^{r_2}\psi_2$ infer $\Delta \hra \phi\Cond^{r_3}
\psi_1 \land \psi_2$,
where $r_3 = \min(r_1 + r_2, 1)$.
\item[OR$^q$.] From $\Delta \hra \phi_1\Cond^{r_1}\psi$ and
$\Delta \hra \phi_1\Cond^{r_2}\psi$ infer $\Delta \hra \phi_1 \lor \phi_2
\Cond^{r_3} \psi$,
where $r_3 = \min(\max(2r_1, 2r_2),1)$.
\item[CM$^q$.] From $\Delta \hra \phi_1\Cond^{r_1}\phi_2$ and
$\Delta \hra \phi_1\Cond^{r_2}\psi$ infer
    $\Delta \hra \phi\land \phi_2 \Cond^{r_3} \psi$, where $r_3 =
\min(r_1+r_2,1)$.
\end{itemize}
Let $\Pfopq$ consist of the rules above, together with F1$^+$, F3$^+$, EQ,
REN, and II (all of which hold with no change in the quantitative
setting), and
\begin{itemize}
\item[INC.] If $r_1 \le r_2$, then from $\Delta \hra \phi \Cond^{r_1} \psi$
infer $\Delta \hra \phi \Cond^{r_2} \psi$.
\end{itemize}

\begin{theorem}\cite{Hal37} The rules in $\Pfopq$ are all sound.
\end{theorem}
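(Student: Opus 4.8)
The plan is to check soundness one rule at a time, verifying that each rule preserves validity over the class $\PS(\Lambda)$ of PS structures all of whose worlds satisfy $\Lambda$. Throughout I would fix such a structure $M = (D, W, \pi, \mathbf{P})$ with $\mathbf{P} = \langle \Pr_1, \Pr_2, \ldots\rangle$, a valuation $V$, assume $(M,V) \models \phi'$ for every $\phi' \in \Delta$ and $(M,V)\models$ each hypothesis formula of the rule, and derive the conclusion. Two features keep the bookkeeping uniform: first, the truth of a $\Cond^r$-formula at $M$ depends on $V$ only through its free variables and not on any particular world, so the quantitative rules inherit the world-independence used in the qualitative case; second, $(M,V)\models\phi\Cond^r\psi$ asserts merely an \emph{eventual} inequality, so with finitely many hypotheses, each holding from some index $n^*_i$ on, I may argue uniformly for all $n \ge n^* := \max_i n^*_i$. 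For a formula $\chi$ I write $A_\chi$ for $\intension{\chi}_{M,V}$ and recall the convention that $\Pr_n(A_\psi \mid A_\phi) = 1$ when $\Pr_n(A_\phi) = 0$.

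The propositional-style rules then reduce to elementary estimates at each fixed $n \ge n^*$. For \textbf{LLE$^q$}: if $\phi_1 \leftrightarrow \phi_2$ is provable from $\Lambda$ together with the first-order part of $\Delta$, then $A_{\phi_1} = A_{\phi_2}$ in every world, so the conditional probabilities coincide; similarly for \textbf{RW$^q$}, $A_{\psi_1} \subseteq A_{\psi_2}$ gives monotonicity. \textbf{REF$^q$} and \textbf{INC} are immediate ($\Pr_n(A_\phi \mid A_\phi) = 1$, and $1 - r_1 \ge 1 - r_2$ when $r_1 \le r_2$). For \textbf{AND$^q$} a union bound gives $\Pr_n(\overline{A_{\psi_1}} \cup \overline{A_{\psi_2}} \mid A_\phi) \le r_1 + r_2$, so $\Pr_n(A_{\psi_1 \land \psi_2} \mid A_\phi) \ge \max(0, 1-(r_1+r_2)) = 1 - r_3$. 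For \textbf{OR$^q$} I would bound $\Pr_n(\overline{A_\psi} \cap (A_{\phi_1}\cup A_{\phi_2})) \le \Pr_n(\overline{A_\psi}\cap A_{\phi_1}) + \Pr_n(\overline{A_\psi}\cap A_{\phi_2}) \le r_1\Pr_n(A_{\phi_1}) + r_2\Pr_n(A_{\phi_2}) \le 2\max(r_1,r_2)\,\Pr_n(A_{\phi_1}\cup A_{\phi_2})$, using $\Pr_n(A_{\phi_i}) \le \Pr_n(A_{\phi_1}\cup A_{\phi_2})$, and divide. For \textbf{CM$^q$}: if $r_1 + r_2 \ge 1$ the conclusion ($r_3 = 1$) is vacuous; otherwise, from $\Pr_n(A_{\phi_1}\cap A_{\phi_2}) \ge (1-r_1)\Pr_n(A_{\phi_1})$ (hypothesis $\phi_1 \Cond^{r_1}\phi_2$) and $\Pr_n(\overline{A_\psi}\cap A_{\phi_1}\cap A_{\phi_2}) \le \Pr_n(\overline{A_\psi}\cap A_{\phi_1}) \le r_2 \Pr_n(A_{\phi_1})$ (hypothesis $\phi_1 \Cond^{r_2}\psi$) I get $\Pr_n(\overline{A_\psi}\mid A_{\phi_1}\cap A_{\phi_2}) \le r_2/(1-r_1) \le r_1 + r_2 = r_3$, the last inequality being exactly $r_1(1 - r_1 - r_2) \ge 0$.

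For the structural and quantifier rules \textbf{F1$^+$}, \textbf{F3$^+$}, \textbf{EQ}, \textbf{REN}, and \textbf{II}, the numerical annotation $r$ plays no role: these manipulate only first-order information, universal and existential closures, variable renamings, and case splits on interpretation-independent disjunctions, so the arguments are verbatim those for the corresponding qualitative rules in \cite{Hal37}, using $\intension{\forall x\,\phi}_{M,V} = \bigcap_{d\in D}\intension{\phi}_{M,V[x\mapsto d]}$ together with the world-independence of $\Cond^r$-formulas noted above.

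I expect the one genuinely delicate point to be \textbf{CM$^q$}: one must check that the constant $\min(r_1+r_2,1)$ advertised by the rule really dominates the bound $r_2/(1-r_1)$ that the conditional-probability computation naturally produces, which forces the case split on whether $r_1 + r_2 \le 1$ (and incidentally sidesteps division by $1-r_1=0$). The remaining obstacles are purely organizational — synchronizing the thresholds $n^*_i$ across hypotheses and discharging the degenerate case $\Pr_n(A_\phi) = 0$ via the stated convention.
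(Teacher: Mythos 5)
Your proposal is correct: the rule-by-rule verification, the union bound for \textbf{AND$^q$}, the factor-of-two estimate $\Pr_n(A_{\phi_1})+\Pr_n(A_{\phi_2})\le 2\Pr_n(A_{\phi_1}\cup A_{\phi_2})$ for \textbf{OR$^q$}, and the check that $r_2/(1-r_1)\le r_1+r_2$ for \textbf{CM$^q$} all go through, with the degenerate zero-probability cases handled by the stated convention. The paper itself gives no proof, citing \cite{Hal37}, and your direct verification is exactly the argument carried out there.
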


There is a deep relationship between $\Pfop$
and $\Pfopq$.  To make it precise, given a set of formulas $\Delta
\subseteq \LCondfo^0$, say that $\Delta' \subseteq \LCondfoq^0$ is a
\emph{quantitative instantiation} of $\Delta$ if, for every formula
$\phi \Cond \psi \in \Delta$, there is a bijection $f$ from $\Delta$ to
$\Delta'$ such that, for every formula $\phi \Cond \psi \in \Delta$,
there is a real number $r \in [0,1]$ such that $f(\phi \Cond \psi) =
\phi \Cond^r \psi$.  That is, $\Delta'$ is a quantitative instantiation
of $\Delta$ if each qualitative formula in $\Delta$ has a quantitative
analogue in $\Delta'$.

The key result in \cite{Hal37} is the following, which shows
the power of using of $\Pfop$.  Specifically, it shows
that if $\Delta \hra \phi \Cond \psi$ is derivable in
$\Pfop$ then, for all $r \in [0,1]$, there exists
a quantitative instantiation $\Delta'$ of $\Delta$ such that $\phi
\Cond^r \psi$ is derivable from $\Delta'$ in $\Pfopq$.  Thus, if the
system designer wants security at level $r$ (that is, she wants to know
that the desired security property holds with probability at least
$1-r$), then if she has a qualitative proof of the result, she can
compute the strength with which her assumptions must hold in order for
the desired conclusion to hold.  For example, she can compute how to set the
security parameters in order to get the desired level of security.  This
result can be viewed as justifying qualitative reasoning.  Roughly
speaking, it says that it is safe to avoid thinking about the
quantitative details, since they can always be derived later.
Note that this result
would not hold if the language allowed negation.  For example, even
if $\neg (\phi \Cond \psi)$ could be proved given some assumptions
(using the axiom system $\AXCL$), it would not necessarily follow that
$\neg (\phi \Cond^q \psi)$ holds, even if the probability of the
assumptions was taken arbitrarily close to one.

\begin{theorem}\label{thm:quantitative} \cite{Hal37} If $\Pfop \vdash
\Delta \hra \phi \Cond
\psi$, then for all $r \in [0,1]$, there exists a quantitative
instantiation $\Delta'$ of $\Delta$ such
that $\Pfopq \vdash \Delta' \hra \phi \Cond^q \psi$.  Moreover,
$\Delta'$ can be found in polynomial time, given the derivation of
$\Delta \hra \phi \Cond \psi$.
\end{theorem}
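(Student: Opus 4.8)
The plan is to proceed proof-theoretically, transforming the given $\Pfop$-derivation of $\Delta \hra \phi \Cond \psi$, step by step, into a $\Pfopq$-derivation of $\Delta' \hra \phi \Cond^r \psi$ for an appropriately chosen quantitative instantiation $\Delta'$. A purely semantic route --- pass to the model side using soundness of $\Pfop$, extract a quantitative witness, and come back --- is unavailable, since we have completeness only for $\Pfop$ and merely soundness for $\Pfopq$; and in any case the claimed polynomial-time construction of $\Delta'$ points straight at a syntactic transformation of the derivation.

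First I would record the shape of the derivation. Each sequent in it has the form $\Delta_n \hra \xi_n$ with $\xi_n \in \LCondfo^0$ a (possibly universally quantified) conditional formula and $\Delta_n$ a set of first-order formulas together with (possibly quantified) conditional formulas. The only rules that touch a left-hand side --- $\Lambda$-AX$^+$, F1$^+$, EQ, II --- merely add or delete pure first-order side formulas, or (via F1$^+$ and REN) replace a quantified conditional by an instance of it; hence every conditional formula occurring anywhere on a left-hand side descends from one of the conditional formulas of the root's $\Delta$. A quantitative instantiation is therefore fixed by choosing, for each conditional formula $\forall \vec{x}(\chi_1 \Cond \chi_2)$ of the root's $\Delta$, a value $r \in [0,1]$ and replacing it --- and its instances throughout the derivation --- by $\forall \vec{x}(\chi_1 \Cond^{r} \chi_2)$.

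Next I would compute, by a single top-down pass over the derivation tree, a \emph{demand} $d(n) \in [0,1]$ at each node $n$: put $d(\mathrm{root}) = r$; at a node concluded by AND, OR, or CM give each of its two conditional premises the demand $d(n)/2$; and at every other rule hand each conditional premise the demand $d(n)$ unchanged. Since $d(n) \le 1$, taking $r_1 = r_2 = d(n)/2$ exactly meets the side conditions $r_3 = \min(r_1 + r_2, 1)$ of AND$^q$ and CM$^q$ and $r_3 = \min(\max(2r_1, 2r_2), 1)$ of OR$^q$; the rules RW$^q$ and LLE$^q$ together with the quantifier and first-order side-condition rules carry the error parameter through unchanged; and INC can only raise it. Because the derivation is finite of length $L$, every demand is at least $r/2^{L}$, hence strictly positive when $r > 0$ (and identically $0$ when $r = 0$, which is harmless). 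Now, for each conditional formula $\chi$ of the root's $\Delta$, let its instantiation parameter $r_\chi$ be the minimum of $d(n)$ over all leaves $n$ at which $\chi$, or an instance of it, is invoked as a hypothesis; this defines $\Delta'$, and the whole computation is plainly polynomial in the size of the derivation. A bottom-up induction over the tree then establishes $\Pfopq \vdash \Delta'_n \hra \xi'_n$ at every node $n$, where $\Delta'_n$ is $\Delta_n$ instantiated as prescribed by $\Delta'$ and $\xi'_n$ is $\xi_n$ with its conditional taken at error $d(n)$: at a hypothesis leaf $\xi_n \in \Delta_n$ the instantiated hypothesis lies in $\Delta'_n$ with parameter $r_\chi \le d(n)$, so INC gives the claim; at a REF leaf, REF$^q$ supplies error $0$ and INC weakens it to $d(n)$; and at an internal node each $\Pfop$-rule is replayed by its $\Pfopq$-counterpart at the demands fixed above. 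Taking $n$ to be the root gives $\Pfopq \vdash \Delta' \hra \phi \Cond^{r} \psi$.

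The crux, and the step I expect to be the main obstacle, is ensuring that a \emph{single} $\Delta'$ works simultaneously on all branches: one conditional hypothesis $\chi$ may be invoked at several leaves carrying different demands, yet $\Delta'$ must assign it exactly one parameter. This forces the demands to be computed from the tree's shape alone, \emph{before} $\Delta'$ is chosen, and makes INC --- a rule with no qualitative analogue --- indispensable, since it reconciles the common value $r_\chi = \min_n d(n)$ with the (weakly) larger demands at the individual leaves. A subsidiary point requiring care is the interaction of the outer universal quantifiers of prenex conditional formulas with the left-hand-side manipulations of $\Lambda$-AX$^+$, F1$^+$, EQ, and II: one must verify that none of these ever creates a genuinely new conditional hypothesis --- only first-order side formulas, or quantifier instances of conditionals already present at the root --- so that ``quantitative instantiation of the root's $\Delta$'' is well defined and propagates soundly through the whole derivation.
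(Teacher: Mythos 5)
Your proposal is correct, and the paper itself offers no proof of this theorem---it is imported verbatim from \cite{Hal37}---where the argument is essentially the one you give: an induction over the given qualitative derivation that back-solves the error parameters of the quantitative rules from the target $r$ (splitting the budget at AND, OR, and CM, passing it through unchanged elsewhere), exactly as your demand computation does. Your identification of the crux is also on target: a conditional hypothesis invoked at several leaves must receive the single parameter $\min_n d(n)$ in $\Delta'$ and be lifted back up to each leaf's demand, which is precisely the role the otherwise-redundant rule INC plays in the quantitative system.
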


Intuitively, Theorem~\ref{thm:quantitative} says that if there is a
qualitative proof of $\phi \rarrow \psi$ from the assumptions in
$\Delta$, given a desired confidence level $1-r$, 
we can, in polynomial time, compute the
probability with which the assumptions in $\Delta$ need to hold in order
to ensure that the conclusion $\phi \rarrow \psi$ holds with confidence
$1-r$.  

Let $\LCondBq$ be the restriction of $\LCondfoq$ so that $\Cond^r$
appears only in the context of $\true \Cond^r \phi$, which we abbreviate
as $B^r \phi$.  The quantitative analogue of B1--B3 is obvious:
\begin{description}
\item[B1$^q$] If $\vdash_{\Lambda \union \Deltafo} \phi \rimp \phi'$,
then from $\Delta \hra B^r\phi$ infer $\Delta \hra B^r\phi'$.
\item[B2$^q$] $\Delta \hra B^0(\true)$.
\item[B3$^q$] From $\Delta \hra B^{r_1}\phi$ and $\Delta \hra
B^{r_2}\phi'$ infer 
$\Delta \hra  B^{r_3}(\phi \land \phi')$, where $r_3 = \min(r_1 + r_2, 1)$.
\end{description}

Let $\PfoBq$ consist of B1$^q$, B2$^q$, B3$^q$,F1$^+$, F3$^+$, EQ,
REN, II, and INC.  
The key reason that a sequence of probability distributions was used
even for the quantitative semantics in \cite{Hal37} is that it made it
possible to establish an important connection between the qualitative
and quantitative semantics.  
The same argument as that used in \cite{Hal37} for
$\Pfoq$ can be used to show the following result:

\begin{theorem} If $\PfoB \vdash \Delta \hra B \psi$, then for all $r
\in [0,1]$, there exists a quantitative
instantiation $\Delta'$ of $\Delta$ such
that $\PfoBq \vdash \Delta' \hra B^q \psi$.  Moreover,
$\Delta'$ can be found in polynomial time, given the derivation of
$\Delta \hra \phi \Cond \psi$.
\end{theorem}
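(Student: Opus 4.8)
The plan is to prove this exactly as Theorem~\ref{thm:quantitative} is proved in \cite{Hal37}, specialized to the belief fragment, where the argument simplifies considerably; alternatively, since restricting the qualitative system of \cite{Hal37} (and its quantitative counterpart $\Pfopq$) to the belief fragment yields precisely $\PfoB$ (resp.\ $\PfoBq$), one could also obtain the result by checking that the transformation of Theorem~\ref{thm:quantitative} respects these restrictions. Taking the self-contained route, I would induct on the length of the given qualitative derivation of $\Delta \hra B\psi$, rewriting it line by line into a derivation in $\PfoBq$ and attaching to each line a real annotation in $[0,1]$. The rule-by-rule correspondence is the one built into the definitions of the two systems: a use of B1 is mirrored by B1$^q$ with the premise's annotation carried unchanged to the conclusion; a use of axiom B2 is mirrored by B2$^q$, carrying annotation $0$; a use of B3 is mirrored by B3$^q$, whose conclusion receives annotation $\min(r_1+r_2,1)$ for premise annotations $r_1,r_2$; and the quantifier/first-order rules F1$^+$, F3$^+$, EQ, REN, and II are reused verbatim, since they do not mention the conditional operator and so simply carry the annotation through. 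The only lines at which a positive annotation enters are the leaves that are assumptions $B\phi \in \Delta$; to each such $\phi$ I assign a value $r_\phi$, to be fixed below, which is exactly the statement that $\Delta'$ replaces $B\phi$ by $B^{r_\phi}\phi$.

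The observation that makes the belief case easy — and that I would emphasize — is that $\PfoBq$ has no analogue of OR$^q$ or CM$^q$ (there are no instances of OR or CM in the belief fragment), so the unique rule that alters annotations, B3$^q$, does so purely additively. Hence, propagating annotations from the leaves to the root of the (finite) derivation and not yet invoking INC, the annotation landing on the conclusion is at most a finite sum of leaf annotations, each weighted by how many times its assumption feeds the conclusion through B3. To make this precise, define a weight $w$ on the lines of the qualitative derivation by the bottom-up recursion $w = 1$ at each leaf, $w$ unchanged across B1 and across the quantifier/first-order rules, and $w_i = w_j + w_k$ at a B3 step with premises $j,k$; let $W$ be the weight of the last line. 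Given the target $r$, set $r_\phi = r/W$ for every $B\phi \in \Delta$; an easy induction shows the propagated annotation at any line $i$ is at most $w_i\cdot(r/W)$, so at the conclusion it is at most $W\cdot(r/W)=r$, and a (possible) final application of INC turns it into exactly $B^r\psi$ (the $B^q\psi$ of the statement, for the chosen $r$). This yields $\Delta' = \{\, B^{r/W}\phi : B\phi \in \Delta \,\}$ together with the first-order formulas of $\Delta$ left unchanged, with the bijection $f$ sending each $B\phi$ to $B^{r/W}\phi$ and fixing every first-order formula; and since the entire construction is a single bottom-up pass in which $W$ and all intermediate annotations have polynomial-size representations, $\Delta'$ and its $\PfoBq$-derivation are produced in time polynomial in the size of the input derivation, giving the ``moreover'' clause.

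I do not expect a genuine obstacle here — the content is a routine specialization — but the points that need care are: checking that the rule correspondence above is exhaustive, so that every line of the qualitative derivation is accounted for; verifying that $\Delta'$ is a bona fide quantitative instantiation, i.e.\ that each formula of $\Delta$ receives a single annotation even though it may appear at several leaves (handled by the uniform choice $r_\phi = r/W$); and the elementary bookkeeping showing the weights stay polynomial-size. By contrast, the analogous step in the full system $\Pfopq$ is the genuinely delicate one, since OR$^q$ roughly doubles the error budget and therefore forces a more careful accounting; that delicacy is simply absent in the belief fragment, which is precisely why the statement can be obtained by ``the same argument as in \cite{Hal37}''.
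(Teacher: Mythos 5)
Your proposal is correct and is, in substance, the argument the paper intends: the paper itself offers no details here, simply asserting that ``the same argument as that used in \cite{Hal37} for $\Pfoq$'' applies, and your bottom-up weight propagation (leaves weighted $1$, weights added at B3$^q$, unchanged elsewhere, uniform leaf annotation $r/W$, final INC) is exactly the specialization of that argument to the belief fragment, correctly exploiting the absence of OR and CM so that only the additive rule B3$^q$ affects the error budget. The points you flag for care --- exhaustiveness of the rule correspondence, a single annotation per assumption even when it occurs at several leaves, and polynomial-size bookkeeping --- are the right ones, and your treatment of them is sound.
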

Although we have stated all the axioms above in terms of constants $r$,
$r_1$, $r_2$, and $r_3$, we can easily modify them so as to state them
in terms of functions $\epsilon(\eta, tb, \vec{n})$; we leave details to
the reader.

This result is quite relevant for us.  It says that if we an prove a
statement in qualitative conditional logic, then for all desired
probability levels $1-q$, in all models where
the assumptions hold with sufficiently high probability, the conclusion
will hold with probability at least $1-q$.

\paragraph{Acknowledgements}
We thank Riccardo Pucella, Bob Harper and our anonymous reviewers for
their comments and suggestions.
The authors are supported in part by the AFOSR MURI on Science of Cybersecurity
(FA9550-12-1-0040), and the NSF TRUST Center (CCF 0424422).
Halpern's work is supported in part by NSF grants
IIS-0812045, IIS-0911036, and CCF-1214844, by AFOSR grants
FA9550-08-1-0438, FA9550-09-1-0266, and FA9550-12-1-0040, and by ARO
grant W911NF-09-1-0281.

\bibliographystyle{IEEEtran}
\bibliography{protocols,joe,z}
\end{document}